\newcommand{\green}{\color{black}}
\newcommand\gH{{\mathfrak{H}}}
\newcommand\gotH{{\mathfrak{H}}}
\newcommand\gotm{{\mathfrak{m}}}
\newcommand\gotN{{\mathfrak{N}}}
\newcommand\gotS{{\mathfrak{S}}}
\newcommand\gt{{\mathfrak{t}}}
\newcommand{\gd}{{\delta}}
\newcommand{\gD}{{\Delta}}
\newcommand{\gga}{{\gamma}}
\newcommand{\gG}{{\Gamma}}
\newcommand{\gl}{{\lambda}}
\newcommand{\gs}{{\sigma}}
\newcommand{\gS}{{\Sigma}}
\newcommand{\gT}{{\Theta}}
\newcommand{\gO}{{\Omega}}
\newcommand{\gP}{{\Pi}}
\newcommand\R{{\mathbb{R}}}
\newcommand\C{{\mathbb{C}}}
\newcommand\N{{\mathbb{N}}}
\newcommand\cA{{\mathcal{A}}}
\newcommand\cB{{\mathcal{B}}}
\newcommand\cC{{\mathcal{C}}}
\newcommand\cD{{\mathcal{D}}}
\newcommand\cE{{\mathcal{E}}}
\newcommand\cF{{\mathcal{F}}}
\newcommand\cH{{\mathcal{H}}}
\newcommand\cL{{\mathcal{L}}}
\newcommand\cN{{\mathcal{N}}}
\newcommand\cO{{\mathcal{O}}}
\def\h#1{{{\hat #1} }}
\def\wt#1{{{\widetilde #1} }}
\def\wh#1{{{\,\widehat #1\,} }}
\def\bm\chi{\mbox{\boldmath$\chi$}}
\def\IM{{\rm Im\,}}
\def\real{{\rm Re\,}}
\def\imag{{\rm Im\,}}
\def\Ext{{\rm Ext\,}}
\def\ess{{\rm ess\,}}
\def\ran{{\rm ran\,}}
\def\dom{{\rm dom\,}}
\def\mul{{\rm mul\,}}
\def\dim{{\rm dim\,}}
\def\rank{{\rm rank\,}}
\let\xker=\ker \def\ker{{\xker\,}}
\def\span{{\rm span\,}}
\def\supp{{\rm supp\,}}
\def\mod{{\rm mod}}
\def\dim{{\rm dim}}
\def\cl{{\rm cl}}
\def\sign{{\rm sign}}
\DeclareMathOperator\re{Re}
\DeclareMathOperator\im{Im}
\newtheorem{theorem}{Theorem}[section]
\newtheorem{proposition}[theorem]{Proposition}
\newtheorem{corollary}[theorem]{Corollary}
\newtheorem{lemma}[theorem]{Lemma}
\newtheorem{definition}[theorem]{Definition}
\newtheorem{remark}[theorem]{Remark}
\numberwithin{equation}{section}
\newcommand{\ba}{\begin{array}}
\newcommand{\ea}{\end{array}}
\newcommand{\bea}{\begin{eqnarray}}
\newcommand{\eea}{\end{eqnarray}}
\newcommand{\bead}{\begin{eqnarray*}}
\newcommand{\eead}{\end{eqnarray*}}
\newcommand{\be}{\begin{equation}}
\newcommand{\ee}{\end{equation}}
\newcommand{\bed}{\begin{displaymath}}
\newcommand{\eed}{\end{displaymath}}
\newcommand{\bl}{\begin{lemma}}
\newcommand{\el}{\end{lemma}}
\newcommand{\bp}{\begin{proposition}}
\newcommand{\ep}{\end{propostion}}
\newcommand{\bt}{\begin{theorem}}
\newcommand{\et}{\end{theorem}}
\newcommand{\Label}{\label}
\newcommand{\bc}{\begin{corollary}}
\newcommand{\ec}{\end{corollary}}
\newcommand{\la}{\Label}
\newcommand{\br}{\begin{remark}}
\newcommand{\er}{\end{remark}}
\newcommand{\bd}{\begin{definition}}
\newcommand{\ed}{\end{definition}}
\newcommand{\slim}{\,\mbox{\rm s-}\hspace{-2pt} \lim}
\newcommand{\wlim}{\,\mbox{\rm w-}\hspace{-2pt} \lim}
\newcommand{\olim}{\,\mbox{\rm o-}\hspace{-2pt} \lim}
\newenvironment{proof}%
{\begin{sloppypar}\noindent{\bf Proof.}}%
{\hspace*{\fill}$\square$\end{sloppypar}}
\title{
On the unitary equivalence of
absolutely continuous parts of self-adjoint extensions}
\author{Mark~Malamud\\
Institute of Applied Mathematics and Mechanics \\
Universitetskaya str. 74 \\
83114 Donetsk \\
Ukraine\\
E-mail: mmm@telenet.dn.ua \and
Hagen~Neidhardt\\
Institut f\"ur Angewandte Analysis und Stochastik\\
Mohrenstr. 39\\
D-10117 Berlin\\
Germany\\
E-mail: neidhard@wias-berlin.de
}
\begin{document}

\maketitle

\vspace{-7mm}
\noindent
{\bf Keywords:} symmetric operators, self-adjoint extensions, boundary
triplets, Weyl functions, spectral multiplicity, unitary equivalence,
direct sums of symmetric operators,
Sturm-Liouville operators with operator potentials\\

\vspace{-5mm}
\noindent
{\bf Subject classification: } 47A57, 47B25, 47A55.

\noindent {\bf Abstract:} The  classical Weyl-von~Neumann theorem
states that for any self-adjoint operator $A$ in a separable
Hilbert space $\gotH$ there exists  a (non-unique) Hilbert-Schmidt
operator $C = C^*$ such that the perturbed
operator $A+C$ has purely
point spectrum. We are interesting whether this result remains
valid for non-additive perturbations by considering self-adjoint
extensions of a given densely defined symmetric operator $A$ in
$\mathfrak H$ and fixing an extension $A_0 = A_0^*$. We show that for
a wide class of symmetric operators the absolutely continuous
parts of extensions $\widetilde A = {\widetilde A}^*$ and $A_0$
are unitarily equivalent provided that their  resolvent difference
is a compact operator. Namely, we show that this is
true whenever the Weyl function $M(\cdot)$  of a pair $\{A,A_0\}$
admits bounded limits $M(t) := \wlim_{y\to+0}M(t+iy)$ for a.e. $t \in
\mathbb{R}$. This result is applied to direct sums of
symmetric operators and Sturm-Liouville operators with operator potentials.


\section{Introduction} \label{intro}

Let  $A_0$ be a self-adjoint operator in a separable Hilbert space
$\gH$ and let $C=C^*$ be a trace class operator in $\gH$,
$C\in\mathfrak S_1(\gH)$. Recall, that according to the
Kato-Rosenblum theorem, cf. \cite{Kat57,Ros57} the absolutely
continuous parts  $A^{ac}_0$ and ${\wt A}^{ac}$, { in short the
$ac$-parts}, of the operators $A_0$ and ${\wt A}=A_0+C$ {are
unitarily equivalent. In other words, the { absolutely continuous
spectrum}, { in short $ac$-spectrum}, of $A_0$ and its spectral
multiplicity { are} stable under { \emph{additive}} trace class
perturbations. At the same time, the Weyl-von~Neumann-Kuroda
theorem \cite[Theorem 94.2]{AG81}, \cite{Neu35}, \cite{Kur58} 
shows that the condition
$C\in\mathfrak S_1(\gH)$ cannot be replaced by $C\in\mathfrak
S_p(\gH)$ with $p\in (1,\infty]$ (where $\mathfrak S_p(\gH)$
denotes the Neumann-Schatten operator ideals).
\bt[{\cite[Theorem 10.2.1 and Theorem 10.2.3]{Ka76}}]\label{WNth}
For any operator $A_0=A^*_0$ in $\gH$ and any  $p\in (1,\infty]$
there exists an operator $C= C^*\in \mathfrak S_p(\gH)$  such that
the perturbed operator ${\wt A} = A_0 + C$ has purely point
spectrum. In particular, $\sigma_{ac}(A_0 + C)=\emptyset.$
\et

The Kato-Rosenblum theorem was generalized by Birman
\cite{Bir63a} and Birman and Krein \cite{BirKrei62}
to the case of \emph{non-additive} perturbations. Namely, it was
shown that $A^{ac}_0$ and ${\wt A}^{ac}$ still remain unitary
equivalent whenever
\bed
({\wt A}-i)^{-1}-(A_0-i)^{-1}\in\mathfrak S_1(\gH).
\eed
In particular, this is true if $A_0=A^*_0$ and ${\wt A}={\wt A}^*$
are self-adjoint extensions of a
symmetric operator $A$ (in short $A_0,  {\wt A}\in \Ext_A$).
This rises the following Weyl-von~Neumann problem for
extensions: Given  $p \in (1,\infty]$ and a self-adjoint
  extension   $A_0$ of $A.$  Does  there exist
 a self-adjoint extension  $\wt A$ of $A$ such that $\wt A$ has
purely point spectrum and the difference $(\wt A -
i)^{-1} - (A_0 - i)^{-1}$ belongs to $\mathfrak S_p(\gotH)$? To
the best of our knowledge this problem was not investigated.

In the present paper we show that the Weyl-von~Neumann theorem for
extensions becomes false in general. We show that under an
additional assumption on the symmetric operator $A$ the $ac$-part of a
certain extension $A_0=A^*_0$  is unitarily equivalent to the $ac$-part of any
extension ${\wt A}={\wt A}^*$ of $A$ provided that their
resolvent difference is compact, that is,
\begin{equation}\label{0.2}
K_{\wt A} := ({\wt A}-i)^{-1}-(A_0-i)^{-1}\in{\mathfrak
S}_{\infty}(\gH).
\end{equation}
The additional assumption on the pair $\{A,A_0\}$ is formulated in
terms of the Weyl function of the pair $\{A,A_0\}$.  The
latter is the main object in the boundary triplet approach to the
extension theory developed in the last three decades, see
\cite{DM91,DM95,GG91} and references therein.

The core of this approach is the following abstract version of
Green's formula
\be\label{0.3}
 (A^*f,g) - (f,A^*g) = (\gG_1f,\gG_0g)_{\cH} -
(\gG_0f,\gG_1g)_{\cH}, \qquad f,g\in\dom(A^*),
      \ee
where $\cH$  is an auxiliary Hilbert space and
$\Gamma_0,\Gamma_1 : \dom(A^*)\rightarrow \cH$  are linear mappings.  A triplet
$\Pi=\{\cH,\Gamma_0,\Gamma_1\}$ is called a boundary triplet  for
the operator $A^*$ if \eqref{0.3} holds and the mapping
$\Gamma:=\{\Gamma_0,\Gamma_1\}: \dom(A^*)\rightarrow \cH\oplus
\cH$ is surjective.

With a boundary triplet $\gP$ for $A^*$ one associates in a
natural way the Weyl function $M(\cdot)= M_{\Pi}(\cdot)$ (see
Definition \ref{Weylfunc}), which is the key object of this approach. It
is an operator-valued Nevanlinna function with values in $[\cH]$
(i.e. $R_{\cH}$-function) and its role in the extension theory is
similar to that of the classical Weyl function in the spectral
theory of Sturm-Liouville operators. In particular, if $A$ is
simple, then $M(\cdot)$ determines the pair $\{A,A_0\}$, where
$A_0 :=A^*\upharpoonright \ker\gG_0$, uniquely, up to
unitary equivalence. Moreover, $M(\cdot)$ is regular
(holomorphic) precisely on the resolvent set $\varrho(A_0)$ of
$A_0$  and the spectral properties of $A_0$ are described in terms
of the limits $M(t+i0)$ at the real line (see \cite{BMN02}).

One of our main results (Theorem \ref{V.5}) reads now as
follows.
\bt\label{th0.1}
Let $\Pi=\{\cH,\Gamma_0,\Gamma_1\}$ be a boundary triplet
for $A^*$  { such that} the
corresponding Weyl function $M(\cdot)$ has  weak limits
\begin{equation}\label{0.2A}
M(t+i0) :=\wlim_{y\downarrow 0}M(t+iy)\quad \text{for a.e. }\ t\in{\mathbb R}.
 \end{equation}
If a self-adjoint extension $\wt A$ of $A$ satisfies condition
\eqref{0.2}, then the { $ac$-parts} ${\wt A}^{ac}$ and
$A^{ac}_0$  of  ${\wt A}$ and { $A_0 \,(=
A^*\upharpoonright\ker(\gG_0)$)} are unitarily
equivalent.
\et

We apply this result to direct sums $A :=\oplus^{\infty}_{n=1}
S_n$ of symmetric operators $S_n$ { with equal and finite
deficiency indices $n_\pm(S_n)$}. { Let $S_{0n}$ be a
self-adjoint extension of $S_n$ for each $n \in \N$. We show  that
the $ac$-part of $A_0 :=\oplus^{\infty}_{n=1} S_{0n}$ is unitarily
equivalent to the $ac$-part of any other extension $\wt
A=\wt A^*\in \Ext_A$  provided that condition \eqref{0.2} is
satisfied and the symmetric operators $S_n$ are unitarily
equivalent to $S_1$ for any $n \in \N$.}

The second part of the paper is concerned with  a spectral
extremal property of certain self-adjoint extensions of $A$
described by the following definition.
\begin{definition}\la{0.3AA}
{\em
(i)
Let $T_j=T_j^*\in \cC(\gotH_j)$, $j = 1,2$. We
say that $T_1$ is a part of  $T_2$  if there is an isometry $V$
from $\gotH_1$ into $\gotH_2$ such that $VT_1V^* \subseteq T_2$.

(ii) { Let $A_0=A^*_0$ be an extension of $A$.
We say that $A_0$ is $ac$-minimal if $A^{ac}_0$ is a part of any
self-adjoint extension ${\wt A}$ of $A$.}

(iii) Let $\gs_0 :=\gs_{ac}(A_0)$. We say that $A_0$ is strictly $ac$-minimal if for
any extension $\wt A=\wt A^*$ of $A$ the parts $A^{ac}_0$ and $\wt
A^{ac}E_{\wt A}(\sigma_0)$ are unitarily equivalent.
}
\end{definition}

In particular, if $A_0$ is $ac$-minimal, then
$\sigma_{ac}({\wt A})\supseteq \sigma_{ac}( A_0)$. Note
that an $ac$-minimal extension of $A$ is not unique.
For any two  $ac$-minimal extensions their $ac$-parts are
unitarily equivalent.

We show (cf. Theorem \ref{VI.6}) that if $n_{\pm}(S_n)<\infty,$
then the $ac$-part $A^{ac}_0$ of any direct sum extension
$A_0 =\oplus^{\infty}_{n=1} S_{0n}$ of $A :=\oplus^{\infty}_{n=1}
S_{n}$ is $ac$-minimal. In particular,
$\sigma_{ac}({\wt A})\supseteq \sigma_{ac}( A_0)$ for any ${\wt
A}={\wt A}^*\in \Ext_A.$
This result looks surprising with respect to Theorem \ref{WNth}.
Indeed, in this case $A^{ac}_0$ is still  a part of $\wt
A^{ac}$ for any ${\wt A}\in \Ext_A$ though the resolvent
difference $K_{\wt A}$ (see \eqref{0.2}) is not even compact.
In other words, in this case the $ac$-spectrum of $A_0$
(but not its spectral multiplicity) remains stable under
(non-additive) compact perturbations $K_{\wt A}$ though
both $\sigma_{ac}( A_0)$ and its multiplicity  can only increase,
whenever} $K_{\wt A}\notin\gotS_{\infty}.$

Moreover, we apply our technique to  minimal symmetric
non-negative  Sturm-Liouville operator $A$ with an unbounded
operator potential
\begin{eqnarray}\la{0.4}
(Af)(x)   =  -f''(x) + Tf(x).
\end{eqnarray}
We show that the Friedrichs extension $A^F$ is $ac$-minimal
and under a simple additional assumption is even strictly
$ac$-minimal.

The paper is organized as follows. In Section 2 we give a short
introduction into the theory of ordinary and generalized boundary
triplets and  the corresponding Weyl functions. In Section 3 we
express the spectral multiplicity function of the { $ac$-part
$\wt A^{ac}$ of  $\wt A= \wt A^*(\in \Ext_A)$} by means of the
corresponding Weyl function. In Section 4 we apply this technique
to prove Theorem \ref{th0.1} as well as to give a simple proof of
the Kato-Rosenblum theorem.

In Section 5 direct sums of boundary triplets $\gP_n =
\{\cH_n,\gG_{0n},\gG_{1n}\}$  for operators  $S_n^*$ adjoint to
symmetric operators $S_n$ are investigated.  We show that though,
in general, $\gP = \oplus^\infty_{n=1} \gP_n$ is not a boundary
triplet for the direct sum $A^* :=\oplus^{\infty}_{n=1} S^*_n,$ it
is always possible to modify the triplets $\gP_n$ in such a way
that a new sequence $\wt \gP_n= \{\cH_n,\wt\gG_{0n},\wt\gG_{1n}\}$
of boundary triplets for $S_n^*$ satisfies the following
properties: $\wt \gP = \oplus^\infty_{n=1} \wt \gP_n$ forms a
boundary triplet for $A^*$ { such that $S_{0n} :=
S^*_n\upharpoonright \ker(\gG_{0n}) = S^*_n\upharpoonright
\ker(\wt \gG_{0n}) =: {\wt S}_{0n},\  n \in \N$. In particular,
the corresponding Weyl function  $M(\cdot)$ is {
block}-diagonal (see  Theorem \ref{VI.3}).  { Our spectral
applications to direct sums are substantially based on this
result. In particular, it is used in proving of Theorem \ref{VI.6}
mentioned above}.

Finally, in Section 6 we apply the  technique (and abstract
results) to  operators \eqref{0.4}  with { bounded and}
unbounded operator potentials. { In particular, we
investigate the $ac$-spectrum of self-adjoint realizations of
Schr\"odinger operator
\bed \cL = - \left(\frac{\partial^2}{\partial t^2} +
\sum^n_{j=1}\frac{\partial^2}{\partial x^2_j}\right) + q(x). \quad
(t,x) \in \R_+ \times \R^n, \quad q \in L^\infty(\R^n),
      \eed
in $L^2(\R_+\times\R^n)$, $n \ge 1$. For instance, we show that if
$q(\cdot)\ge 0$ and}
    \begin{equation}\label{0.33}
\lim_{|x|\to\infty}\int_{|x-y|\le 1}|q(y)|dy = 0,
    \end{equation}
{ then the Dirichlet realization   $L^D$ is absolutely
continuous,  strictly $ac$-minimal and  $\gs(L^D) = \gs_{ac}(L^D)=
[0,\infty).$}

{\bf Notations}  In the following we consider only separable
Hilbert spaces which are denoted by $\mathfrak H$, $\cH$ etc.
{ The symbols $\cC(\cH_1, \cH_2)$ and $[\mathfrak H_1,
\mathfrak H_2]$ stand for the set of closed densely defined {
linear} operators and the set of bounded linear operators from
$\mathfrak H_1$ to $\mathfrak H_2,$ respectively}.
{ We set} $\cC(\cH):= \cC(\cH, \cH)$ and $[\mathfrak H] :=
[\mathfrak H, \mathfrak H]$.  { The symbols $\dom(\cdot),$
$\ran(\cdot),$ $\varrho(T)$ and $\sigma(T)$ stand for the domain,
the range,  the resolvent set and the spectrum of an operator
$T\in \cC(\cH),$ respectively}; $T^{ac}$ and
$\sigma_{ac}(T)$ stand for the $ac$-part and the $ac$
spectrum of an operator $T=T^*\in \cC(\cH).$

$\mathfrak S_p(\gH),$  $p\in [1,\infty],$ stand for the
Schatten-von~Neumann ideals in $\gH$.
{ Denote by $\cB({\R})$ the Borel $\sigma$-algebra of the
line $\R$  and  by $\cB_b({\R})$ the algebra of bounded subsets
in} $\cB_b(\R)$. The Lebesgue measure of a set $\gd \in \cB({\R})$
is denoted by $|\gd|$.

\section{Preliminaries}

\subsection{Operator measures}
\bd
{\em
Let  $\cH$ be a separable Hilbert space. A  mapping
$\gS(\cdot):\  \cB_b({\R})\to[\cH]$ is called an operator
(operator-valued) measure if

\item[\;\;{\rm (i)}] $\gS(\cdot)$ is $\gd$-additive in the strong
    sense and

\item[\;\;{\rm (ii)}] $\gS(\gd)=\gS(\gd)^*\ge 0$ for $\gd\in
\cB_b({\R})$.

The operator measure $\gS({\cdot})$ is called bounded if it
extends to the Borel algebra $\cB({\R})$ of $\R$, i.e.
$\gS({\R})\in[\cH]$. Otherwise,  it
is called unbounded. A bounded operator measure
$\gS(\cdot)=E(\cdot)$ is called orthogonal if, in addition the conditions

\item[\;\;{\rm (iii)}] $E(\gd_1)E(\gd_2)=E(\gd_1\cap\gd_2)$ for
$\gd_1,\gd_2\in \cB({\R})$ and
$E({\R})=I_{\cH}$

are satisfied.
}
\ed

Setting in (iii) $\gd_1=\gd_2$, one gets that an orthogonal
measure $E(\cdot)$ takes its values in the set of orthogonal
projections on $\cH$. Every orthogonal measure $E(\cdot)$
defines an operator $T= T^*= \int_{{\R}}\gl  d E(\gl)$ in
$\cH$ with $E(\cdot)$ being its spectral measure. Conversely, by
the spectral theorem, every operator $T=T^*$ in $\cH$ admits the
above representation with the orthogonal spectral measure
$E=:E_T$.

By $\gS^{ac}$, $\gS^s$, $\gS^{sc}$ and $\gS^{pp}$  we denote
absolutely continuous, singular, singular continuous and pure
point parts of the measure $\gS$, respectively. The Lebesgue
decomposition of $\gS$ is given by
$\gS = \gS^{ac} + \gS^s = \gS^{ac} + \gS^{sc} + \gS^{pp}$.

The operator measure $\gS_1$ is called subordinated to
the operator measure $\gS_2$, in short $\gS_1 \prec \gS_2$, if
$\gS_2(\gd) = 0$ yields $\gS_1(\gd) = 0$ for $\gd \in \cB_b(\R)$. If
the measures $\gS_1$ and $\gS_2$ are mutually subordinated, then
they are called equivalent, in short $\gS_1 \thicksim \gS_2$.
Note, that  there are always exists a scalar measure $\rho$
defined on $\cB_b(\R)$ such that $\gS \thicksim \rho$, see
\cite[Remark 2.2]{MM03}. In particular, there is always a scalar
measure such that $\gS \prec \rho$.

Usually, with the operator-valued measure $\gS(\cdot)$ one
associates a distribution operator-valued function $\gS(\cdot)$
defined by
\be\la{a1}
\gS(t) =
\begin{cases}
\gS([0,t)) & t > 0 \\
0            & t = 0 \\
-\gS([t,0)) & t < 0
\end{cases}
\ee
which is called the spectral function of $\gS$.
Clearly, $\gS(\cdot)$ is strongly left continuous, $\gS(t-0)=
\gS(t),$ and satisfies
$\gS(t) = \gS(t)^*, \quad \gS(s) \le \gS(t)$, $s \le t$.
\begin{definition}[{\cite[Definition 4.5]{MM03}}]
{\em
Let $\gS$ be an operator measure in
$\cH$ and let $\rho$ be a scalar measure on $\cB(\R)$ such that
$\gS \prec \rho$. Further, let $e=\{e_j\}^{\infty}_{j=1}$
be an orthonormal basis in $\cH$. Let
\bead
\gS_{ij}(t):=\bigl(\gS(t)e_i,e_j\bigr), \qquad
\Psi_{ij}(t):=d\gS_{ij}(t)/d\rho,\\
\Psi^e_n(t):=\bigl(\Psi_{ij}(t)\bigr)^n_{i,j=1}, \qquad
\Psi^e(t):=\bigl(\Psi_{ij}(t)\bigr)^{\infty}_{i,j=1}.
\eead
We call
\be\label{2.1A}
N^e_{\gS}(t):=\rank(\Psi^e(t)) := \sup_{n\ge 1} \rank(\Psi^e_n(t))\;(\mod(\rho))
\ee
and
\bed
N_\gS(t) := \ess\!\sup_e N^e_{\gS}(t)\;(\mod(\rho))
\eed
the multiplicity function and the total multiplicity of $\gS$, respectively.
}
\end{definition}

By  \cite[Proposition 4.6]{MM03} $N^e_{\gS}(\cdot)$ does
not depend on the orthogonal basis $e$. Therefore one always has
$N_{\gS}(t) := N^e_{\gS}(t)$  and one can omit the index $e$ in
\eqref{2.1A}.

When  applying  this definition to the absolutely
continuous part $\gS^{ac}$ of $\gS$  the scalar measure
$\rho^{ac}$  can be chosen  to be  the Lebesgue measure $|\cdot|$
on $\cB(\R)$.

The concept of the multiplicity function allows one to introduce
the following definitions.
\begin{definition}\label{II.3}
{\em
Let $\gS_1$ and $\gS_2$ be two operator measures.

\item[\;\;\rm (i)]
The operator measure $\gS_1$ is called
spectrally subordinate to the operator measure $\gS_2$, in short
$\gS_1\prec\!\!\prec\gS_2$, if $\gS_1\prec\gS_2$
and $N_{\gS_1}(t)\le N_{\gS_2}(t)\;(\mbox{\mod}(\gS_2))$.
\item[\;\;\rm (ii)]
The operator measures $\gS_1$ and $\gS_2$ are called spectrally
equivalent, in short $\gS_1 \thickapprox \gS_2$, if $\gS_1 \thicksim \gS_2$ and
$N_{\gS_1}(t)=N_{\gS_2}(t)(\mbox{\mod}(\gS_2))$.
}
\end{definition}

{ Crucial for us in the sequel is the following theorem}.
\bt\la{II.4a}
{ Let $T_j$ be self-adjoint operators acting in $\gotH_j$
with corresponding spectral measures $E_{T_j}(\cdot)$, $j =1,2$.
Let $\cD \in \cB(\R)$.

\item[\;\;\rm (i)]
$T_1E_{T_1}(\cD)$ is a part of $T_2E_{T_2}(\cD)$ if and only if
$E_{T_1,\cD} \prec\!\!\prec  E_{T_2,\cD}$, where $E_{T_j,\cD}(\gd) :=
E_{T_j}(\gd\cap\cD), \ j = 1,2$.

\item[\;\;\rm (ii)]
The parts $T_1E_{T_1}(\cD)$ and $T_2E_{T_2}(\cD)$ are unitarily equivalent
if and only if $E_{T_1,\cD}  \thickapprox E_{T_2,\cD}$.}
\et

The proof is immediate from \cite[Theorem 7.5.1]{BS87}.
For  $\cD = \R$ Theorem \ref{II.4a} gives conditions for
$T_1$ to be unitarily equivalent either to a part of $T_2$  or to
$T_2$ itself.

\subsection{$R$-Functions}

Let ${\cH}$ be a separable Hilbert space. We recall that an operator-valued
function $F(\cdot)$ with values in $[{\cH}]$ is called to be a
Herglotz, Nevanlinna or $R$-function
\cite{AG81,Ber68,GG91,KN77}, if it is holomorphic in $\C_+$ and
its imaginary part is non-negative, i.e.
$\im(F(z)) := (2i)^{-1}\bigl({F(z) - F(z)^*}\bigr) \ge 0, \  z\in
\C_+$. In what follows we prefer the notion of R-function. The class of
R-functions with values in $[\cH]$ will be denoted by $(R_\cH)$.
Any $(R_\cH)$-function $F(\cdot)$ admits an integral
representation
\begin{equation}\label{1.2}
F(z) = C_0 + C_1 z +
\int_{-\infty}^{\infty} \left(\frac{1}{t-z}- \frac{t}{1 + t^2}\right) d\gS_F,\quad z\in \C_+,
\end{equation}
(see, for instance, \cite{AG81, Ber68, KN77}),  where $C_0 = C_0^*, \ C_1\ge 0$ and $\gS_F$ is an
operator-valued Borel measure on $\R$ satisfying $\int_\R
(1+t^2)^{-1}d\gS_F \in [\cH]$. The integral is understood in the
strong sense.

In contrast to spectral measures of self-adjoint operators the
measure $\gS_F$ is not necessarily orthogonal. However, the
operator-valued measure $\gS_F$ is uniquely determined by the
$R$-function $F(\cdot)$.  It is called the spectral
measure of $F(\cdot)$. The associated spectral function is denoted
by $\gS_F(t)$, $t \in \R$, cf. { \eqref{a1}}.

Let us  calculate $N_{\gS^{ac}_F}(t)$, $t \in \R$. For any
Hilbert-Schmidt operator ${ D} \in {\mathfrak S}_2(\cH)$
satisfying $\ker({D}) = \ker({D}^*) = \{0\}$ let us consider the
modified $R_\cH$-function
\bed
(F^{ D})(z) := { D}^*F(z){ D}, \quad z \in \C_+.
\eed
For $F^D(\cdot)$ the strong
limit $F^D(t) := F^D(t+i0) :=\slim_{y\to+\infty}F^D(t+iy)$
exists for a.e. $t \in \R$. We set
\be\la{2.13}
 d_{F^D}(t):= \dim(\ran(\im(F^D)(t))), \quad \mbox{for a.e.}  \quad t \in \R.
\ee
\begin{proposition}\la{II.4}
Let $F(\cdot)\in (R_\cH),$ \ $D \in {\mathfrak S}_2(\cH)$
and $\ker(D) = \ker(D^*) = \{0\}.$ Then
$N_{\gS^{ac}_F}(t) = d_{F^{ D}}(t)$ for a.e. $t \in \R$.
\end{proposition}
\begin{proof}
It follows from \eqref{1.2} that
\be\label{2.12}
 \im(F(\gl + iy)) = y C_1 +\int^\infty_{-\infty}\frac{y}{(t-\gl)^2 +y^2}d\gS_F,
\qquad \gl \in \R.
\ee
By Berezanskii-Gel'fand-Kostyuchenko theorem \cite{Ber68,BS87} the
derivative $\Psi_{D^*\gS_FD}(t) := \frac{d}{dt}D^*\gS_F(t)D$ exists for a.e.
$t \in \R$ and the representation
\bed
 D^*\gS^{ac}_F(\gd)D = \int_\gd \Psi_{D^*\gS_FD}(t) dt,
\qquad \gd \in \cB_b(\R)
\eed
 holds. Applying the  Fatou theorem (see \cite{KN77}) to \eqref{2.12} and using
\eqref{2.13} we obtain
\be\la{2.16}
\im((F^D)(\gl)) = \pi\Psi_{D^*\gS_FD}(\gl) \quad \text{for
a.e.}\quad \gl \in \R.
\ee
By  \cite[Corollary 4.7]{MM03} $N_{\gS^{ac}_F}(\gl) =
\rank(\Psi_{D^*\gS_FD}(\gl)) = \dim(\overline{\ran(\Psi_{D^*\gS_FD}(\gl))})$ for a.e.
$\gl \in \R$. Finally, using \eqref{2.16} we get $N_{\gS^{ac}_F}(\gl) =
d_{F^D}(\gl)$ for a.e. $\gl \in \R$.
\end{proof}

{ Notice that Proposition \ref{II.4} implies that $D_{F^D}(t)$ does not
dependent on $D$}. Assuming
the existence of the  limit $F(t) := \slim_{y\to+0}F(t + iy)$
for a.e. $t \in \R$,  we set
\bed
d_F(t) := \rank(\im(F(t)) = \dim(\ran(\im(F(t))))
\eed
for a.e. $t\in \R$. In this case Proposition \ref{II.4}
can be modified as follows.
\bc\la{II.5}
Let $F(\cdot)\in (R_\cH)$. If
the limit $F(t) := \slim_{y\to+0}F(t + iy)$
exists for a.e. $t \in \R$, then $N_{\gS^{ac}_F}(t) = d_F(t)$ for
a.e. $t \in \R$.
\ec

\subsection{Boundary triplets and self-adjoint extensions}

In this section we briefly
recall the basic facts on  boundary triplets and the corresponding
Weyl functions, cf. \cite{DM87,DM91,DM95,GG91}.

Let $A$ be a densely defined closed symmetric operator in the
separable Hilbert space $\gH$ with equal deficiency indices
$n_\pm(A)=\dim(\ker(A^*\mp i)) \leq \infty$.
\begin{definition}[\cite{GG91}]\la{t1.9}
{\em
A triplet $\Pi = \{\cH, \Gamma_0,
\Gamma_1\}$, where $\cH$ is an auxiliary Hilbert space and
$\Gamma_0,\Gamma_1:\  \dom(A^*)\rightarrow \cH$ are linear
mappings,  is called an (ordinary) boundary triplet for $A^*$ if
the "abstract Green's identity"
\begin{equation}\la{2.10A}
(A^*f,g) - (f,A^*g) = (\gG_1f,\gG_0g)_{\cH} -
(\gG_0f,\gG_1g)_{\cH}, \qquad f,g\in\dom(A^*),
\end{equation}
holds and the mapping $\gG:=(\Gamma_0,\Gamma_1)^\top:  \dom(A^*)
\rightarrow \cH \oplus \cH$ is surjective.
}
\end{definition}
\begin{definition}[\cite{GG91}]\la{def2.2}
{\em
A closed extension $A'$ of $A$ is called a proper extension,
in short $A'\in \Ext_A,$ if $A\subset A' \subset A^*;$

Two proper extensions $A', A''$ are called disjoint if
$\dom( A')\cap \dom( A'') = \dom( A)$ and  transversal if in
addition $\dom( A') + \dom( A'') = \dom( A^*).$
}
\end{definition}

Clearly, any  self-adjoint extension $\wt A= {\wt A}^*$ is proper,
 $\wt A\in \Ext_A.$
A boundary triplet $\Pi=\{\cH,\gG_0,\gG_1\}$ for $A^*$ exists
whenever  $n_+(A) = n_-(A)$. Moreover, the relations $n_\pm(A) =
\dim(\cH)$ and $\ker(\Gamma_0) \cap \ker(\Gamma_1)=\dom(A)$ are
valid. Besides, $\Gamma_0, \Gamma_1\in [{\mathfrak H}_+, \cH]$,
where ${\mathfrak H}_+$ denotes the Hilbert space { obtained
by equipping} $\dom(A^*)$ with the graph norm of $A^*$.

With any boundary triplet $\Pi$ one associates  two extensions
$A_j:=A^*\!\upharpoonright\ker(\gG_j), \ j\in \{0,1\}$, which are
self-adjoint in view of Proposition \ref{prop2.1} below.
Conversely, for any extension $A_0=A_0^*\in \Ext_A$ there exists
a (non-unique) boundary triplet $\Pi=\{\cH,\gG_0,\gG_1\}$ for $A^*$
such that $A_0:=A^*\!\upharpoonright\ker(\gG_0)$.

Using the concept of { boundary triplets} one can parameterize all
proper, in particular, self-adjoint extensions of $A$. For this
purpose denote by $\widetilde\cC(\cH)$ the set  of closed linear
relations in $\cH$, that is, the set of (closed) linear subspaces
of $\cH\oplus\cH$. The adjoint relation
$\Theta^*\in\widetilde\cC(\cH)$ of a linear relation $\Theta$ in
$\cH$ is defined by
\begin{equation*}
\Theta^*= \left\{
\begin{pmatrix} k\\k^\prime
\end{pmatrix}: (h^\prime,k)=(h,k^\prime)\,\,\text{for all}\,
\begin{pmatrix} h\\h^\prime\end{pmatrix}
\in\Theta\right\}.
\end{equation*}
A linear relation $\Theta$ is called {\it symmetric} if
$\Theta\subset\Theta^*$ and self-adjoint if $\Theta=\Theta^*$.

The multivalued part $\mul(\Theta)$ of { $\Theta\in
\wt\cC(\cH)$ is} $\mul(\Theta) = \{h \in \cH: \{0,h\}\in
\Theta\}.$ Setting $\cH_{\infty} :=\mul(\Theta)$ and $\cH_{\rm op}
:=\cH^{\perp}_{\infty}$ we get $\cH=\cH_{\rm
op}\oplus\cH_{\infty}$. This decomposition yields an orthogonal
{ decomposition}
$\Theta=\Theta_{\rm op}\oplus\Theta_{\infty}$  where
$\Theta_{\infty}:= \{0\}\oplus\mul(\Theta)$ and $\Theta_{\rm op}:=
\{\{f,g\}\in\Theta:\ f\in \dom(\Theta), g\perp\mul(\Theta)\}$. For
the definition of the inverse and the resolvent set of a linear
relation $\Theta$ we refer to \cite{DS87}.
\begin{proposition}\label{prop2.1}
Let  $\Pi=\{\cH,\gG_0,\gG_1\}$  be a
boundary triplet for  $A^*.$  Then the mapping
\begin{equation}\label{bij}
(\Ext_A\ni)\ \widetilde A \to  \Gamma \dom(\widetilde A)
=\{\{\Gamma_0 f,\Gamma_1f \} : \  f\in \dom(\widetilde A) \} =:
\Theta \in \widetilde\cC(\cH)
\end{equation}
establishes  a bijective correspondence between the sets $\Ext_A$
and  $\widetilde\cC(\cH)$. We put $A_\Theta :=\widetilde A$ where
$\Theta$ is defined by \eqref{bij}. { Moreover, the} following holds:

\item[\;\;\rm (i)] $A_\Theta= A_\Theta^*$ if and only if $\Theta=
\Theta^*;$

\item[\;\;\rm (ii)] The extensions $A_\Theta$ and $A_0$ are
disjoint if and only if $\Theta\in \cC(\cH).$ In this case
\eqref{bij} becomes
\bed
A_\Theta = A^*\!\upharpoonright\ker(\gG_1-
\Theta\gG_0);
\eed

\item[\;\;\rm (iii)] The extensions $A_\Theta$ and $A_0$ are
transversal if and only if $\Theta = \Theta^*\in [\cH].$
\end{proposition}

In particular,
$A_j:=A^*\!\upharpoonright\ker(\gG_j) = A_{\Theta_j},\ j\in
\{0,1\}$ where $\gT_0:= \{0\} \times \cH$ and $\gT_1 := \cH \times
\{0\}$. Hence $A_j= A_j^*$ since $\gT_j
= \gT^*_j.$ In the sequel the extension $A_0$
is usually regarded as a reference
self-adjoint extension.

\subsection{Weyl functions and $\gamma$-fields}\label{weylsec}

It is well known that Weyl functions give an important tool in the
direct and inverse spectral theory of singular Sturm-Liouville
operators. In \cite{DM87,DM91,DM95} the concept of Weyl function
was generalized to the case of an arbitrary symmetric operator $A$
with  $n_+(A) = n_-(A).$ Following \cite{DM87,DM91,DM95} we recall
basic facts on Weyl functions and $\gamma$-fields associated with
a boundary triplet $\Pi$.
\bd[{\cite{DM87,DM91}}]\label{Weylfunc}
{\em
 Let $\Pi=\{\cH,\gG_0,\gG_1\}$ be a boundary triplet  for $A^*.$
The functions $\gamma(\cdot):
\varrho(A_0)\rightarrow [\cH,\gH]$ and  $M(\cdot):
\varrho(A_0)\rightarrow  [\cH]$ defined by
\begin{equation}\label{2.3A}
\gamma(z):=\bigl(\Gamma_0\!\upharpoonright\mathfrak N_z\bigr)^{-1}
\qquad\text{and}\qquad M(z):=\Gamma_1\gamma(z), \qquad
z\in\varrho(A_0),
      \end{equation}
are called the {\em $\gamma$-field} and the {\em Weyl function},
respectively, corresponding to  $\Pi.$
}
\ed

It follows from the identity  $\dom(A^*)=\ker(\Gamma_0)\dot +
{\mathfrak N}_z$, $z\in\varrho(A_0)$, where  $A_0 =
A^*\!\upharpoonright\ker(\gG_0)$, and ${\mathfrak N}_z :=\ker(A^* -
z)$, that the $\gamma$-field $\gamma(\cdot)$ is well defined and
takes values in $[\cH,\gH]$. Since $\gG_1 \in [\gH_+, \cH]$, it
follows from \eqref{2.3A} that  $M(\cdot)$ is well defined too and
takes values in   $[\cH]$. Moreover, both $\gamma(\cdot)$ and
$M(\cdot)$ are holomorphic on $\varrho(A_0)$ and satisfy the
following relations (see \cite{DM91})
\begin{equation}\label{2.15}
\gamma(z)=\bigl(I+(z-\zeta)(A_0 - z)^{-1}\bigr)\gamma(\zeta),
\qquad z, \zeta\in\varrho(A_0),
\end{equation}
and
\begin{equation}\label{mlambda}
M(z) - M(\zeta)^*= (z - \bar\zeta)\gamma(\zeta)^*\gamma(z), \qquad z,
\zeta\in\varrho(A_0).
\end{equation}
The last identity yields that $M(\cdot)$ is a $R_{\cH}$-function,
that is, $M(\cdot)$ is a $[\cH]$-valued holomorphic function on
$\C\backslash \R$  satisfying
\bed
M(z)=M(\overline z)^*\qquad\text{and}\qquad
\frac{\IM(M(z))}{\IM(z)}\geq 0, \qquad z\in\C\backslash\R.
 \eed
Moreover, it follows from \eqref{mlambda} that $M(\cdot)$ satisfies
$0\in \varrho(\IM(M(z))),\  z\in\C\backslash\R$.

If $A$ is a simple symmetric operator, then  the Weyl function
$M(\cdot)$ determines the pair $\{A,A_0\}$ uniquely up to unitary
equivalence (see \cite{DM95,KL71}). Therefore $M(\cdot)$ contains
(implicitly)  full information on spectral properties of $A_0$.
We recall that a symmetric operator is
said to be {\it simple} if there is no non-trivial subspace which
reduces it to a self-adjoint operator.

For a fixed $A_0 = A_0^*$ a boundary triplet $\Pi =\{\cH,
\gG_0,\gG_1\}$ satisfying  $\dom(A_0)=\ker (\Gamma_0)$ is not
unique. Let $\Pi_j =\{\cH_j, \gG^j_0,\gG^j_1\}, \ j\in \{1,2\},$
be two such triplets. Then the corresponding Weyl functions
$M_1(\cdot)$ and $M_2(\cdot)$ are related by
\begin{equation}\label{2.11}
M_2(z) = R^* M_1(z) R + R_0,
\end{equation}
where $R_0= R^*_0 \in [{\cH}_2]$ and $R \in
[{\cH}_2,{\cH}_1]$ is boundedly invertible.

According to Proposition \ref{prop2.1} the extensions
$A_{\gT}$ and $A_0$ are not disjoint whenever $\mul(\Theta)\not =
\{0\}.$ Considering $A_{\gT}$ and $A_0$ as extensions of an
intermediate extension  $S := A_0\upharpoonright(\dom(A_0) \cap
\dom(A_\gT))$ we can avoid this inconvenience.
\bl\la{III.4}
Let $\gP = \{\cH,\gG_0,\gG_1\}$ be a boundary triplet for $A^*$,
$M(\cdot)$ the corresponding  Weyl function,   $\gT= \gT^* \in
\wt\cC(\cH)$ and $\gT = \Theta_{op}\oplus \Theta_\infty
$ its orthogonal decomposition. Further let  $S := A_0\upharpoonright(\dom(A_0) \cap
\dom(A_\gT))$. Then the triplet $\wh\gP =
\{\wh\cH,\wh\gG_0,\wh\gG_1\}$, defined by
\bed
\wh\cH := \cH_{\rm op} = \overline{\dom(\Theta)},\quad
\wh\gG_0 := \gG_0\upharpoonright\dom( S^*),\quad \wh\gG_1 :=
\pi_{\rm op}\gG_1\upharpoonright\dom(S^*),
\eed
is a boundary triplet for $S^*$, where $\pi_{\rm op}$ is the
orthogonal projection from $\cH$ onto $\cH_{\rm op}$, $A_0 =
S^*\upharpoonright\ker(\wh \gG_0)$ and $A_\gT = S_{\gT_{\rm op}}$.
The corresponding  Weyl function
is
\be\label{2.13a}
\wh M(z) := \pi_{\rm op}M(z)\upharpoonright\cH_{\rm op}, \quad z
\in \C_\pm. \ee
\el
The proof can be found in \cite{DHMS00}. Hence without loss of
generality we can very often assume  that the ``coordinate''
$\gT:=\gG\wt A$ of an extension $\wt A = A_\gT=A_\gT^* \in
\Ext_A$ corresponds to the  graph of a self-adjoint operator.

In what follows, without loss of generality, we always
assume that the closed symmetric $A$ is simple and, due to  Lemma
\ref{III.4}, the ``coordinate'' $\gT$ of {\green the  extension
$A_\gT = A_\gT^*\in {\Ext}_A$ is the graph of a self-adjoint
operator.

\subsection{Krein type formula for resolvents and comparability}

With  any  boundary triplet   $\Pi=\{\cH,\gG_0,\gG_1\}$  for $A^*$
and any proper (not necessarily  self-adjoint)  extension $A_{\Theta}\in
\Ext_A$ it is naturally  associated  the following (unique) Krein
type formula (cf. \cite{DM87,DM91,DM95})
\be\label{2.30}
(A_\Theta - z)^{-1} - (A_0 - z)^{-1} = \gamma(z) (\Theta -
M(z))^{-1} \gamma({\overline z})^*, \quad z\in \varrho(A_0)\cap
\varrho(A_\Theta).
\ee
Formula \eqref{2.30} is a generalization of the known Krein
formula for resolvents.
We note also, that all objects in
\eqref{2.30} are expressed in terms of the boundary triplet $\Pi$
(cf. \cite{DM87,DM91,DM95}). In other words, \eqref{2.30} gives a relation
between Krein-type  formula for canonical resolvents  and the theory of abstract
boundary value problems (framework of boundary triplets).

The following result is deduced from  formula \eqref{2.30} (cf.
\cite[Theorem 2]{DM91}).
\bp\label{prop2.9}
Let $\Pi=\{\cH,\gG_0,\gG_1\}$  be a boundary triplet for $A^*$,
$\Theta_i = \Theta_i^* \in \wt\cC(\cH), \ i\in \{1,2\}$.  Then for
any Schatten-von~Neumann ideal ${\mathfrak S}_p$, $p \in (0,\infty]$, and any $z \in
\C\setminus\R$ the following equivalence holds
\bed
(A_{\Theta_1}-z)^{-1} - (A_{\Theta_2}-z)^{-1}\in{\mathfrak
S}_p(\gH)
\Longleftrightarrow  \bigl(\Theta_1 - z\bigr)^{-1}-
\bigl(\Theta_2 - z \bigr)^{-1}\in{\mathfrak S}_p(\cH)
\eed
In particular,  $(A_{{\Theta}_1} - z)^{-1} - (A_0 - z)^{-1} \in
{\mathfrak S}_p(\gH) \Longleftrightarrow \bigl(\Theta_1 - i\bigr)^{-1} \in {\mathfrak
S}_p(\cH)$.

If in addition $\Theta_1, \Theta_2\in[\cH]$, then for any $p \in
(0, \infty]$ the equivalence holds
\bed
(A_{\Theta_1}-z)^{-1} -
(A_{\Theta_2}-z)^{-1}\in{\mathfrak S}_p(\gH) \Longleftrightarrow
\Theta_1 - \Theta_2 \in{\mathfrak S}_p(\cH).
\eed
\end{proposition}

\subsection{Generalized boundary triplets and proper extensions}

In applications the concept of boundary triplets is too
restrictive. Here we recall some facts on generalized boundary
triplets following \cite{DM95}.
\begin{definition}[{\cite[Definition 6.1]{DM95}}]\label{def3.1}
{\em
{ A triplet $\Pi=\{\cH,\gG_0,\gG_1\}$  is called a
\emph{generalized boundary triplet for $A^*$} if $\cH$ is an
auxiliary Hilbert space and { $\gG_j: \dom(\Gamma_j) \to\cH$}, $j
= 0,1$ are linear
mappings such that { $\dom(\Gamma) := \dom(\gG_0) \cap \dom(\gG_1)$
is a core for $A^*$},
$\gG_0$ is surjective, $A_0:= A^*\upharpoonright\ker(\gG_0)$ is
self-adjoint and the following Green's formula  holds}
\begin{equation}\label{3.1}
(A_*f, g) - (f, A_*g) = (\gG_1f, \gG_0 g)_{\cH} - (\gG_0 f, \gG_1
g)_{\cH}, \qquad f,g\in\dom(A_*),
\end{equation}
where  $A_*:= A^*\upharpoonright\dom(\Gamma).$
}
\end{definition}

{ By definition, { $A_* := A^*\upharpoonright\dom(\Gamma)$} and $A_*
\subseteq A^* = \overline{A_*}$ and $(A_*)^*=A.$}  Clearly,  every
ordinary boundary triplet is a generalized boundary triplet.
\bl[{\cite[Proposition 6.1]{DM95}}]\label{lem3.2}
Let $A$ be a densely defined closed symmetric operator and let
$\Pi=\{\cH,\gG_0,\gG_1\}$ be  a generalized boundary triplet for
$A^*.$ Then the following assertions are true:

\item[\;\;{\rm (i)}] ${\mathfrak N}^*_{z}:=\dom(A_*)\cap{\N}_{z}$ is
dense in $\mathfrak N_{z}$ and $\dom(A_*) =\dom(A_0) + \mathfrak N^*_{z}$;

\item[\;\;{\rm (ii)}] $\overline{\gG_1\dom(A_0)}=\cH$;

\item[\;\;{\rm (iii)}] $\ker(\gG) = \dom(A)$ and
${\overline{\ran(\gG)}} = \cH \oplus \cH$.
\end{lemma}
\begin{lemma}\label{lem3.3}
Let $A$ be a densely defined closed symmetric operator and let
$\Pi = \{\cH, \Gamma_0, \Gamma_1\}$ be a generalized boundary
triplet for $A^*$. Then the mapping $\gG = \{\gG_0,\gG_1\}^\top$
is closable and $\overline{\gG} \in \cC(\gH_+, \cH)$.
\end{lemma}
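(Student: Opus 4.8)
The plan is to establish closability directly and then read off that the closure is closed and densely defined. Throughout, $\gG=(\gG_0,\gG_1)^\top$ is regarded as an operator from $\gH_+$ (that is, $\dom(A^*)$ equipped with the $A^*$-graph norm) into $\cH\oplus\cH$, with domain $\dom(\gG)=\dom(A_*)$. This domain is dense in $\gH_+$ because, by the definition of a generalized boundary triplet, $\dom(\gG)$ is a core for $A^*$; hence $\gG$ is densely defined and it suffices to prove closability.

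For closability I would use the sequential criterion: if $f_n\in\dom(A_*)$ satisfies $f_n\to 0$ in $\gH_+$ and $\gG f_n\to(h_0,h_1)$ in $\cH\oplus\cH$, then I must show $h_0=h_1=0$. Here $f_n\to 0$ in $\gH_+$ means $f_n\to 0$ and $A_*f_n=A^*f_n\to 0$ in $\gH$, while $\gG_0 f_n\to h_0$ and $\gG_1 f_n\to h_1$. Fixing an arbitrary $g\in\dom(A_*)$ and inserting the pair $(f_n,g)$ into Green's identity \eqref{3.1} gives
\[
(A_*f_n,g)-(f_n,A_*g)=(\gG_1 f_n,\gG_0 g)_{\cH}-(\gG_0 f_n,\gG_1 g)_{\cH}.
\]
The left-hand side tends to $0$ since $A^*f_n\to 0$ and $f_n\to 0$ while $g,A_*g$ are fixed; passing to the limit on the right I obtain $(h_1,\gG_0 g)_{\cH}-(h_0,\gG_1 g)_{\cH}=0$ for every $g\in\dom(A_*)$. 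Rewritten as $(h_1,-h_0)\perp(\gG_0 g,\gG_1 g)$ in $\cH\oplus\cH$, this says that $(h_1,-h_0)$ is orthogonal to the whole range $\ran(\gG)$.

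The decisive ingredient is Lemma \ref{lem3.2}(iii), according to which $\overline{\ran(\gG)}=\cH\oplus\cH$; therefore $(h_1,-h_0)=0$, i.e. $h_0=h_1=0$, and $\gG$ is closable. Being the closure of a closable, densely defined operator, $\overline{\gG}$ is then a closed, single-valued, densely defined operator from $\gH_+$ into $\cH\oplus\cH$, which is the assertion. I expect no serious obstacle here: the computation is merely limit-passing in \eqref{3.1}, and the only substantive point is that the range-density statement Lemma \ref{lem3.2}(iii) has already been secured independently, so its use is not circular. The one place that warrants a moment's care is keeping track of the two distinct topologies (the graph norm on the domain side, the $\cH\oplus\cH$ norm on the target side) when forming the closure.
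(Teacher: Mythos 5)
Your proof is correct and follows essentially the same route as the paper: pass to the limit in Green's identity \eqref{3.1} along a sequence $f_n\to 0$ in $\gH_+$ with $\gG f_n$ convergent, then invoke Lemma \ref{lem3.2}(iii) (density of $\ran(\gG)$ in $\cH\oplus\cH$) to conclude the limit vanishes. The paper merely packages the orthogonality step via the symplectic matrix $J=\begin{pmatrix}0 & I\\ -I & 0\end{pmatrix}$, writing $(Jf_\infty,\gG g)=0$, which is exactly your statement that $(h_1,-h_0)\perp\ran(\gG)$.
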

\begin{proof}
The Green's formula can be rewritten as $(A_*f,g) - (f,A_*g) =
(J\gG f,\gG g)$ where  $\gG:=(\Gamma_0,\Gamma_1)^\top$ and  $J :=
\begin{pmatrix}
0 & I\\
-I & 0
\end{pmatrix}$.
Let $f_n\in\dom(\Gamma_0)\cap\dom(\Gamma_1)=\dom(A_*),$\
$\|f_n\|_{\gH_+}\to 0$ and $\Gamma f_n=\{\Gamma_0 f_n,\Gamma_1
f_n\}\to\{\varphi, \psi\}$ as $n\to\infty$. Hence
\bed
0 = \lim_{n\to\infty}[(A_*f_n, g) - (f_n, A_* g)] =
\left(Jf_\infty,\gG g\right), \quad \mbox{where} \quad f_\infty :=
\{\varphi,  \psi \}^\top.
\eed
Since $\ran(\gG)$ is dense in $\cH \oplus \cH$ one has $Jf_\infty = 0$.
Thus, $\varphi=\psi=0$ and $\Gamma$ is closable.
\end{proof}

For any generalized boundary triplet $\Pi=\{\cH,\gG_0,\gG_1\}$ we
set $A_j:=A^*\upharpoonright\ker(\gG_j)$, $j \in \{0,1\}$. The
extensions $A_0$ and $A_1$ are disjoint but not necessarily
transversal. The latter holds if and only if $\Pi$ is an ordinary
boundary triplet. In general, the extension $A_1$ is only essentially
self-adjoint.

Starting with Definition \ref{def3.1},  one easily  extends the
definitions of $\gga$-field and Weyl function
to the case of a generalized boundary triplet $\Pi$ by
analogy with Definition \ref{Weylfunc} (cf. {\cite[Definition
6.2]{DM95}}).
\bd \label{def3.2}
Let $\Pi=\{\cH,\gG_0,\gG_1\}$ be  a generalized boundary triplet
for $A^*.$   Then the operator valued functions $\gamma(\cdot)$
and $M(\cdot)$  defined by
 \begin{equation}\label{3.3A}
\gamma(z):=\bigl(\Gamma_0\!\upharpoonright\mathfrak N^*_z\bigr)^{-1}:\
\cH\to \mathfrak N_{z} \;\;\text{and}\;\;M(z):=\Gamma_1\gamma(z),
\;\;z\in\varrho(A_0),
\end{equation}
are called the (generalized) {\em $\gamma$-field} and the {\em
Weyl function} associated with the generalized
boundary triplet $\Pi$, { respectively}.
\ed

It follows from  Lemma \ref{lem3.2}(i) that $\gamma(\cdot)$ takes
values in $[\cH, \gH]$, $\ran(\gga(z))=\mathfrak N^*_{z} := \dom(A_*)
\cap \mathfrak N_z$ and it satisfies the identity similar to that of
\eqref{2.15} which shows that $\gga(z)$
is a holomorphic operator valued function on
$\varrho(A_0)$.

Further, one has $\dom (M(z))= \cH$ since $\ran\gamma(z)\subset
\dom(\gG_1),\  z\in\varrho(A_0).$ By \eqref{3.3A} $M(z)$
is closable  since $\gamma(z)$ is bounded and $\gG_1$ is closable,
by Lemma \ref{lem3.3}.  Hence, by the closed graph theorem $M(\cdot)$
takes values in $[\cH]$. Moreover, it is holomorphic on
$\varrho(A_0)$, because so is $\gamma(\cdot)$, and satisfies the
relation \eqref{mlambda}.
It follows that $\ker(\im M(z)) =\{0\}, \  z\in \C_+,$ though the
stronger condition $0 \in \varrho(\im M(i)) (\Longleftrightarrow
\ran(\gga(i))=\mathfrak N_{i})$ is satisfied if and only if $\Pi$
is an ordinary boundary triplet (in the sense of Definition
\ref{t1.9}).

In the sequel we need the following simple but useful statement.
     \begin{proposition}\label{prop3.1}
Let $\Pi=\{\cH,\gG_0,\gG_1\}$ be an ordinary boundary triplet for
$A^*$, $M(\cdot)$ the corresponding Weyl function,   $B=B^* \in
\cC(\cH)$ and $A_B = A^*\upharpoonright \ker(\gG_1 - B\gG_0)$. Let
$\gG^B_1 :=\gG_0$ and $\gG^B_0 := B\gG_0-\gG_1$. Then

\item[\;\;\rm (i)] $\Pi_B=\{\cH,\gG^B_0,\gG^B_1\}$  is a
generalized boundary triplet for $A^*$ such that it holds
$\dom(A_*) := \dom(\gG) := \dom(A_0) + \dom(A_B) \subseteq
\dom(A^*)$, $A^*_* = A$;

\item[\;\;\rm (ii)] the corresponding  (generalized) Weyl function
$M_B(\cdot)$ is
\bed
M_B(z)=\left(B-M(z)\right)^{-1}, \qquad z\in\C_\pm;
\eed

\item[\;\;\rm (iii)]   $\Pi_B$ is an (ordinary) boundary triplet if
and only if $B = B^* \in[\cH]$. In this case $M_B(\cdot)$ is an ordinary
Weyl function in the sense  of Definition \ref{t1.9}.
\end{proposition}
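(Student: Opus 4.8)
The plan is to verify the four defining requirements of a generalized boundary triplet (Definition \ref{def3.1}) for $\Pi_B = \{\cH,\gG^B_0,\gG^B_1\}$, then read off its Weyl function from the original one and specialize to the bounded case for (iii). First I would check the abstract Green identity. For $f,g\in\dom(\gG)$ (so that $\gG_0 f,\gG_0 g\in\dom(B)$) I expand
$(\gG^B_1 f,\gG^B_0 g)_\cH-(\gG^B_0 f,\gG^B_1 g)_\cH = (\gG_0 f,(B\gG_0-\gG_1)g)_\cH-((B\gG_0-\gG_1)f,\gG_0 g)_\cH$; the two $B$-terms combine to $(\gG_0 f,B\gG_0 g)_\cH-(B\gG_0 f,\gG_0 g)_\cH$, which vanishes since $B=B^*$, leaving exactly the right-hand side of \eqref{2.10A}, hence $(A_*f,g)-(f,A_*g)$. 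Next, since $\ker(\gG^B_0)=\ker(\gG_1-B\gG_0)=\dom(A_B)$, the reference extension attached to $\Pi_B$ is $A^B_0=A_B$, which is self-adjoint by Proposition \ref{prop2.1}(i) because $B=B^*$.

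For the domain identity and the core property in (i) I would use surjectivity of the original $\gG=(\gG_0,\gG_1)^\top$: given $\xi\in\dom(B)$ one solves $\gG_0 h=\xi,\ \gG_1 h=B\xi$, obtaining $h\in\dom(A_B)$ with prescribed $\gG_0$-value; subtracting such an $h$ from an arbitrary $f\in\dom(\gG)$ lands in $\ker(\gG_0)=\dom(A_0)$, giving $\dom(\gG)=\dom(A_0)+\dom(A_B)$. To see $\dom(\gG)$ is a core for $A^*$, fix $z\in\varrho(A_0)$ and use $\dom(A^*)=\ker(\gG_0)\dotplus\mathfrak N_z$. Because $\gamma(z)$ is, by \eqref{2.3A}, a topological isomorphism of $\cH$ onto $\mathfrak N_z$ with inverse $\gG_0\!\upharpoonright\mathfrak N_z$, the condition $\gG_0 f\in\dom(B)$ gives $\dom(\gG)=\ker(\gG_0)\dotplus\gamma(z)\dom(B)$, which is dense in $\dom(A^*)$ since $\dom(B)$ is dense in $\cH$. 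Hence $\overline{A_*}=A^*$ and $A_*^*=A$.

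The crux is surjectivity of $\gG^B_0$, and it simultaneously delivers (ii). Evaluating on $\mathfrak N^*_z=\dom(A_*)\cap\mathfrak N_z$: for $\xi\in\dom(B)$ and $f_z=\gamma(z)\xi$ one has $\gG_1 f_z=M(z)\xi$, so $\gG^B_0 f_z=(B-M(z))\xi$. Thus surjectivity of $\gG^B_0$ reduces to the invertibility of $B-M(z)$ on $\dom(B)$, and this is the main obstacle. It follows from the strict estimate built into ordinary triplets, namely $0\in\varrho(\im M(z))$, which together with $\pm\im M(z)\ge 0$ on $\C_\pm$ yields $\im M(z)\ge\varepsilon I$ for $z\in\C_+$; then $\im((B-M(z))\xi,\xi)=-(\im M(z)\xi,\xi)\le-\varepsilon\|\xi\|^2$ gives $\|(B-M(z))\xi\|\ge\varepsilon\|\xi\|$ (injectivity with closed range), while the same estimate for $(B-M(z))^*=B-M(\bar z)$ on $\C_-$ makes the adjoint kernel trivial, so the range is dense; hence $(B-M(z))^{-1}\in[\cH]$. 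Consequently $\gamma_B(z)=\gamma(z)(B-M(z))^{-1}$ maps $\cH$ onto $\mathfrak N^*_z$, so $\gG^B_0$ is surjective, and using $\gG^B_1=\gG_0$ with $\gG_0\gamma(z)=I$ we get $M_B(z)=\gG^B_1\gamma_B(z)=(B-M(z))^{-1}$, which is (ii).

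Finally, for (iii) I would observe that $\dom(\gG^B)=\dom(\gG^B_0)=\{f\in\dom(A^*):\gG_0 f\in\dom(B)\}$ equals $\dom(A^*)$ precisely when $\dom(B)=\cH$ (using surjectivity of $\gG_0$), i.e. when $B=B^*\in[\cH]$. In that case $(B-M(i))^{-1}\in[\cH]$ is boundedly invertible, so $\gamma_B(i)=\gamma(i)(B-M(i))^{-1}$ has range all of $\mathfrak N_i$ and is bounded below; applying \eqref{mlambda} to $M_B$ with $z=\zeta=i$ gives $\im M_B(i)=\gamma_B(i)^*\gamma_B(i)$ boundedly invertible, i.e. $0\in\varrho(\im M_B(i))$, which is exactly the extra condition distinguishing an ordinary boundary triplet (Definition \ref{t1.9}) from a generalized one; conversely, if $\Pi_B$ is ordinary then $\dom(\gG^B)=\dom(A^*)$ forces $B\in[\cH]$ as above.
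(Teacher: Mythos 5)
Your proof is correct; note, however, that the paper itself gives no proof of Proposition \ref{prop3.1} --- it is presented as a ``simple but useful'' fact imported from the boundary-triplet literature (\cite{DM95}) --- so your argument supplies a verification the paper omits, and it does so along the natural lines: Green's identity on $\dom(\gG)=\dom(A_0)+\dom(A_B)$ via symmetry of $B$, self-adjointness of $A^*\upharpoonright\ker(\gG^B_0)=A_B$ via Proposition \ref{prop2.1}(i), the core property via $\dom(\gG)=\dom(A_0)\dotplus\gamma(z)\dom(B)$ together with the isomorphism property of $\gamma(z)$ and density of $\dom(B)$, and the identification $\gamma_B(z)=\gamma(z)\left(B-M(z)\right)^{-1}$, $M_B(z)=\gG_0\gamma_B(z)=\left(B-M(z)\right)^{-1}$. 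Two remarks on economy. First, surjectivity of $\gG^B_0$ needs none of the invertibility analysis: since $\Pi$ is an ordinary triplet, $\gG=(\gG_0,\gG_1)^\top$ maps onto $\cH\oplus\cH$, so already $\gG^B_0\upharpoonright\dom(A_0)=-\gG_1\upharpoonright\dom(A_0)$ is onto $\cH$. Your longer route is not wasted, though, because the bounded invertibility of $B-M(z)$ --- which you derive correctly from $0\in\varrho(\im M(z))$ (valid for ordinary triplets), $B=B^*$, closedness of $B-M(z)$, and $(B-M(z))^*=B-M(\bar z)$ --- is exactly what is needed to identify $\gamma_B(z)$ and hence prove (ii). Second, in (iii) the implication ``$B\in[\cH]$ implies $\Pi_B$ ordinary'' can be had without appealing to the criterion $0\in\varrho(\im M_B(i))$, which the paper also states without proof: for bounded $B$ both maps $\gG^B_j$ are defined on all of $\dom(A^*)$, Green's identity holds there, and surjectivity of $\gG^B=(\gG^B_0,\gG^B_1)^\top$ follows from that of $\gG$, since $\gG^B f=(\eta_0,\eta_1)$ is solved by choosing $f$ with $\gG_0 f=\eta_1$ and $\gG_1 f=B\eta_1-\eta_0$. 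Your converse direction (ordinariness forces $\dom(B)=\cH$ by surjectivity of $\gG_0$, hence $B\in[\cH]$ by the closed graph theorem) is exactly right.
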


Note, an analogon  of Proposition \ref{prop2.1} does not hold for
generalized boundary triplets.
Nevertheless, since the corresponding Weyl function determines the
pair $\{A,A_0\}$ uniquely, up to unitary equivalence,  it is
possible to describe the spectral properties of $A_0$
in terms of the (generalized) Weyl function $M(\cdot).$

\section{Weyl function and spectral multiplicity}

Throughout of this section $A$ is a densely defined
simple closed symmetric operator in $\gH$ with $n_+(A) = n_-(A).$
Let $\gP = \{\cH,\gG_0,\gG_1\}$ be a generalized} boundary triplet for $A^*$,
and  { let $M(\cdot)$ be} the corresponding generalized Weyl function.
Since $M(\cdot)\in (R_{\cH})$ it admits representation
\eqref{1.2}. Since
 $A$ is densely defined (see \cite{DM95, Ma92b}), one gets $C_1 = 0$, i.e.
\bed
M(z) = C_0  + \int^\infty_{-\infty}
\left(\frac{1}{t-z} -
  \frac{t}{1+t^2}\right)d\gS_M.
\eed
\bp\la{IV.6}
Let $A$ be a densely defined, simple closed symmetric operator and
let $\gP = \{\cH,\gG_0,\gG_1\}$ be a generalized boundary triplet
for $A_* (\subseteq A^*)$, $A^*_* = A$,   and { let}
$M(\cdot)$ { be} the corresponding  Weyl function. If $E_{A_0}$ is the spectral
measure of $A_0 := A^*\upharpoonright\ker(\gG_0)$, then $\gS_M
\thickapprox E_{A_0}$ and $\gS^{ac}_M \thickapprox E^{ac}_{A_0}$.
\end{proposition}
\begin{proof}
Alongside $\gS_M(\cdot)$ we introduce the  bounded  operator measure
$\gS^0_M(\cdot),$
\bed
\gS^0_M(\gd) = \int_\gd \frac{1}{1+t^2} \;d\gS_M, \qquad
\gd \in \cB_b(\R).
\eed
Clearly, $\gS^0_M(\cdot) \thickapprox  \gS_M(\cdot)$.
  According to \cite[formula (2.16)]{ABMN05} one has
\be\label{2.15A} \gS^0_M(\gd) = \gga(i)^*E_{A_0}(\gd)\gga(i),
\qquad \gd \in \cB(\R), \ee
where $\gga(\cdot)$ is the generalized $\gga$-field of
$\gP$. Note, that though formula \eqref{2.15A}
is proved in \cite{ABMN05} for ordinary boundary triplets, the proof remains valid
for generalized boundary triplets. Due to the simplicity of $A$
one has
\bed
\span\left\{(A_0 - z)^{-1}\ran(\gga(i)): \quad z \in \C_+ \cup
\C_-\right\} = \gH.
\eed
Hence the subspace $\mathfrak{N}_i :=
\overline{{\mathfrak N}^*_i},$ where ${\mathfrak N}^*_i :=
\ran(\gga(i))$ is cyclic for $A_0$. Next, let $P_i$ be the
orthogonal projection from $\gH$ onto $\mathfrak{N}_i$.
We set $\widetilde{\gS}^0_M(\cdot) :=
P_iE_{A_0}(\cdot)\upharpoonright{\mathfrak N}_i$.

Clearly, $\widetilde{\gS}^0_M(\cdot)$ is an operator measure.
Since the linear manifold ${\mathfrak N}^*_i$ is cyclic
for $A_0$, one gets from \cite[Theorem 4.15]{MM03} that the
measures $\widetilde{\gS}^0_M$ and $E_{A_0}$ are spectrally
equivalent.

Note that $\gS^0_M(\cdot) =
\gga(i)^*\widetilde{\gS}^0_M(\cdot)\gga(i)$. Since
$\ran(\gga(i))$ is dense in ${\mathfrak N}_i$, the latter
 yields $\gS^0_M \thicksim \wt\gS^0_M$. Let $D\in
{\mathfrak S}_2(\cH)$ and  $\ker(D) = \ker(D^*) =
\{0\}$. We set
\bed
\Psi_{D^*\gS^0_MD}(t) := \frac{dD^*\gS^0_M(t)D}{d\rho(t)}
\quad \mbox{and} \quad
\Psi_{\wt D^*\wt \gS^0_M\wt D}(t) :=
\frac{d\widetilde{D}^*\widetilde{\gS}^0_M(t)\widetilde{D}}{d\rho(t)}
\eed
where $\rho$ is a scalar measure such
that $\widetilde{\gS}^0_M  \thicksim \rho$ and
$\widetilde{D} := \gga(i)D: \cH \longrightarrow
{\mathfrak N}_i$. We note that $\ker(\widetilde{D}) =
\ker(\widetilde{D}^*) = \{0\}$. By  \cite[Corollary
4.7]{MM03} we have
\bed
N_{\gS^0_M}(t) = \rank(\Psi_{D^*\gS^0_MD}(t))
\quad \mbox{and} \quad
N_{\wt \gS^0_M}(t) = \rank(\Psi_{\wt D^*\wt \gS^0_M\wt D}(t))
\eed
for a.e. $t \in \R$ ($\mod(\rho)$). Since $\Psi_{D^*\gS^0_MD}(t) =
\Psi_{\wt D^*\wt \gS^0_M\wt D}(t)$ for a.e. $t \in \R$
($\mod(\rho)$) we get $N_{\gS^0_M}(t) =
N_{\widetilde{\gS}^0_M}(t)$ for a.e. $t \in \R$ ($\mod(\rho)$).
Hence $\widetilde{\gS}^0_M$ and $\gS^0_M$ are spectrally
equivalent. Since $\wt\gS^0_M$ and $E_{A_0}$ are spectrally
equivalent the measures ${\gS}^0_M$ and $E_{A_0}$ are spectrally
equivalent. This  proves  the first statement.

The  second statement follows from the equality
$\gS^{0,ac}_M(\gd) = \gga(i)^*E^{ac}_{A_0}(\gd)\gga(i),$\  $\gd
\in \mathcal \cB(\R)$ where $\gS^{0,ac}_M$ is the absolutely
continuous part of $\gS^0_M$.
\end{proof}

The proof of Proposition \ref{IV.6} leads to the following
computing procedure for $N_{\gS^{ac}_M}(t)$: choosing
$D \in {\mathfrak S}_2(\cH) $ { such that}
$\ker(D) = \ker(D^*) = \{0\}$ we introduce
the sandwiched Weyl function $M^D(\cdot)$,
\bed
(M^D)(z) := D^*M(z)D, \quad z \in \C_+.
\eed
It turns out that the limit $(M^D)(t) :=
\slim_{y\to+0}M^D(t+iy)$ exists for a.e. $t \in \R$. We define
in accordance with \eqref{2.13a} the function
$d_{M^D}(\cdot): \R \to \N\cup\{\infty\}$,
\bed
d_{M^D}(t) := \rank(\im(M^D(t))) =
\dim(\ran(\im(M^D(t))))
\eed
which is well-defined for a.e. $t \in \R$.

For a measurable non-negative function $\xi: \R \longrightarrow
\R_+$ defined for a.e. $t \in \R$ we introduce its support
$\supp(\xi) := \{t \in \R: \  \xi(t)>0\}$. By $\cl_{ac}(\cdot)$
  we denote the absolutely continuous closure of a Borel set of
  $\R$., cf. Appendix.
\bp\la{III.8}
Let $A$ be as in Proposition \ref{IV.6}, { let} $\gP =
\{\cH,\gG_0,\gG_1\}$  { be} a generalized boundary triplet for
$A_* (\subseteq A^*)$, $A^*_* = A$,  and { let} $M(\cdot)$
{ be} the corresponding Weyl function. Further, let
$E_{A_0}(\cdot)$ be the spectral measure of $A_0 =
A_*\upharpoonright\ker(\gG_0)= A_0^*.$ If $D \in \gotS_2(\cH)$
{ and} satisfies $\ker( D) = \ker(D^*) = \{0\}$, then
$N_{E^{ac}_{A_0}}(t) = d_{M^D}(t)$ for a.e. $t \in \R$ and
$\gs_{ac}(A_0) = \cl_{ac}(\supp(d_{M^D}))$.

If, in addition, the limit $M(t) := \slim_{y\to+0}M(t+iy)$ exists
for a.e. $t \in \R$, then $N_{E^{ac}_{A_0}}(t) = d_M(t)$ for a.e.
$t \in \R$ and $\gs_{ac}(A_0) = \cl_{ac}(\supp(d_M))$.
\end{proposition}
\begin{proof}
The relation $N_{E^{ac}_{A_0}}(t) = d_{M^D}(t)$ follows
from Theorem \ref{II.4} and Theorem \ref{IV.6}. Further, let
$\{g_k\}^N_{k=1}$, $1 \le N \le \infty$, be a total set in $\cH$.
We set $h_k := Dg_k$. One easily verifies that
$\{h_n\}^N_{n=1}$ is a total set. We set $M_{h_n}(z) :=
(M(z)h_n,h_n)$, $z \in \C_+$. Clearly, $M_{h_n}(z)$ is
$R$-function for {\green every} $n \in \{ 1,2,\ldots,N\}$  and
\bed
M_{h_n}(t) := \lim_{y\to+0}M_{h_n}(t+iy) = (M(t)h_n,h_n)
\eed
exists for a.e. $t \in \R$. Set
\bed
\gO_{ac}(M_{h_n}) := \{t \in \R: \; 0 < \im(M_{h_n}(t)) <
\infty\}.
\eed
Combining  \cite[Proposition 4.1]{BMN02} with Lemma \ref{III.7} we
obtain
\be\la{3.23}
\gs_{ac}(A_0) =
\overline{\bigcup^N_{k=1}\cl_{ac}(\gO_{ac}(M_{h_n}))} =
\cl_{ac}\left(\bigcup^N_{k=1}\gO_{ac}(M_{h_n})\right).
 \ee
If $t \in \supp(d_{M^D})$, then $\im((M^D)(t)) \not= 0$. Hence
$t \in \gO_{ac}(M_{h_n})$ for some  $n \in \{ 1,2,\dots,N\}.$
Therefore $\supp(d_{M^D}) \subseteq
\bigcup^N_{k=1}\gO_{ac}(M_{h_n})$ which yields
\be\la{3.24}
\cl_{ac}(\supp(d_{M^D})) \subseteq
\cl_{ac}\left(\bigcup^N_{k=1}\gO_{ac}(M_{h_n})\right).
\ee
Conversely, if $t \in \gO_{ac}(M_{h_n}) \cap \cE_{M^D}$,
where $\cE_{M^D} := \{t \in \R: \;
\exists\;(M^D)(t)\}$, for some $n$, then $0 < d_{M^
D}(t)$. Hence $\gO_{ac}(M_{h_n}) \cap \cE_{M^D} \subseteq
\supp(d_{M^D})$ which yields
$\bigcup^N_{k=1}\gO_{ac}(M_{h_n}) \cap \;\cE_{M^D}
\subseteq \supp(d_{M^D})$. Hence
\begin{equation*}
\cl_{ac}\left(\bigcup^N_{k=1}\gO_{ac}(M_{h_n}) \cap \;
\cE_M\right) =
\cl_{ac}\left(\bigcup^N_{k=1}\gO_{ac}(M_{h_n})\right) \subseteq
\cl_{ac}(\supp(d_{M^D}))
     \end{equation*}
Combining this equality with  \eqref{3.23} and \eqref{3.24} we
obtain $\gs_{ac}(A_0) = \cl_{ac}(\supp(d_{M^D}))$.
\end{proof}
\bc\la{IV.9}
Let $A$ be as in Proposition \ref{III.8},  { let
}$\Pi=\{\cH,\gG_0,\gG_1\}$ {  be} an ordinary boundary triplet
for $A^*$ and { let} $M(\cdot)$ { be} the corresponding
Weyl function. Further, let $B= B^* \in \cC(\cH)$, $A_B =
A^*\upharpoonright \ker(\gG_1 - B\gG_0)$ and $E_{A_B}(\cdot)$ the
spectral measure of $A_B$.  If $D\in \gotS_2(\cH)$ { and
satisfies } $\ker(D) = \ker( D^*) = \{0\}$, then
$N_{E^{ac}_{A_B}}(t) = d_{M_B^D}(t)$ for a.e. $t\in \R$ and
$\gs_{ac}(A_B) = \cl_{ac}(\supp(d_{M_B^D}))$.

If, in addition, the limit $M_B(t) := \slim_{y\to+0}M_B(t
+ iy)$ exists for a.e. $t \in \R$, then
$N_{E^{ac}_{A_B}}(t) = d_{M_B}(t)$ for a.e. $t \in \R$ and
$\gs_{ac}(A_B) = \cl_{ac}(\supp(d_{M_B}))$.
\ec
\begin{proof}
By Proposition \ref{prop3.1}  $\gP_B = \{\cH,\gG^B_0,\gG^B_1\}$ is
a generalized boundary triplet for $A_*  :=
A^*\upharpoonright\dom(A_*)$, $\dom(A_*)= \dom(A_0) + \dom(A_B)$,
and  $M_B(z) = (B - M(z))^{-1}$, $z \in \C_+$, the corresponding
generalized Weyl function. Clearly, $A_B =
A_*\upharpoonright\ker(\gG^B_0)$. It remains to apply
Proposition \ref{III.8}.
\end{proof}

This leads to the following theorem.
\bt\la{IV.10}
{ Let $A$ be a densely defined closed symmetric operator, let
$\gP = \{\cH,\gG_0,\gG_1\}$ be an ordinary boundary triplet for
$A^*$ and let $M(\cdot)$ be the corresponding Weyl function.
Further, let $A_B := A^*\upharpoonright\ker(\gG_1-B\gG_0)$, $B =
B^* \in \cC(\cH)$, and $E_{A_B}(\cdot)$ the spectral measure of
$A_B$. Let $D \in \gotS_2(\cH)$} { and } $\ker(D) = \ker(D^*)
= \{0\}$. Then

\item[{\rm\;\;(i)}] { $A_0E^{ac}_{A_0}(\cD)$ is
a part of $A_BE^{ac}_{A_B}(\cD)$} if and only if
   $d_{M^{ D}}(t) \le d_{M_B^{ D}}(t)$ for
a.e. $t \in \cD.$

\item[{\rm\;\;(ii)}] $A_0E^{ac}_{A_0}(\cD)$
and $A_BE^{ac}_{A_B}(\cD)$  are unitarily equivalent if and only if
$d_{M^D}(t) = d_{M_B^D}(t)$ for a.e. $t \in \cD.$
\et
\begin{proof}
Without loss of generality we assume that $A$ is simple
since the self-adjoint part of $A$ is contained as a direct
summand in any self-adjoint extension of $A$. We to show
that $\gS^{ac}_M(\gd) = 0$ for some $\gd \in \cB_b(\R)$ if and
only if $d_{M^D}(t) = 0$ for a.e $t \in \gd$. By the
Berezanskii-Gel'fand-Kostyuchenko theorem \cite{Ber68,BS87}
the derivative $\Psi_{{ D}^*\gS_M{ D}}(t) :=\frac{d}{dt}{
D}^*\gS(t){D}$ exists and the relation
\bed {D}^*\gS^{ac}_M(\gd\cap\cD){ D} = \int_{\gd\cap\cD} \Psi_{{
D}^*\gS_M{D}}(t) dt, \quad \gd \in \cB_b,
 \eed
holds. One has $\gS^{ac}_M(\gd) = 0$ if and only if $\Psi_{
D^*\gS_MD}(t) = 0$ for a.e. $t \in \gd$. Since $d_{M^{
D}}(t) = \dim(\ran(\Psi_{{D}^*\gS_M{ D}}(t)))$ for a.e. $t \in \R$
we find that $\gS^{ac}_M(\gd\cap\cD) = 0$ if and only if $d_{M^{
D}}(t) = 0$\  for a.e. $t \in \gd\cap\cD$. Similarly we prove that
$\gS^{ac}_{M_B}(\gd\cap\cD) = 0$ if and only if $d_{{ D}^*{M_B}{
D}}(t) = 0$ for a.e. $t \in \gd\cap\cD$.

(i) Since by assumption $d_{M^{D}}(t) \le d_{M_B^{ D}}(t)$ for
a.e. $t \in \cD,$  one gets by the considerations above that
$\gS^{ac}_M(\gd\cap\cD) \prec \gS^{ac}_{M_B}(\gd\cap\cD)$. By
Theorem \ref{II.4} we have $N_{\gS^{ac}_M}(t) = d_{M^{D}}(t)$ and
$N_{\gS^{ac}_{M_B}}(t) = d_{M_B^{D}}(t)$ for a.e $t \in \R$. Hence
$N_{\gS^{ac}_M}(t) \le N_{\gS^{ac}_{M_B}}(t)$ for a.e. $t \in \cD$
which proves that the restricted measures
$\gS^{ac}_M(\cdot\cap\cD)$ is spectrally subordinated to
$\gS^{ac}_{M_B}(\cdot\cap\cD)$, cf. Definition \ref{II.3}(i).
Since $\gS^{ac}_M \thickapprox E^{ac}_{A_0}$ and $\gS^{ac}_{M_B}
\thickapprox E^{ac}_{A_B},$ by Theorem \ref{IV.6}, we get that
$E^{ac}_{A_0}(\cdot\cap\cD)$ is spectrally subordinated to
$E^{ac}_{A_B}(\cdot\cap\cD)$. Applying Theorem \ref{II.4a}(i) we
complete the proof.

(ii) If $d_{M^{ D}}(t) = d_{{ D}^*{M_B}{ D}}(t)$ for a.e. $t \in
\cD,$ then $\gS^{ac}_M(\cdot\cap\cD) \thicksim
\gS^{ac}_{M_B}(\cdot\cap\cD)$. By Theorem \ref{II.4},
$N_{\gS^{ac}_M}(t) = d_{M^D}(t)$ and $N_{\gS^{ac}_{M_B}}(t)
= d_{M^D_B}(t)$ for a.e $t \in \R$ which implies that
the operator measures $\gS^{ac}_M(\cdot\cap\cD)$ and
$\gS^{ac}_{M_B}(\cdot\cap\cD)$ are spectrally equivalent, cf.
Definition \ref{II.3}(ii). By Theorem \ref{IV.6},
$E^{ac}_{A_0}(\cdot\cap\cD)$ and $E^{ac}_{A_B}(\cdot\cap\cD)$ are
spectrally equivalent.  Applying Theorem \ref{II.4a}(ii)
we prove that the absolutely continuous parts
$A_0E^{ac}_{A_0}(\cD)$ and $A_BE^{ac}_{A_B}(\cD)$ are unitarily
equivalent.
\end{proof}

Theorem \ref{IV.10} reduces the problem of unitary
equivalence of $ac$-parts  of certain self-adjoint extensions of
$A$ to investigation of the functions $d_{M^{D}}(\cdot)$ and
$d_{M_B^{D}}(\cdot).$
\bc
Let $A$ be as in Theorem \ref{IV.10}. { If the self-adjoint
extensions { $\wt A$ and $\wt A'$ of $A$ are $ac$-minimal,
then their $ac$-parts are unitarily equivalent.}}
\ec

\section{Unitary equivalence}

\subsection{Preliminaries}

In what  follows we assume that $A$ is a densely defined
simple closed symmetric operator in $\gotH$. By $A_0$ we denote a
 self-adjoint extension of $A$ which is fixed.
Alongside $A_0$ we consider  $\wt A = \wt A^*\in \Ext_A$.
Usually we assume that
\be\la{5.0}
(\wt A - i)^{-1} - (A_0 - i)^{-1} \in \gotS_\infty(\gotH).
\ee
It is known (see \cite{DM91} that there exists a boundary
triplet $\gP := \{\cH,\gG_0,\gG_1\}$ for $A^*$ such that $A_0 :=
A^*\upharpoonright\ker(\gG_0)$. Of course, the boundary triplet
$\gP$ is not uniquely determined by the assumption $A_0 :=
A^*\upharpoonright\ker(\gG_0)$. { If
$\Pi_1$ and $\Pi_2$ are two such boundary triplets of $A^*$, then
their Weyl functions $M_1(\cdot)$ and $M_2(\cdot)$ are related by \eqref{2.11} (cf.
\cite{DM91})}.

{ Fix  a boundary triplet $\gP := \{\cH,\gG_0,\gG_1\}$ for
$A^*$ such that $A_0 = A^*\ker(\gG_0)$.  By Proposition \ref{prop2.1} there is a linear relation
$\gT= \gT^* \in \wt\cC(\cH)$ such that $\wt A = A_\gT$.
In general,  $\gT$ is not the graph of an operator,
$\gT \not \in \cC(\cH)$. However, let us assume that $\gT$ is
the graph an operator $B$.
By Proposition} \ref{prop2.9} { we get} that $(B - i)^{-1} \in
\gotS_\infty(\gotH)$, that means, that $B$ is a self-adjoint
operator with discrete spectrum. Hence, $\varrho(B)\cap \R
\not = \emptyset.$  In what follows we assume without loss of
generality that $0 \in \varrho(B).$ According to the polar
decomposition { we have} $B^{-1} = DJD$ where
\be\label{4.2A} D := |B|^{-1/2} = D^* \in \gotS_\infty(\gotH)
\quad \mbox{and} \quad J := \sign(B) = J^* = J^{-1}. \ee
Clearly, $D \in \gotS_\infty(\cH)$, $\ker(D) = \{0\}$,
and   $D$ commutes with $J$. We set
\be\label{4.2}
G(z) := J - M^D(z), \quad z \in \C_+,
\ee
{ $M^D(z) := DM(z)D$,  $z \in \C_+$, as usually}. Obviously,
${ M^D(z)}$ and $-G(z)$ are $R$-functions. We have $\ker(G(z))
= \{0\}$ for every $z \in \C_+$. Indeed, if $G(z)f = 0$, then $Jf
= DM(z)Df.$ Hence, $\im(M(z)Df,Df) = \im(Jf,f) = 0$ which yields
$Df = 0$ or $f = 0$. Since $J$ is a Fredholm operator satisfying
$\ker(J) = \ker(J^*) = \{0\}$ we find by \cite[Theorem 5.26]{Ka76}
that $G(z)$ is boundedly invertible for $z \in \C_+$. We set $T(z)
:= G(z)^{-1}$, $z \in \cC_+$ { and note that $T(\cdot)$ is  a
Nevanlinna function because so is  $M^D(\cdot).$} Moreover,  $T(z)
- J = T(z)M^D(z)J\in \gotS_\infty(\gotH)$ for $z \in \C_+$.

\subsection{Trace class perturbations: Rosenblum-Kato theorem}

Here we apply the Weyl function technique in order to
obtain a simple and quite different proof of the classical
Rosenblum-Kato theorem.
In fact, we prove a generalization of the Rosenblum-Kato theorem
due to Birman and Krein \cite{BirKrei62} which includes
non-additive (trace class) perturbations. Our proof demonstrates
the main idea of the proof of more general { results} contained in the
next subsection.
\bt\la{V.1}
Let  $A_0$ and $\wt A$  be  self-adjoint operators in $\gH$
satisfying
\be\label{4.3}
(\wt A - i)^{-1} - (A_0 - i)^{-1} \in \gotS_1(\gotH).
\ee
Then the absolutely continuous parts $\wt A^{ac}$ and $A^{ac}_0$
of $\wt A$ and $A_0$, respectively,  are unitarily equivalent.
\et
\begin{proof}
To include the operators $\wt A^{ac}$ and $A^{ac}_0$ in the
framework of extension theory we set
\bed
A:= { A_0}\upharpoonright \dom(A),\quad  \dom(A)=\{f\in \dom({\wt A})\cap
\dom(A_0):\  A_0f = {\wt A}f\}.
\eed
{ Obviously, we have $A := \wt A\upharpoonright\dom(A)$}.
Clearly, $A$ is a closed symmetric operator in $\gH$ with
equal deficiency indices and  $A_0,\wt A\in \Ext_A$.

{  First we assume that $A$ is densely defined}. Let
$\Pi=\{\cH,\gG_0,\gG_1\}$ be a (ordinary) boundary triplet for
$A^*,$  such that $A_0 := A^*\upharpoonright\ker(\gG_0)$, and
{  $M(\cdot)$ } the corresponding  Weyl function. { By
definition $\wt A = \wt A^*\in \Ext_A$
and $\wt A$ and $A_0$ are disjoint, that is, $\dom(A) = \dom(A_0)
\cap \dom(\wt A)$. Hence, by Proposition \ref{prop2.1}(ii), there
exists  an operator $B = B^*\in \cC(\cH)$ such that $\wt A = A_B.$}

It follows from  \eqref{2.30} and \eqref{4.3} that $M_B(z) := (B -
M(z))^{-1} \in \gotS_1(\cH)$ for $z \in \C_+$.  In accordance with
\cite[Lemma 2.4]{BirEnt67}, see also \cite{Yaf92},  the limits
$M_B(t) := \lim_{y\to+0}M_B(t + iy)$ exist in $\gotS_2(\cH)$,  for
a.e $t \in \R$. By Theorem \ref{IV.10} it is { it suffices}
to calculate the multiplicity function $d_{M_B}(t) :=
\rank(M_B(t)) = \dim(\ran(\im(M_B(t))))$.

{ It follows from  \eqref{4.2A} and \eqref{4.2} that}
\begin{eqnarray}
\lefteqn{
T(z) = G(z)^{-1} = \bigl(J - M^D(z)\bigr)^{-1} = \bigl(J - D M(z)D \bigr)^{-1} }\\
& & = D^{-1}\bigl(D^{-1}JD^{-1}-M(z)\bigr)^{-1}D^{-1}
=|B|^{1/2}\bigl(B-M(z)\bigr)^{-1} |B|^{1/2},  \;\;
  z \in \C_+.
\nonumber
\end{eqnarray}
{ Combining this relation with  \eqref{4.2A} yields}
  \bed
M_B(z) := (B - M(z))^{-1} = DT(z)D, \qquad
  z \in \C_+.
  \eed
In turn,  this equality implies
      \be\label{4.5}
\im(M_B(z)) = DT(z)^*\im({ M^D(z)})T(z)D, \qquad z \in
\C_+.
      \ee
{ Moreover, since ${ M^D(z)} \in \gotS_1(\cH)$ and $T(z)
- J \in \gotS_1$ for $z \in \C_+$, by \cite[Lemma 2.4]{BirEnt67}
(see also \cite{Yaf92})}, {  for a.e $t \in \R$ and $y\to 0$
there exist the limits $M^D(t)$ and $T(t)$ in $\gotS_2(\cH)$-norm
of the  Nevanlinna operator functions ${ M^D((t+iy))}$ and
$T(t+iy)$, respectively.}
Therefore passing to the limit in \eqref{4.5} as $y\to 0$ we get
\be\la{5.3}
\im(M_B(t)) = {D}T(t)^*\im({ M^D(t)})T(t){D}\quad \text{for
a.e.}\quad t \in \R.
\ee
 Therefore we find
\bea\la{5.3a}
\lefteqn{
d_{M_B}(t) = \dim(\ran(\im(M_B(t))))  }\\
& &  = \dim(\ran(\sqrt{\im(M_B(t))}\,)) =
\dim(\ran(\sqrt{\im({ M^D(t)})}T(t)D)). \nonumber \eea
 Since $(J -
{ M^D(t)})T(t) = T(t)(J - { M^D(t)}) = I$ for a.e. $t \in \R,$ we find
$\ran(T(t)) = \cH$ for a.e. $t \in \R$. Combining this relation
with  $\overline {\ran(D)} = \cH$ and  \eqref{5.3a}  we obtain
\be\la{5.3b}
d_{M_B}(t) = \dim(\ran(\sqrt{\im({ M^D(t)})}\,)) =
\dim(\ran(\im({ M^D(t)}))) = d_{M^D}(t)
\ee
for a.e. $t \in \R$. { Applying} Theorem \ref{IV.10}(ii) we
complete this part of the proof.

{ If $A$ is not densely defined} { one can repeat the
above reasonings  applying only}  the boundary triplet technique
for non-densely defined symmetric operators developed in
\cite{DM95, Ma92b}. It turns out that the proof above can  easily
be carried over to this case.
\end{proof}

In the following  corollary we show that in  proving of
unitary equivalence of  $A_0$ and $\wt A\in \Ext_A$ it suffices to
{ restrict the consideration to disjoint extensions}.
\bc\la{V.1A}
Let $A$ be a densely defined closed symmetric operator in $\gH,$
{ let $\Pi=\{\cH,\gG_0,\gG_1\}$ be}  an ordinary boundary triplet for
$A^*$, and { let $M(\cdot)$ be} the corresponding Weyl function. Let also
$A_0 := A^*\upharpoonright\ker(\gG_0)$ and  $\cD \in \cB(\R)$.

\item[\rm\;\;(i)] If $A^{ac}_0E_{A_0}(\cD)$ is a part { of}
${\wt A}^{ac} E_{\wt A}(\cD)$ for any extension $\wt A = {\wt
A}^*\in \Ext_A$ disjoint with $A_0,$
then $A^{ac}_0E_{A_0}(\cD)$ is a part { of} $\wt A^{ac}E_{\wt
A}(\cD)$ for any extension $\wt A = \wt A^*\in \Ext_A.$

\item[\rm\;\;(ii)] If $A^{ac}_0E_{A_0}(\cD)$ is unitarily
equivalent to  $\wt A^{ac} E_{\wt A}(\cD)$
for any extension $\wt A = \wt A^*\in \Ext_A$ disjoint with
$A_0$, then  $A^{ac}_0E_{A_0}(\cD)$ is unitarily
equivalent to the absolutely continuous part  $\wt A^{ac}E_{\wt
A}(\cD)$ of any  extension $\wt A = \wt A^*\in \Ext_A.$
\ec
\begin{proof}
By Proposition \ref{prop2.1} an extension  $\wt A\in
\Ext_A$ which is not disjoint with $A_0$ admits a representation
$\wt A_\gT$ with  $\gT= \gT^*\in \wt\cC(\cH) \setminus \cC(\cH).$
{ However}, $\gT$  admits a decomposition $\cH = \cH_{\rm op} \oplus
\cH_\infty,$ \ $\gT = \gT_{\rm op} \oplus \gT_{\infty}$ where
$\gT_{\rm op}$ is the graph of the operator $B_{\rm op}= B_{\rm
op}^*\in \cC(\cH_{\rm op})$ (cf. Section 2). Denoting by $\pi_{\rm
op}$ the orthogonal projection from $\cH$ onto $\cH_{\rm op}$ and
$M_{\rm op}(z) := \pi_{\rm op}M(z)\upharpoonright\cH_{\rm op}$, we
get $(\gT - M(z))^{-1} = (B_{\rm op} - M_{\rm op}(z))^{-1}\pi_{\rm
op}.$ Therefore formula \eqref{2.30} takes the form
\bed (A_\gT - z)^{-1} - (A_0 - z)^{-1} = \gga(z)(B_{\rm op} -
M_{\rm op}(z))^{-1}\pi_{\rm op}\gga(\overline{z})^*, \quad z \in
\C_\pm. \eed
Choose an operator $B_\infty = B_\infty^* \in
\cC(\cH_\infty)$ such that $(B_\infty - i)^{-1} \in
\gotS_1(\cH_\infty)$ and put $B = B_{\rm op} \oplus B_\infty.$ It
follows from Proposition \ref{prop2.9} that
\bed
(A_\gT - z)^{-1} - ({ A_B} - z)^{-1} \in \gotS_1(\gotH),
\eed
since  $(B_\infty - i)^{-1} \in \gotS_1(\cH_\infty).$
By Theorem \ref{V.1} the absolutely continuous parts $A^{ac}_\gT$
and $A^{ac}_B$ of  $A_\gT$ and $A_B$, respectively, are
unitarily equivalent.

(i) Since by assumption $A^{ac}_0E_{A_0}(\cD)$ is a part
of $A^{ac}_B E_{A_B}(\cD)$ and $A^{ac}_B$ is unitarily equivalent
to $A_\gT^{ac}$ we get that $A^{ac}_0E_{A_0}(\cD)$ is a
part of $A^{ac}_\gT E_{A_\gT}(\cD)$.

(ii) Since, by assumption, $A^{ac}_0E_{A_0}(\cD)$ is unitarily
equivalent to $A^{ac}_BE_{A_B}(\cD)$ and $A^{ac}_B$ is unitarily
equivalent to $A_\gT$, we get that $A^{ac}_0E_{A_0}(\cD)$ is
unitarily equivalent to $A^{ac}_\gT E_{A_\gT}(\cD)$.
\end{proof}

\subsection{Compact non-additive perturbations}

Here  we  generalize the Rosenblum-Kato theorem for the case of
compact perturbations. To this end we assume that the maximal
normal function
\bed
m^+(t) := \sup_{0<y\le 1}\|M(t+iy)\|
\eed
is finite for a.e. $t \in \R$. This is the case if and only if the
normal limits $M(t) := \wlim_{y\to+0}M(t + iy)$ exist and are
bounded operators for a.e. $t \in \R$. Indeed, let  $D = D^*$ be a
Hilbert-Schmidt operator such that $\ker(D) = \{0\}$ and
let $M^D(z) := DM(z)D$, $z \in \C_+$. Since the limit
$M^D(t) := \olim_{y\to+0}M^D(t + iy)$ exists and is a bounded
operator for a.e. $t \in \R$, see \cite{BirEnt67, Yaf92}, we find
that
\bed
\lim_{y\to+0}(M(t+iy)Df,Dg) = ({ M^D(t)}f,g), \quad f,g \in \cH,
\quad \text{for a.e.} \quad  t \in \R.
\eed
Hence the limit $\lim_{y\to+0}(M(t+iy)h,k)$ exists for a.e. $t \in
\R$ and $h,k \in \ran(D)$ which yields the existence of $M(t) :=
\wlim_{y\to+0}M(t+iy)$ for a.e. $t \in \R$. The converse statement
is obvious.

Now we are ready  to prove the main result of this section.
\bt\la{V.5}
Let $A$ be a densely defined, closed symmetric operator in $\gH$,
{ let
$\Pi=\{\cH,\gG_0,\gG_1\}$ be} an ordinary boundary triplet for $A^*$,
and  { let $M(\cdot)$ be} the corresponding Weyl function. Let $\wt A$ be a
self-adjoint extension of $A$ and $A_0 :=
A^*\upharpoonright\ker(\gG_0)$. If the maximal normal function
$m^+(t)$ is finite for a.e. $t \in \R$ and condition \eqref{5.0}
is satisfied, then the absolutely continuous parts $\wt A^{ac}$
and $A^{ac}_0$ of $\wt A$ and $A_0$, respectively,  are unitarily equivalent.
\et
\begin{proof}
We divide the proof into several steps.

(i) { First we assume that  the extensions $\wt A$ and $A_0$
are disjoint, that is}  $\wt A = A_B$  where $B = B^*\in
\cC(\cH).$ We { define} the operator $D\in
\gotS_{\infty}(\cH)$ in accordance with \eqref{4.2A},  $D :=
|B|^{-1/2},$ and investigate the function $M^D(z) := M^D(z):=
DM(z)D$, $z \in \C_+$. Let $ M^D(t) := DM(t)D.$ Since the (weak)
limit $M(t) := \wlim_{y\to+0}M(t+iy)$ exists for a.e. $t \in \R,$
by \cite[Lemma 6.1.4]{Yaf92}, the following limit exists
\be\label{4.9}
\olim_{y\to+0} \|{ M^D(}t+iy) - { M^D(t)}\| = 0 \qquad\text {for a.e.} \quad
t \in \R.
\ee
Let $\gd_a := \{t \in \R: \|M(t)\| \le a\}$. Since $D = D^*$ is a
{ a non-negative }  compact operator, it  admits the spectral
decomposition
\bed
D = \sum_{l \in \N} \mu_lQ_l
\eed
where $\{\mu_l\}^\infty_{l=1}$,  is the { decreasing }
sequence of eigenvalues of $D$, $\{Q_l\}_{l \in \N}$  the
corresponding sequence of eigenprojections, $\dim\{Q_l\}<\infty.$

Since  $\mu_l\to 0$ as $l\to \infty,$ there exists a number
$l_a\in \N$ such that   $\mu_{l_a}< 1/\sqrt {2a}$. We put $\cH_1
:= \bigoplus^\infty_{l=l_a+1}Q_l\cH$ and  $\cH_2 :=
\bigoplus^{l_a}_{l=1}Q_l\cH.$   Clearly,  $\cH = \cH_1 \oplus
\cH_2$ and  $\dim(\cH_2) <\infty.$ Moreover, the operator $D$ admits
the following decomposition $D = D_1 \oplus D_2$ where
\bed
D_1 := \sum^\infty_{l = l_a + 1}\mu_lQ_l \quad
\text{and}\quad  D_2 := \sum^{l_a}_{l=1}\mu_lQ_l.
\eed
Since $\mu_{l_a}< 1/\sqrt {2a},$ we have  $\|D_1\|< 1/\sqrt {2a}.$
Hence
\be\la{5.5}
\|D_1M(t)D_1\| < 1/2, \qquad t \in
\gd_a.
\ee
Denote by $P_1$ and $P_2$ the orthogonal projections from $\cH$
onto $\cH_1$ and $\cH_2$, respectively. Note that $P_1J = JP_1$
and $P_2J = JP_2$.

(ii) Our next aim is to show that the operator function $G(z) := J
- { M^D(z)}$ is invertible in $\C_+$  and { that $T(z) :=
G(z)^{-1}$  has the limits $T(t) := \slim_{y\to+0}T(t+iy)$ for
a.e. $t \in \gd_a$}. For this purpose we consider the
decompositions
\bed { M^D(z)} := \bigg(D_iM(z)D_j\bigg)_{i,j=1}^2 :=
\begin{pmatrix}
M^D_{11}(z) & M^D_{12}(z)\\
M^D_{21}(z) & M^D_{22}(z)
\end{pmatrix}:
\ba{c}
\cH_1\\
\oplus \\
\cH_2
\ea
\longrightarrow
\ba{c}
\cH_1\\
\oplus \\
\cH_2
\ea,
\eed
$z \in \C_+$, and
\bed
G(z) = J - { M^D(z)} =
\begin{pmatrix}
J_1 - M^D_{11}(z)  & -M^D_{12}(z)\\
-M^D_{21}(z) & J_2 - M^D_{22}(z)
\end{pmatrix}, \qquad z \in \C_+,
\eed
where $J_1 := JP_1$ and $J_2 := JP_2$.

{ ${\rm (ii)_1}$} Let us prove that  $\ker(J_1 - {
M^D_{11}}(z)) = \{0\}$ for $z \in \C_+$. Indeed, from $0= J_1g -
{ M^D_{11}}(z)g = J_1g - D_1M(z)D_1g $ one gets that $0 =
\im({ M^D_{11}}(z)g,g) = (\im(M(z)D_1g,D_1g)$. Hence $0 = D_1g
= Dg$ which yields  $g = 0$. Since $0\in \varrho(J_1)$ and ${
M^D_{11}}(\cdot)\in{\gotS}_\infty,$  we obtain that the operator
$J_1 - { M^D_{11}}(z) = J_1({ I_1} -  J_1{
M^D_{11}}(z))$ is boundedly invertible for every $z \in \C_+$.
Since ${ M^D_{11}}(z)$ is a $R_{\cH_1}$-function, we get  that
${ \Xi}(z):= (J_1 - { M^D_{11}(z)})^{-1}$, $z \in \C_+$,
is a $R_{\cH_1}$-function too.

{ ${\rm (ii)_2}$} We show that for a.e. $t \in \gd_a$, $a>0$, the limit
${ \Xi}(t) := \olim_{y\to+0}{ \Xi}(t + iy)$ exists in the operator norm
and the following representation holds
\be\la{5.6}
{ \Xi}(t) = (J_1 - { M^D_{11}}(t))^{-1}. \ee
First we note that $J_1 - { M^D_{11}}(z) = J_1(I_1 -
J_1{ M^D_{11}}(z)$. Using \eqref{5.5} we get $\|J_1{ M^D_{11}}(t)\| <1$ for
$t \in \gd_a$. Hence the inverse operator $(I_1 -
J_1{ M^D_{11}}(t))^{-1}$ exists for $t \in \gd_a$. Using $(J_1 -
{ M^D_{11}}(t))^{-1} = (I_1 - J_1{ M^D_{11}}(t))^{-1}J_1$ we find that the
inverse operator $(J_1 - { M^D_{11}}(t))^{-1}$ exist for $t \in \gd_a$.
Since ${ M^D_{11}}(z)$ has limits ${ M^D_{11}}(t)$ for a.e. $t \in
\R$ one gets that $J_1{ M^D_{11}}(t) = \olim_{y\to+0}J_1{ M^D_{11}}(t + iy)$
for a.e. $t \in \R$. Fix  any such $t_0\in \gd_a.$ Then due to
estimate \eqref{5.5} there exists $\eta =\eta(t_0)$ such that
$\sup_{y \in (0,\eta)}\|J_1{ M^D_{11}}(t_0 + iy)\| \le 1/2$. Therefore,
the family $\{\|(I_1 - J_1{ M^D_{11}}(t_0 + iy))^{-1}\|\}_{y \in
(0,\eta)}$ is uniformly bounded for any fixed $t_0 \in \gd_a.$
Using this fact and \eqref{4.9}   we can pass to the limit as $y
\to 0$ in the identity
\bead
\lefteqn{
(I_1  - J_1{ M^D_{11}}(t_0 + iy))^{-1} -
(I_1 - J_1{ M^D_{11}}(t_0))^{-1}  }\\
& &  \hspace{-5mm}
=(I_1  - J_1{ M^D_{11}}(t_0 + iy))^{-1}
(J_1{ M^D_{11}}(t_0 + iy)) -  J_1{ M^D_{11}}(t_0))(I_1 -
J_1{ M^D_{11}}(t_0))^{-1}.
\eead
We obtain  $\olim_{y\to+0}((I_1  - J_1{ M^D_{11}}(t + iy))^{-1}
= (I_1 - J_1{ M^D_{11}}(t))^{-1}$ for a.e. $t \in \gd_a$ which
yields the existence of ${ \Xi}(t) := \olim_{y\to+0}{ \Xi}(t + iy)$ and
proves representation \eqref{5.6}.

{ ${\rm (ii)_3}$} Next we  set
\bed
\gD(z) := { M^D_{22}}(z) +
{ M^D_{21}}(z)(J_1 - { M^D_{11}}(z))^{-1}{ M^D_{12}}(z), \quad
z \in \C_+.
\eed
and  show that the function $T_2(\cdot) := (J_2 -
\gD(\cdot))^{-1}$ is $R_{\cH_2}$-function.

Clearly,  $\gD(\cdot)$ is holomorphic in $\C_+$ and it acts in a
finite dimensional Hilbert space $\cH_2$. Since $\det(J_2 -
\gD(\cdot))$ is also holomorphic in $\C_+,$ the determinant
$\det(J_2 - \gD(\cdot))$ has only a discrete set of zeros in
$\C_+$. Hence the inverse operator $T_2(\cdot) := (J_2 -
\gD(\cdot))^{-1}$ exists for  $z\in \Omega \subset\C_+$
where $\C_+ \setminus\Omega$  is at most countable discrete set,
that is, $T_2(\cdot)$ is meromorphic in $\C_+$.

As we just  mentioned the inverse operator $(J_2 - \gD(z))^{-1}$
exists for   $z \in \Omega \subset \C_+$.  Choose any  $z \in
\Omega.$ Then, by the { Frobenius} formula,
\be\la{5.7}
T(z) := (J - { M^D(z)})^{-1} =
\begin{pmatrix}
T_1(z) & { \Xi}(z){ M^D_{12}}(z)T_2(z)\\
T_2(z){ M^D_{21}}(z){ \Xi}(z) & T_2(z)
\end{pmatrix}
\ee
where
\be\la{5.7A}
 T_1(z) := { \Xi}(z) +
{ \Xi}(z){ M^D_{12}}(z)T_2(z){ M^D_{21}}(z){ \Xi}(z).
       \ee
Hence
    \bed
T_2(z) = P_2T(z)\upharpoonright\cH_2, \qquad z\in \Omega.
     \eed
Since $T(\cdot)$ is a $R_{\cH}$-function, we get that $\im
{(T_2(z))}>0$ for $z\in \Omega.$ Since in addition $T_2(\cdot)$  is
meromorphic in $\C_+$, we conclude  that it is holomorphic. Thus,
 $T_2(\cdot) = (J_2 - \gD(\cdot))^{-1}$ is  $R_{\cH_2}$-function,
 too.

{ ${\rm (ii)_4}$} In this step we show that for any  $a > 0$ the
limit $T(t) := \olim_{y\to+0}T(t + iy)$ exists in the operator
norm for a.e. $t \in \gd_a$.
Since $T_2(\cdot)$ is the matrix $R_{\cH_2}$-function, the
limit $T_2(t) = \olim_{y\to+0}T_2(t+iy)$ exists for { a.e.} $t \in
\R$. Besides, \eqref{4.9} yields
\bed \lim_{y\to+0} \|{ M^D_{12}}(t+iy) - { M^D_{12}}(t)\|
= 0 \quad \text{and} \quad  \lim_{y\to+0} \|{ M^D_{21}}(t+iy)
- { M^D_{21}}(t)\| = 0 \eed
for { a.e.} $t \in \R$. { Combining}  these relations with
\eqref{5.6} and \eqref{5.7A} {  yields} the existence of the
limit $T_1(t) := \olim_{y\to+0}T_1(t+iy)$ for a.e $t \in \gd_a$.
Finally, combining all these relations with  the block-matrix
representation \eqref{5.7} we complete the proof of { (ii)}.

{ ${\rm (iii)}$} { Using the results of (ii) we are now going
  to complete the proof of the theorem}. We set
$\gd_n := \{t \in \R:\ m^+(t) \le n\}$ { and } note that
$\bigcup_{n \in \N}\gd_n$ differs from $\R$ by a set of Lebesgue
measure zero. By step { (ii)}  the limit $T(t) :=
\olim_{y\to+0}T(t + iy)$ exists for a.e. $t \in \bigcup_{n \in
\N}\gd_n$ in the operator norm. Hence the limit $T(t) :=
\olim_{y\to+0}T(t + iy)$ exist for { a.e.} $t \in \R$.
Combining this fact with  \eqref{4.9} we can pass to the limit in
the identity $(J - { M^D(}t+iy))T(t +iy) = I$ {  as $y\to
0.$} We get
 \be\label{4.13}
(J - { M^D(t)})T(t) = T(t)(J - { M^D(t)}) = I \qquad \text{ for a.e.} \quad
t \in \R
        \ee
The rest of the proof is similar to that of Theorem
\ref{V.1}. First we assume that $\wt A$ is disjoint with $A_0$,
hence, it admits a representation   $\wt A = A_B$ with $B\in
\cC(\cH)$. Therefore, setting  $M_B(\cdot) := (B-M(\cdot))^{-1}$
and assuming  without loss of generality that $0 \in \varrho(B)$
we arrive at the representation \eqref{5.3} with $D=|B|^{-1/2}$
for a.e. $t \in \R$. Moreover, \eqref{4.13} yields $\ran(T(t)) =
\cH$ for a.e. $t \in \R$. Therefore arguing as in \eqref{5.3a} and
\eqref{5.3b} we obtain
\bead
\lefteqn{
d_{M_B}(t) = \dim(\ran(\sqrt{\im({ M^D(t)})}\,)) =
\dim(\ran(\sqrt{\im(M(t))}D\,)) }\\
& & = \dim(\ran(\sqrt{\im(M(t))}\,)) = \dim(\ran(\im(M(t)))) =
d_M(t) \eead
for a.e. $t \in \R$. Applying Theorem \ref{IV.10}(ii) we complete the
proof.

{  Finally,  we apply  Corollary \ref{V.1A} to extend the
proof for extensions $\wt A$ not disjoint with $A_0.$}
\end{proof}
\br
{\em

{  The result as well as the proof remains valid} if $A$ is
{ non-densely} defined. In this case {  it suffices to
use} the boundary triplet technique for non-densely defined
operators developed in \cite{DM95,Ma92b}, cf. proof of Theorem
\ref{V.1}. However, the assumptions on the Weyl function are
indispensable.
}
\er

{ The following result
is immediate from Theorem \ref{IV.10}(ii) and Theorem \ref{V.5}.}
\bc\la{V.6}
Let the assumptions of Theorem \ref{V.5} be satisfied and let
\be
\cF := \{t \in \R: m^+(t) < \infty\}.
\ee
If condition \eqref{5.0} holds, then the parts $\wt A^{ac} E_{\wt
A^{ac}}(\cF)$ and $A^{ac}_0 E_{A^{ac}_0}(\cF)$ of $\wt A$ and
$A_0$, respectively, are unitarily equivalent.
\ec
\br\la{V.7}
{\em
{ Let us} define  the invariant maximal normal function

\be\label{4.12A}\hspace{-1.45mm}
\gotm^+(t) := \sup_{y\in(0,1]}
\left\|{\im(M(i))}^{-1/2}\left(M(t+iy) -
\re(M(i))\right){\im(M(i))}^{-1/2}
\right\|,
\ee
for $t \in \R$. For Weyl functions one easily proves that
$m^+(t)$ is finite if and only if $\gotm^+(t)$ is finite.

\item[\;\;(i)] The quantity $\gotm^+(t)$ has the advantage that it is
invariant: Let $A$ be a densely defined closed symmetric operator,
$\gP = \{\cH,\gG_0,\gG_1\}$  a boundary triplet for $A^*,$ and
$M(\cdot)$ the corresponding  Weyl function. Further, let $\wt\gP
= \{\wt\cH,\wt\gG_0,\wt\gG_1\}$ be another boundary triplet for
$A^*$ with the Weyl function $\wt{M}(\cdot)$ and let $A_0 :=
A^*\upharpoonright\ker(\gG_0) = A^*\upharpoonright\ker(\wt\gG_0)$.
In this case $M(\cdot)$ and $\wt M(\cdot)$ are related by
\eqref{2.11} However, $\wt \gotm^+(t) = \gotm^+(t)$ for $t \in
\R$, where $\gotm^+(t)$ is obtained { by} replacing in
\eqref{4.12A} $M(\cdot)$ by $\wt M(\cdot)$.

\item[\;\;(ii)] Further,  if the Weyl function $M(\cdot)$
satisfies $M(i) = i$, then $m^+(t) = \gotm^+(t)$ for $t \in \R$.

\item[\;\;(iii)] Let $\pi$ be an orthogonal projection onto a subspace
$\wh\cH$ { of $\cH$}. If $\gotm^+(t)$ is finite, then the
invariant maximal normal function $\wh \gotm^+(t)$,  obtained from
\eqref{4.12A} replacing $M(\cdot)$ by $\wh M(\cdot) := \pi
M(\cdot)\upharpoonright \wh\cH$, is also finite and satisfies $\wh
\gotm^+(t) \le \gotm^+(t)$ for $t \in \R$.
}
\er

\section{Direct sums of symmetric operators}

\subsection{Boundary triplets for direct sums}

Let ${ S_n}$ be a closed densely defined symmetric {
operators} in $\gotH_n,$ \ $n_+({ S_n})=n_-( S_n),$ and { let
$\gP_n = \{\cH_n,\gG_{0n},\gG_{1n}\}$  be} a boundary triplet for
$S_n^*,\ n\in \N.$ Let
\be\la{6.5a}
A := \bigoplus^\infty_{n=1}{ S_n}, \quad \dom(A) :=
\bigoplus^\infty_{n=1}\dom({ S_n}).
\ee
Clearly, $A$ is a closed densely defined symmetric operator  in
the Hilbert space  $\gotH := \bigoplus^\infty_{n=1}\gotH_n$ with
$n_\pm(A)= \infty.$ Consider the direct sum
$\gP:=\oplus_{n=1}^{\infty} \gP_n=: \{\cH,\gG_{0},\gG_{1}\}$ of
(ordinary) boundary triplets defined by
\be\la{6.5}
\cH := \bigoplus^\infty_{n=1}\cH_n, \quad \gG_0 :=
\bigoplus^\infty_{n=1}\gG_{0n} \quad \mbox{and} \quad \gG_1 :=
\bigoplus^\infty_{n=1}\gG_{1n}.
\ee
{ Clearly,}
\be\la{6.5b}
A^* = \bigoplus^\infty_{n=1}{ S}^*_n, \quad \dom(A^*) =
\bigoplus^\infty_{n=1}\dom({ S}^*_n).
\ee
{ We note that the Green's identity}
\bed ( S_n^*f_n, g_n) - (f_n,  S_n^*g_n) =
(\gG_{1n}f_n,\gG_{0n}g_n)_{\cH_n} - (\gG_{0n}f_n,
\gG_{1n}g_n)_{\cH_n}, \eed
{ $f_n, g_n \in \dom( S_n^*)$, holds for every $S^*_n$, $n \in
\N$. This yields the} Green's identity \eqref{3.1} for $A_*:=
A^*\upharpoonright \dom (\gG)$, $\dom (\gG) := \dom (\gG_0) \cap
\dom (\gG_1) \subseteq \dom (A^*)$, that is, for
$f=\oplus_{n=1}^{\infty}f_n$, $g = \oplus_{n=1}^\infty g_n \in
\dom (\gG)$ we have
\begin{equation}\label{3.1AA}
(A_*f, g) - (f, A_*g) = (\gG_1f, \gG_0 g)_{\cH} -
(\gG_0 f, \gG_1 g)_{\cH}, \qquad f,g\in\dom(\gG),
\end{equation}
where $A^*$ and $\gG_j$ are defined by \eqref{6.5b} and
\eqref{6.5}, respectively. However, the Green's identity
\eqref{3.1AA} cannot be extended to $\dom(A^*)$ in general, since
$\dom(\gG)$ is smaller than $\dom(A^*)$ generically. It might even
happen that $\gG_j$ are not bounded as mappings from $\dom(A^*)$
equipped with the graph norm into $\cH.$ Counterexamples for the
direct sum $\Pi=\oplus^\infty_{n=1} \Pi_n$,  which {
does not form} a boundary triplet, { firstly} appeared in
\cite{Koch79}).

In this section we show that it is always possible to modify the
boundary triplets $\gP_n$ in such a way that a  new sequence $\wt
\gP_n= \{\cH_n,\wt\gG_{0},\wt\gG_{1}\}$ of boundary triplets for
${ S^*_n}$ satisfies the following properties:  $\wt \gP =
{ \oplus^\infty_{n=1}} \wt \gP_n$ { forms} a boundary
triplet for $A^*$ and the following relations { hold}
\be\la{5.5A}
{ \wt S_{0n}} := { S^*_n}\upharpoonright
\ker(\wt\gG_{0n}) = { S^*_n}\upharpoonright \ker(\gG_{0n}) =:
{ S_{0n}}, \quad n\in \N.
\ee
Hence $\wt A_{0} := { \oplus^\infty_{n=1} \wt S_{0n}} = {
\oplus^\infty_{n=1}S_{0n}} =: A_0.$ { We note that the}
existence of a boundary triplet $\gP'= \{\cH, \gG'_{0},
\gG'_{1}\}$ for $A^*$ satisfying $\ker(\gG'_{0})= \dom(A_0) $ is
known (see \cite{GG91, DM91}). However, we emphasize that in
applications we need a special form \eqref{6.5} of a boundary
triplet for $A^*$ because it leads to the block-diagonal form of
the corresponding Weyl function (cf. Sections 5.2, 5.3 below).

We start with a simple technical lemma.
\bl\la{VI.1}
Let $S$ be a densely defined closed symmetric operator with equal
deficiency indices,  $\gP = \{\cH,\gG_0,\gG_1\}$  a boundary
triplet for $S^*$,  and $M(\cdot)$ the corresponding Weyl
function. Then there exists  a boundary triplet $\wt{\gP} =
\{\cH,\wt{\gG}_0,\wt{\gG}_1\}$ for $S^*$ such that
$\ker(\wt{\gG}_0) = \ker(\gG_0)$ and  the corresponding Weyl
function $\wt{M}(\cdot)$ satisfies  $\wt{M}(i) = i$.
\el
\begin{proof}
Let $M(i) = Q + iR^2$ where $Q := \re(M(i)),\ R :=
\sqrt{\im(M(i))}$. We set
\be\label{3.8}
\wt{\gG}_0 := R\gG_0
\quad \mbox{and} \quad
\wt{\gG}_1 := R^{-1}(\gG_1 - Q\gG_0).
\ee
A straightforward computation shows that $\wt{\gP} : =
\{\cH,\wt{\gG}_0,\wt{\gG}_1\}$ is a boundary triplet for $A^*$.
Clearly, $\ker(\wt{\gG}_0) = \ker(\gG_0)$. The Weyl function
$\wt{M}(\cdot)$ of $\wt{\gP}$ is given by $\wt{M}(\cdot) =
R^{-1}(M(\cdot) - Q)R^{-1}$ which yields $\wt{M}(i) = i$.
\end{proof}

If $S$ is a densely defined closed symmetric operator in $\gotH$,
then by the first v. Neumann formula the direct decomposition
$\dom(S^*) = \dom(S) \stackrel{.}{+} \gotN_i \stackrel{.}{+}
\gotN_{-i}$  holds where $\gotN_{\pm i} := \ker(S^*\mp i)$.
Equipping  $\dom(S^*)$ with the inner  product
\be\la{3.8b}
(f,g)_+ := (S^*f, S^*g) + (f,g), \quad f,g \in \dom(S^*),
\ee
one obtains a Hilbert space denoted by $\gotH_+$.
The first v. Neumann formula leads to the following
orthogonal decomposition
\bed
\gotH_+ = \dom(S) \oplus \gotN_i \oplus \gotN_{-i}.
\eed
\bl\la{VI.2}
Let $S$ be as in Lemma \ref{VI.1}, let $\gP = \{\cH,\gG_0,\gG_1\}$
be a (ordinary) boundary triplet for $S^*$, and $M(\cdot)$ the
corresponding  Weyl function.  If $M(i) = i$, then the operator
$\gG: \gotH_+ \longrightarrow \cH \oplus \cH$,
$\gG:=(\Gamma_0,\Gamma_1)^\top$  is a contraction. Moreover, $\gG$
isometrically maps $\gotN := \gotN_i\oplus \gotN_{-i}$ onto $\cH$.
\el
\begin{proof}
We show that
\be\la{6.4}
\|\gG(f + f_i + f_{-i})\|^2_{\cH \oplus \cH} = \|f_i + f_{-i}\|^2_+
\ee
where $f \stackrel{.}{+} f_i \stackrel{.}{+} f_{-i} \in \dom(S)
\stackrel{.}{+} \gotN_i \stackrel{.}{+} \gotN_{-i} = \dom(S^*)$.
Indeed, { since $\dom(S) = \ker(\gG_0) \cap \ker(\gG_1)$, we
find}
\bed
\|\gG(f + f_i + f_{-i})\|^2_{\cH \oplus \cH} =
\|\gG_0(f_i + f_{-i})\|^2_\cH + \|\gG_1(f_i + f_{-i})\|^2_\cH.
\eed
Clearly,
\be\la{6.4A} \|\gG_j(f_i + f_{-i})\|^2_\cH = \|\gG_jf_i\|^2  +
2\re((\gG_jf_i,\gG_jf_{-i})) +\  \|\gG_jf_{-i}\|^2, \quad j\in
\{0,1\}. \ee
Using $\gG_1f_i = M(i)\gG_0f_i = i\gG_0f_i$ and $\gG_1f_{-i} = M(-i)\gG_0f_{-i} = -i\gG_0f_{-i}$
we obtain
\be\la{6.4B}
\|\gG_1(f_i + f_{-i})\|^2_\cH =
(\gG_0f_i,\gG_0f_i) - 2\re((\gG_0f_i,\gG_0f_{-i})) + (\gG_0f_{-i},\gG_0f_{-i})
\ee
Taking a sum of \eqref{6.4A} and \eqref{6.4B} we get
     \be\la{6.4BC}
\|\gG_0(f_i + f_{-i})\|^2_\cH + \|\gG_1(f_i + f_{-i})\|^2_\cH =
2\|\gG_0f_i\|^2_\cH + 2\|\gG_0f_{-i}\|^2_\cH.
       \ee
Combining equalities $\gG_1f_{\pm i}= \pm i \gG_0f_{\pm i}$ with
Green's  identity \eqref{2.10A}  we obtain $\|\gG_0f_i\|_\cH =
\|f_i\|$ and $\|\gG_0f_{-i}\|_\cH = \|f_{-i}\|$. { Therefore
\eqref{6.4BC} takes the form}
\be\la{6.4C}
\|\gG_0(f_i + f_{-i})\|^2_\cH + \|\gG_1(f_i + f_{-i})\|^2_\cH =
2\|f_i\|^2 + 2\|f_{-i}\|^2.
\ee
A straightforward computation shows
$\|f_i + f_{-i}\|^2_+ = 2\|f_i\|^2 + 2\|f_{-i}\|^2$
which together with \eqref{6.4C} proves \eqref{6.4}.
Since $\|f_i + f_{-i}\|^2_+ \le \|f\|^2_+ + \|f_i + f_{-i}\|^2_+ =
\|f + f_i + f_{-i}\|^2_+$, we get from \eqref{6.4} that $\gG$ is a contraction.

Obviously, $\gG$ is an isometry from $\gotN$ into $\cH \oplus
\cH$. Since $\gP$ is a boundary triplet for $S^*$,
 $\ran(\gG) = \cH \oplus \cH.$ { Hence} $\gG$
is an isometry from $\gotN$ onto $\cH \oplus \cH$.
\end{proof}

Passing to  direct sum \eqref{6.5a}, we  equip  $\dom(A^*_n)$ and
$\dom(A^*)$ with the graph's norms and obtain the Hilbert spaces
$\gotH_{+n}$ and $\gotH_+$, respectively. Clearly, the
corresponding inner products $(f,g)_{+n}$
and $(f,g)_+$ are defined by \eqref{3.8b} with $S$ replaced by
${ S_n}$ and $A$, respectively. Obviously, $\gotH_+ =
\bigoplus^\infty_{n=1} \gotH_{+n}.$
\bt\la{VI.3}
Let $\{{ S_n}\}^\infty_{n=1}$ be a sequence of densely defined
closed symmetric operators, $\dom({ S_n})\subset \gotH_n$,
$n_+({ S_n})=n_-({ S_n})$, and let ${ S_{0n}} = {
S^*_{0n}}\in \Ext_{{ S_n}}$. Further, let $A$ and $A_0$ be
given by \eqref{6.5a} and
\be\la{6.8}
A_0 := \bigoplus^\infty_{n=1}{ S_{0n}},
\ee
respectively. Then there exist boundary triplets $\gP_n :=
\{\cH_n,\gG_{0n},\gG_{1n}\}$ for ${ S^*_n}$ such that ${
S_{0n}} = { S^*_n}\upharpoonright\ker(\gG_{0n})$, $n \in \N$,
and the direct sum ${ \gP = \oplus^\infty_{n=1} \gP_n}$
defined by \eqref{6.5} forms an ordinary boundary triplet for
$A^*$ satisfying $A_0 = A^*\upharpoonright\ker(\gG_0)$. Moreover,
the corresponding  Weyl function $M(\cdot)$ and the $\gga$-field
$\gga(\cdot)$ are { given by}
\be\la{6.7}
M(z) = \bigoplus^\infty_{n=1}M_n(z)
\quad \mbox{and} \quad
\gga(z) = \bigoplus^\infty_{n=1}\gga_n(z)
\ee
where $M_n(\cdot)$ and $\gga_n(\cdot)$ are the Weyl functions and
the $\gga$-field corresponding to $\gP_n, \ n\in \N$. In addition,
the condition { $M(i) = iI$}  { holds}.
\end{theorem}
\begin{proof}
For every ${ S_{0n}} = { S^*_{0n}}\in \Ext { S_n}$
there exists a boundary triplet $\gP_n =
\{\cH_n,\gG_{0n},\gG_{1n}\}$ for $S_n^*$ such that ${ S_{0n}}
:= A^*_n\upharpoonright\ker(\gG_{0n})$ (see \cite{DM91}). By Lemma
\ref{VI.1} we can assume without loss of generality that the
corresponding Weyl function $M_{n}(\cdot)$ satisfies $M_n(i) = i$.
By Lemma \ref{VI.2} the mapping
$\gG^n:=(\Gamma_{0n},\Gamma_{1n})^\top: \gotH_{+n} \longrightarrow
\cH_n \oplus \cH_n$, is contractive { for each} $n\in \N$.
Hence $\|\gG_j\| = \sup_{n}\|\gG_{jn}\|\le 1,  j\in \{0,1\}$,
where $\gG_0$ and $\gG_1$ are defined by \eqref{6.5}. It follows
that the mappings $\gG_0$ and $\gG_1$ are well-defined on
$\dom(\gG) = \dom(A^*) = \bigoplus^\infty_{n=1}\dom({
S^*_n})$. Thus, { the Green's} identity \eqref{3.1AA} holds
for all $f,g \in \dom(A^*).$

Further, we set $\gotN_{\pm in} := \ker({ S^*_n} \mp i)$,\  $\gotN_n
:= \gotN_{in} \stackrel{.}{+} \gotN_{-in}$, $\gotN_{\pm i} :=
\ker(A^* \mp i)$ and $\gotN := \gotN_i \stackrel{.}{+}
\gotN_{-i}.$  By Lemma \ref{VI.2}  the restriction
$\gG^n\upharpoonright\gotN_n$ is an isometry from $\gotN_n,$
regarded as a subspace of $\gotH_{+n},$ onto $\cH_n \oplus \cH_n$.
Since $\gotN$ regarded as a subspace of $\gotH_+$ admits the
representation $\gotN = \bigoplus^\infty_{n=1}\gotN_n$, the
restriction $\gG\upharpoonright\gotN$, $\gG :=
\bigoplus^\infty_{n=1}\gG^n$, isometrically maps $\gotN$ onto $\cH
\oplus \cH$. Hence $\ran(\gG) = \cH \oplus \cH$. Equalities
\eqref{6.7} are immediate from  Definition \ref{Weylfunc}.
\end{proof}
\br
{\rm
Kochubei \cite{Koch79} proved that
$\Pi=\oplus^{\infty}_{n=1}\Pi_n$ forms a boundary triplet whenever
any  pair $\{{ S_n},{ S_{0n}}\},$\ ${ S_{0n}} :=
S^*_n\upharpoonright\ker(\gG_{0n})$, ${n\in\N}$, is unitarily
equivalent to { $\{S_1,S_{01}\}$}.
}
\er

Recall, that for any non-negative symmetric operator { $A$}
the set of its non-negative self-adjoint extensions {
$\Ext_A(0,\infty)$} is non-empty (see \cite{AG81, Ka76}). The set
$\Ext_A(0,\infty)$ contains the Friedrichs (the biggest) extension
$A^F$ and the Krein (the smallest) extension $A^K$. These
extensions are uniquely determined by the following extremal
property in the class $\Ext_A(0,\infty):$
\bed
(A^F + x)^{-1}\le(\wt A+x)^{-1}\le(A^K + x)^{-1},\quad x>0,\quad
\wt A\in\Ext_A(0,\infty).
\eed
\begin{corollary}\label{cor5.5}
Assume conditions of Theorem \ref{VI.3}. Let $S_n\ge 0$,
$n\in{\N},$ and let $S^F_n$ and $S^K_n$ be the { Friedrichs} and
the { Krein}  extensions of $S_n$, respectively. Then
\begin{equation}\label{5.14}
A^F=\oplus^\infty_{n=1}S_n^F
\qquad \text{and}\qquad
A^K=\oplus^\infty_{n=1} S^K_n.
\end{equation}
\end{corollary}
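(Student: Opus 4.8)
The plan is to deduce both equalities in \eqref{5.14} from the extremal characterization of the Friedrichs and Krein extensions recorded just above the corollary, namely that $A^F$ and $A^K$ are the unique elements of $\Ext_A(0,\infty)$ realizing, for every $x>0$, the smallest and the largest resolvent, respectively. First I would observe that $B_F:=\oplus_{n=1}^\infty S_n^F$ and $B_K:=\oplus_{n=1}^\infty S_n^K$ are well defined: since each $S_n\ge 0$, the extensions $S_n^F,S_n^K$ exist and are non-negative self-adjoint, so $B_F$ and $B_K$ are non-negative self-adjoint extensions of $A=\oplus_n S_n$, i.e. $B_F,B_K\in\Ext_A(0,\infty)$. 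Moreover their resolvents are block-diagonal, $(B_F+x)^{-1}=\oplus_n(S_n^F+x)^{-1}$ and $(B_K+x)^{-1}=\oplus_n(S_n^K+x)^{-1}$. Hence it suffices to prove that $B_F$ (resp. $B_K$) has the smallest (resp. largest) resolvent among all of $\Ext_A(0,\infty)$, and then to invoke the uniqueness in the extremal characterization to conclude $B_F=A^F$ and $B_K=A^K$.

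For the Friedrichs part I would use the quadratic-form variational formula
\[
\bigl((\wt A+x)^{-1}g,g\bigr)=\sup_{u}\bigl(2\re(g,u)-\gt_{\wt A}[u]-x\|u\|^2\bigr),\qquad x>0,
\]
where $\gt_{\wt A}$ is the closed form of $\wt A\in\Ext_A(0,\infty)$ and the supremum runs over its form domain. The key points are that $\dom(A)=\oplus_n\dom(S_n)$ lies in the form domain of \emph{every} non-negative extension $\wt A$, that $\gt_{\wt A}$ coincides there with $f\mapsto(Af,f)$ (because $\wt A\supseteq A$), and that $\dom(S_n)$ is a form core for $S_n^F$ by the very definition of the Friedrichs extension. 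Restricting the supremum to vectors $u\in\dom(A)$ with only finitely many non-zero components, the functional decouples across the blocks, and the blockwise suprema (taken over the cores $\dom(S_n)$) reproduce $\bigl((S_n^F+x)^{-1}g_n,g_n\bigr)$; summing over $n$ gives $\bigl((\wt A+x)^{-1}g,g\bigr)\ge\bigl((B_F+x)^{-1}g,g\bigr)$ for all $g$. Thus $(B_F+x)^{-1}\le(\wt A+x)^{-1}$ for every $\wt A\in\Ext_A(0,\infty)$, and the uniqueness of $A^F$ yields $B_F=A^F$, the first equality in \eqref{5.14}.

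The Krein part I expect to be the main obstacle, since the supremum formula above produces only lower bounds for resolvents and therefore captures $A^F$ but not $A^K$; for $A^K$ one needs the complementary upper bound, and $A^K$ is not singled out by a form-core condition on $\dom(A)$. I would instead use the Krein--Ando--Nishio description of the Krein extension by its form, $\gt_K[u]=\sup\{|(u,Av)|^2/(Av,v):v\in\dom(A),\ (Av,v)>0\}$, which manifestly commutes with orthogonal direct sums: for $u=\oplus_n u_n$ the supremum over $v=\oplus_n v_n\in\dom(A)$ decouples into $\sum_n$ of the blockwise suprema, so $\gt_K=\oplus_n\gt_{K,n}$ and hence $A^K=\oplus_n S_n^K$, the second equality in \eqref{5.14}. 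Alternatively, staying inside the boundary-triplet framework, I would apply Theorem \ref{VI.3} with $\Pi_n$ chosen so that $S_{0n}=S_n^F$: the resulting block Weyl function $M(\cdot)=\oplus_n M_n(\cdot)$ inherits the limiting behaviour of its blocks at the left endpoint of the gap, so that the two extremal extensions $A^F=A_0$ and $A^K=A_{M(-0)}$ are automatically block-diagonal with the asserted summands. The delicate point in either route is the interchange of the direct sum with the closure/supremum defining the Krein form, which I would justify by the same monotone (finite-support) approximation used in the Friedrichs step.
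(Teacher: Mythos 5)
Your proof is correct, but it takes a genuinely different route from the paper's --- in fact it carries out the alternative that the authors only gesture at in the remark following the corollary (``Another proof can be obtained by using characterization of $A^F$ and $A^K$ by means of the respective quadratic forms''). The paper's own proof stays inside the boundary-triplet machinery: by Theorem \ref{VI.3} it chooses the triplets $\Pi_n$ so that $S_{0n}=S^K_n$, hence $A_0=\oplus_n S^K_n$ and $M(\cdot)=\oplus_n M_n(\cdot)$ is block-diagonal, and then verifies the Weyl-function criterion of \cite[Proposition 4]{DM91}, namely that $A_0=A^K$ if and only if $\lim_{x\downarrow 0}(M(-x)h,h)=+\infty$ for every $h\neq 0$; the divergence is forced by splitting $h=h^{(1)}\oplus h^{(2)}$ into a part supported on finitely many blocks (where \eqref{5.16}--\eqref{5.17} give divergence) and a norm-small tail (bounded below by $-1$ via the monotonicity of $M(\cdot)$ on $(-\infty,0)$), and the Friedrichs identity is ``proved similarly''. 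Your argument never touches triplets: the Friedrichs half rests on the variational formula for $((\wt A+x)^{-1}g,g)$ together with the form-core property of $\dom(S_n)$, and the Krein half on the Ando--Nishio formula for the Krein form, the block decoupling being justified by Cauchy--Schwarz for the upper bound and complex rescaling of the blocks for the lower bound, plus finite truncation of $v\in\dom(A)$; these steps are all sound, and it is worth noting that both proofs exploit the same finite-part-plus-small-tail mechanism to tame the infinite direct sum. The trade-off: the paper's route recycles Theorem \ref{VI.3} (needed later anyway) and cites only \cite{DM91}, while yours is independent of the triplet formalism but imports the Ando--Nishio theorem, which lies outside the paper's toolbox. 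Your sketched triplet-based alternative for $A^K$ (taking $S_{0n}=S^F_n$ and writing $A^K=A_{M(-0)}$) is the one weak spot: there one must prove that the limit $M(-0)$ exists as a (generally unbounded, possibly multivalued) boundary parameter and commutes with the infinite direct sum, and the paper's choice of placing the Krein extensions in the $A_0$ position is designed precisely to avoid that interchange, reducing everything to a scalar divergence test.
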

\begin{proof}
Let us prove the second of relations \eqref{5.14}. The first one
is proved similarly.  By Theorem \ref{VI.3} there exists { a}
boundary triplet $\Pi_n=\{\cH_n,\Gamma_{0n},\Gamma_{1n}\}$ for
$S^*_n$ such that $S^K_n = S_{0n}$ and
$\Pi=\oplus^{\infty}_{n=1}\Pi_n$ is a boundary triplet for $A^*$.

Fix any $x_2\in\R_+$ and put $C_2:=\|M(-x_2)\|$. Then any
$h=\oplus^{\infty}_{n=1} h_n \in \cH$ can be decomposed by
$h=h^{(1)}\oplus h^{(2)}$ with $h^{(1)}\in\oplus^p_{n=1} \cH_n$ {
and} $h^{(2)}\in\oplus^{\infty}_{n=p+1}\cH_n$ {such that
$\|h^{(2)}\| < C^{-1/2}_2$}. Hence $|(M(-x_2)h^{(2)},h^{(2)})| <
1$. {  Due to the monotonicity of $M(\cdot)$ we get}
\bed
\bigg(M(-x)h^{(2)},h^{(2)}\bigg)>\bigg(M(-x_2)h^{(2)},h^{(2)}\bigg)>-1,\qquad
x\in (0, x_2).
\eed
Since $S_{0n} = S^K_n$, the Weyl function $M_n(\cdot)$ satisfies
\begin{equation}\label{5.16}
\lim_{{ x\downarrow 0}}\bigg(M_n(-x)g_n, g_n\bigg)=+\infty, \qquad
g_n\in\cH_n\setminus\{0\},
\end{equation}
{ cf. \cite[Proposition 4]{DM91}}. Since $M(\cdot)=
\oplus^\infty_{n=1} M_n(\cdot)$ { is block-diagonal}, cf.
\eqref{6.7}, we get from \eqref{5.16} that for any $N>0$ there
exists $x_1>0$ such that
\begin{equation}\label{5.17}
\bigg(M(-x)h^{(1)},h^{(1)}\bigg)=\sum^p_{n=1}\bigg(M_n(-x)h_n,h_n\bigg)>
N \quad \text{for}\quad x\in (0, x_1).
\end{equation}
Combining \eqref{5.16} with \eqref{5.17} and using the diagonal
form of $M(\cdot)$, we get
\bed (M(-x)h,h) = (M(-x)h^{(1)}, h^{(1)}) + (M(-x)h^{(2)},h^{(2)})
> N-1 \eed
for $0<x\le\min(x_1,x_2)$. Thus, $\lim_{{x\downarrow 0}}(M(-x)h,h)
= + \infty$ for $h\in \cH\setminus\{0\}.$  Applying
\cite[Proposition 4]{DM91} we {prove the second relation of}
\eqref{5.14}.
\end{proof}
\begin{remark}
{\em
Another proof can be obtained by using characterization of $A^F$
and $A^K$ by means of the respective quadratic forms.
}
\end{remark}

\subsection{Summands with arbitrary equal deficiency indices}

Here we apply Theorem \ref{V.5} to direct sums of symmetric
operators \eqref{6.5a}, allowing summands ${ S_n}$ to have arbitrary
(finite or infinite) equal deficiency indices. We start with a
simple general proposition.
\begin{proposition}\la{VI.5b}
Let $\{{ S_n}\}^\infty_{n=1}$ be a sequence of densely defined closed
symmetric operators, $\dom({ S_n})\subset \gotH_n$,
$n_+({ S_n})=n_-({ S_n})$, and let ${ S_{0n}} = { S^*_{0n}}\in \Ext_{{ S_n}}$.
Further, let $A$ and $A_0$ be given by \eqref{6.5a} and
\eqref{6.8}, respectively. If $\wt A$ is a self-adjoint extension
of $A$ such that condition \eqref{5.0} is satisfied, then
    \be\label{5.13}
\gs_{ac}(A_{0}) = \overline{\bigcup \gs_{ac}({ S_{0n}})} \subseteq
\gs(\wt A) \quad \mbox{and} \quad \gs_{ac}(\wt A) \subseteq
\overline{\bigcup \gs({ S_{0n}})} = \gs(A_{0}).
      \ee
\end{proposition}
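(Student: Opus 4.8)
The plan is to split the statement into two parts: the two set-theoretic equalities, which are purely structural properties of orthogonal direct sums and require no perturbation hypothesis, and the two inclusions, which rest on the invariance of the essential spectrum under compact perturbations of the resolvent.

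First I would record the equalities. For an orthogonal sum of self-adjoint operators one has the standard identity $\sigma(A_0)=\overline{\bigcup_n\sigma(S_{0n})}$, which gives the second equality. For the first, I would use that the $ac$-part of an orthogonal sum is the orthogonal sum of the $ac$-parts, $A_0^{ac}=\bigoplus_{n=1}^\infty S_{0n}^{ac}$ (the $ac$-subspace of $A_0$ decomposes as $\bigoplus_n\gotH_n^{ac}$, since absolute continuity is read off componentwise from $E_{A_0}\!\upharpoonright\gotH_n=E_{S_{0n}}$). Applying the same spectrum-of-a-sum formula to $A_0^{ac}$ then yields $\sigma_{ac}(A_0)=\sigma(A_0^{ac})=\overline{\bigcup_n\sigma(S_{0n}^{ac})}=\overline{\bigcup_n\sigma_{ac}(S_{0n})}$.

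Next I would treat the two inclusions together. Since $(\wt A-i)^{-1}-(A_0-i)^{-1}\in\gotS_\infty(\gotH)$ by \eqref{5.0}, the classical Weyl theorem on stability of the essential spectrum under compact perturbations of the resolvent gives $\sigma_{\mathrm{ess}}(\wt A)=\sigma_{\mathrm{ess}}(A_0)$. Combining this with the elementary fact that for any self-adjoint operator $T$ one has $\sigma_{ac}(T)\subseteq\sigma_{\mathrm{ess}}(T)$ --- because every point of $\sigma(T)\setminus\sigma_{\mathrm{ess}}(T)$ is an isolated eigenvalue of finite multiplicity and hence belongs to the pure point part --- I obtain the chain $\sigma_{ac}(A_0)\subseteq\sigma_{\mathrm{ess}}(A_0)=\sigma_{\mathrm{ess}}(\wt A)\subseteq\sigma(\wt A)$, which is the first inclusion, and symmetrically $\sigma_{ac}(\wt A)\subseteq\sigma_{\mathrm{ess}}(\wt A)=\sigma_{\mathrm{ess}}(A_0)\subseteq\sigma(A_0)$, which is the second.

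There is no serious obstacle here; this is genuinely a ``simple general proposition'' whose entire force comes from Weyl's theorem, and neither Theorem \ref{V.5} nor the hypothesis on the boundary limits of the Weyl function plays any role. The only points demanding mild care are bookkeeping ones: justifying the closures in the union formulas for an \emph{infinite} orthogonal sum, and verifying the componentwise decomposition $A_0^{ac}=\bigoplus_n S_{0n}^{ac}$ that transports $\sigma_{ac}$ through the direct sum. I would also emphasize that the hypotheses do not force equality in either inclusion: the point of the finer results that follow is precisely that both $\sigma_{ac}(A_0)$ and its multiplicity may strictly grow when passing from $A_0$ to $\wt A$.
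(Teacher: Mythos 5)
Your proof is correct and follows essentially the same route as the paper's: Weyl's theorem on invariance of the essential spectrum under compact resolvent perturbations, combined with $\gs_{ac}(T)\subseteq\gs_{\rm ess}(T)$ and the chain $\gs_{ac}(A_0)\subseteq\gs_{\rm ess}(A_0)=\gs_{\rm ess}(\wt A)\subseteq\gs(\wt A)$ (and its mirror image). The only difference is that you spell out the direct-sum identities $\gs(A_0)=\overline{\bigcup_n\gs(S_{0n})}$, $A_0^{ac}=\bigoplus_n S_{0n}^{ac}$ and the inclusion $\gs_{ac}\subseteq\gs_{\rm ess}$, which the paper uses without comment.
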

\begin{proof}
By the Weyl theorem, condition \eqref{5.0} yields  $\gs_{\rm ess}(\wt
A) = \gs_{\rm ess}(A_0)$. Hence
\bed \overline{\bigcup \gs_{ac}({ S_{0n}})} = \gs_{ac}(A_0) \subseteq
\gs_{\rm ess}(A_0) = \gs_{\rm ess}(\wt A) \subseteq \gs(\wt A) \eed
and
\bed \gs_{ac}(\wt A) \subseteq \gs_{\rm ess}(\wt A) = \gs_{\rm ess}(A_0)
\subseteq \gs(A_0) = \overline{\bigcup \gs({ S_{0n}})}. \eed
\end{proof}

Our further considerations are substantially based on Theorem
\ref{VI.3}.
\bt\la{VI.4}
Let $\{{ S_n}\}^\infty_{n=1}$ be a sequence of densely defined closed
symmetric operators, $\dom({ S_n})\subset \gotH_n$,
$n_+({ S_n})=n_-({ S_n})$, and let ${ S_{0n}} = { S^*_{0n}}\in \Ext_{{ S_n}}$.
Further, let $\gP_n = \{\cH_n,\gG_{0n},\gG_{1n}\}$ be an ordinary
boundary triplet for ${ S^*_n}$  such that ${ S_{0n}} =
{ S^*_n}\upharpoonright\ker(\gG_{0n})$, $n \in \N$, and let
$M_n(\cdot)$ be the corresponding Weyl function. Moreover, let
$\gotm^+_n(t)$, $n \in \N$, be the { invariant maximal normal function} obtained from
\eqref{4.12A} by replacing $M(\cdot)$ by $M_n(\cdot)$. If
$\sup_{n\in\N}\gotm^+_n(t) < +\infty$ for a.e. $t \in \R$, then
for any  self-adjoint extension $\wt{A}$ of $A$ defined by
\eqref{6.5a}, which satisfies { the} condition \eqref{5.0}, the
absolutely continuous parts $\wt{A}^{ac}$ and $A^{ac}_0$ are
unitarily equivalent. In particular, { instead of}
\eqref{5.13} we have $\gs_{ac}(A_{0})= \gs_{ac}(\wt A)$.
\et
\begin{proof}
Let $\wt \gP_n = \{\cH_n, \wt\gG_{0n}, \wt\gG_{1n}\}$ be a
boundary triplet for ${ S^*}_n, \ n\in \N,$ defined according to
\eqref{3.8}, that is ${\wt\Gamma}_{0n}:= R_n\Gamma_{0n}$ and
${\wt\Gamma}_{1n}:=R^{-1}_n\bigl(\Gamma_{1n}- \re
(M_n(i))\Gamma_{0n}\bigr)$, where $R_n:=\sqrt{\im M_n(i))}.$ The
corresponding Weyl function ${\wt M}_n(\cdot)$ is
\bed
{\wt M}_n(z)=R^{-1}_n\bigl(M_n(z)- \re M_n(i)\bigr)R^{-1}_n, \quad
n\in \N.
\eed
Since  ${\wt M}_n(i)=i,\  n\in \N,$ by Theorem \ref{VI.3},
${\wt\Pi}=\oplus^{\infty}_{n=1}{\wt\Pi}_n=:\{\cH,{\wt\Gamma}_0,{\wt\Gamma}_1\}$
is a boundary triplet for $A^*=\oplus_{n=1}^{\infty} { S_n}^*$
satisfying
$A^*\upharpoonright\ker{\wt\Gamma}_0=A_0:=\oplus^{\infty}_{n=1}
{ S_{0n}}$.  By definition of $\gotm^+_n(\cdot)$ and due to
Remark \ref{V.7} one has $\gotm^+_n(t) = \wt m^+_n(t) := \sup_{y
\in (0,1]}\|\wt M_n(t+iy)\|$ for $t \in \R$, $n \in \N$. Since
$A_0 =\oplus_{n=1}^{\infty} { S_{0n}}$ we get  that $\wt
m^+(t) = \sup_n m^+_n(t)$, where $\wt m^+(t) := \sup_{y \in
(0,1]}\|\wt M(t + iy)\|$, $t \in \R$. Since, by assumption, the
maximal normal function $\wt m^+(t)$ is finite, we obtain from
Theorem \ref{V.5} that $\wt A^{ac}$ and $A^{ac}_0$ are unitarily
equivalent.
\end{proof}
\bc\la{VI.5}
Let the assumptions of Theorem \ref{VI.4} be satisfied and let
\be\la{6.12}
\cN := \{t \in \R:\  \sup_{n\in\N}\gotm^+_n(t) <
\infty\}.
\ee
If condition \eqref{5.0} holds, then the parts $\wt{A}^{ac}E_{\wt
A}(\cN)$ and $A^{ac}_0E_{A_0}(\cN)$ of the operators
$\wt{A}$ and ${A_0},$ respectively,  are unitarily equivalent.
\ec
Let $T$ and $T'$ be densely defined closed symmetric operators in
$\gotH$ and let $T_0$ and $T'_0$ be self-adjoint extensions of
$T$ and $T'$, respectively. It is  said that the pairs $\{T,T_0\}$
and $\{T',T'_0\}$ are unitarily equivalent if there exists  a
unitary operator $U$ in $\gotH$ such that $T' = UTU^{-1}$ and
$T'_0 = UT_0U^{-1}$.
\bc\la{VI.5a}
{ Assume  the conditions of Theorem \ref{VI.4}. Let also} the
pairs $\{{ S_n},{ S_{0n}}\}$, $n\in \N,$ be unitarily
equivalent to the pair { $\{S_1, S_{01}\}$}.  If the maximal
normal function $m^+_1(t)$ is finite for a.e. $t \in \R$ and
condition \eqref{5.0} is satisfied, then the absolutely continuous
parts $\wt{A}^{ac}$ and $A^{ac}_0$ are unitarily equivalent.
\ec
\begin{proof}
Since the symmetric operators ${ S_n}$ are unitarily
equivalent, we assume without loss of generality that $\cH_n =
\cH$ for each $n \in \N$. Let $U_n$ be a unitary operator such
that $A_1 = U_n{ S_n}U^{-1}_n$ and $A_{01}  = U_n{
S_{0n}}U^{-1}_n$. A straightforward computation shows that $\gP'_n
:= \{\cH,\gG'_{0n},\gG'_{1n}\}$, $\gG'_{0n} := \gG_{01}U_n$ and
$\gG'_{1n} := \gG_{1n}U_n$, defines a boundary triplet for ${
S^*_n}$. The Weyl function $M'_n(\cdot)$ corresponding to $\gP'_n$
is $M'_n(z) = M_1(z)$. Hence  $\gotm^+_n(\cdot)=
\gotm'^+_n(\cdot)$ and  $\gotm^+_1(t) = \gotm'^+_n(t)$ for $t \in
\R,$ where $\gotm^+_n(t)$ and $\gotm'^+_n(t)$ are the invariant
maximal normal functions corresponding  to the triplets $\gP_n$
and $\gP'_n$, respectively. By Remark \ref{V.7}(i), $\gotm^+_1(t)
= \gotm^+_n(t)$ for $t\in \R$ and $n \in \N$. Applying Theorem
\ref{VI.4} we complete the proof.
\end{proof}

\subsection{ Finite deficiency summands: $ac$-minimal extensions}

Here we improve the previous results assuming that
$n_{\pm}({ S_n})<\infty.$ First, we show that extensions
$A_0=\oplus^\infty_{n=1} { S_{0n}}(\in \Ext_A)$ of the form
\eqref{6.8} posses a certain spectral minimality property. To this
end we start with the following lemma.
\bl\la{V.9}
Let $H$ be  a bounded non-negative self-adjoint operator in a
separable Hilbert  space $\gotH$ and let $L$ be a bounded operator
in $\gotH$. Then

\item[\rm \;\;(i)]  $\dim(\overline{\ran(H)}) =
\dim(\overline{\ran(\sqrt{H})})$.

\item[\rm \;\;(ii)]
If $L^*L  \le H$, then
$\dim(\overline{\ran(L)}) \le \dim(\overline{\ran(H)})$.

\item[\rm \;\;(iii)] If $P$ is an orthogonal projection,
then $\dim(\overline{\ran(PHP)}) \le
\dim(\overline{\ran(H)})$.
\el
\begin{proof}
The assertion (i) is obvious.

(ii)  If $L^*L \le H$, then there is a contraction $C$ such that
$L = C\sqrt{H}.$ Hence $\dim(\overline{\ran(L)}) =
\dim(\overline{\ran(C\sqrt{H})}) \le
\dim(\overline{\ran(\sqrt{H})}) = \dim(\overline{\ran(H)})$.

(iii) Clearly, $\dim(\overline{\ran(PHP)}) \le
\dim(\overline{\ran(HP)}) \le \dim(\overline{\ran(H)})$.
\end{proof}
\bt\la{VI.6}
Let $\{{ S_n}\}^\infty_{n=1}$ be a sequence of densely defined
closed symmetric operators, $\dom({ S_n})\subset \gotH_n$,
with $n_+({ S_n})=n_-({ S_n})< \infty$, $n \in \N$ and let
${ S_{0n}} = { S^*_{0n}}\in \Ext_{ S_n}$. Let also $A$
and $A_0$ be given by \eqref{6.5a} and \eqref{6.8}, respectively.
Then { $A_0$} is $ac$-minimal, in particular,
$\sigma_{ac}(A_0) \; { \subseteq} \;\sigma_{ac}(\wt A).$
\et
\begin{proof}
By Theorem \ref{VI.3} there is a sequence of boundary triplets
$\gP_n := \{\cH_n,\gG_{0n},\gG_{1n}\}$, $n \in \N$, for ${
S^*_{n}}$ such that ${S_{0n}}  = {
S^*_n}\upharpoonright\ker(\gG_{0n})$, $n \in \N$, and the direct
sum $\gP =  \{\cH,\gG_0,\gG_1\} = \bigoplus^\infty_{n=1}\gP_n$ of
the form \eqref{6.5a} is a boundary triplet for $A^*$ satisfying
$A_0 = A^*\upharpoonright\ker(\gG_0)$. By Proposition
\ref{prop2.1}, any $\wt A = {\wt A}^*\in \Ext_A$ admits a
representation  $\wt A = A_\gT$ with  $\gT = \gT^*\in \wt
\cC(\cH).$   By Corollary \ref{V.1A}(i),  we can
{ assume that } $\wt A$ and
$A_0,$ { are disjoint, that is }
{ $\gT= B =B^*\in \cC(\cH)$}. Consider the generalized Weyl
function $M_B(\cdot) := (B - M(\cdot))^{-1}$, where $M(\cdot) =
\bigoplus^\infty_{n=1}M_n(\cdot)$, cf. \eqref{6.7}. Clearly,
\bed
\imag(M_B(z)) = M_B(z)^*\imag(M(z))M_B(z), \quad z \in \C_+.
\eed
Denote by $P_N$, $N \in \N$,  the orthogonal projection from $\cH$
onto the subspace $\cH_N := \bigoplus^N_{n=1}\cH_n$. Setting
$M^{P_N}_B(z):= P_NM_B(z)\upharpoonright\cH_N$,  { and taking
into account the block-diagonal form of $M(\cdot)$ and the
inequality $\imag(M(z))>0$ we obtain}
\bea\la{5.18}
\lefteqn{
\imag(M^{P_N}_B(z)) = \imag(P_NM_B(z)P_N) }\\
& & = P_NM_B(z)^*\imag(M(z))M_B(z)P_N \ge
M^{P_N}_B(z)^*\imag(M^{P_N}(z))M^{P_N}_B(z), \nonumber \eea
where $M^{P_N}(z) := P_NM(z)\upharpoonright\cH_N =
\bigoplus^N_{n=1}M_n(z)$. Since $P^N$ is a finite dimensional
projection the limits $M^{P_N}_B(t) := \slim_{y\to+0}M^{P_N}_B(t+iy)$
and $M^{P_N}(t) := \slim_{y\to+0}M^{P_N}(t+iy)$ exist for a.e. $t \in
\R$. From \eqref{5.18} we get
\be\la{5.19}
\imag(M^{P_N}_B(t)) \ge
M^{P_N}_B(t)^*\imag(M^{P_N}(t))M^{P_N}_B(t) \quad\text{ for
a.e.}\quad t \in \R.
\ee
Since $M_B(\cdot)$ is { a} generalized Weyl function, it is a
strict $R_\cH$-function, that is, $\ker(\imag(M_B(z)))=\{0\},\
z\in \C_+$. Therefore, $M^{P_N}_B(\cdot)$ is also strict. Hence
$0\in \varrho(M^{P_N}_B(z))$, $z\in \C_+$, and $G_N(\cdot) := -
(M^{P_N}_B(\cdot))^{-1}$ { is strict}. Since both $G_N(\cdot)$ and
$M^{P_N}_B(\cdot)$ are { matrix-valued} $R$-functions, the limits
$M^{P_N}_B(t+i0):= \lim_{y\to+0} M^{P_N}_B(t+iy)$ and $G_N(t+i0):=
\lim_{y\to+0} G_N(t+iy)$ exist for a.e. $t\in \R.$ Therefore,
passing to the limit in the identity $M^{P_N}_B(t+iy)G_N(t+iy) =
-I$ as $y\to 0,$  we get $M^{P_N}_B(t+i0)G_N(t+i0) = -I$ for a.e.
$t\in \R.$ Hence $M^{P_N}_B(t) := M^{P_N}_B(t+i0)$ is invertible
for a.e. $t \in \R$.

Further, combining  \eqref{5.19} with Lemma \ref{V.9}(ii) we get
     \bed
\dim\left(\overline{\ran\left(\sqrt{\imag
M^{P_N}(t)}M^{P_N}_B(t)\right)}\right) \le d_{M^{P_N}_B}(t) \quad
\text{for a.e.} \quad t \in \R.
        \eed
Since $M^{P_N}_B(t)$ is invertible for a.e. $t \in \R,$ we find
      \be\label{5.21}
 d_{M^{P_N}}(t) := \dim\left(\overline{\ran \left( \sqrt
{\imag M^{P_N}(t)}\right)}\right) \le d_{M^{P_N}_B}(t)\quad
\text{for a.e.} \quad t \in \R.
       \ee
Let ${ D}_N = P_N \oplus { D}_0$ where { ${D}_0 \in
\gotS_2(\cH^\perp_N)$ and satisfy} $\ker({ D}_0) = \ker({D}^*_0) =
\{0\}$. Then $\ker({ D}_N) = \ker({D}^*_N) = \{0\}$ and $P_N =
P_N{D}_N = { D}_NP_N$. By Lemma \ref{V.9}(iii),
$d_{M^{P_N}}(t) \le d_{M^{{ D}_N}_B}(t)$ for a.e. $t \in \R$.
Further, for any ${D}\in \gotS_2(\cH)$ and satisfying $\ker({D}) =
\ker({D}^*) = \{0\},$ $d_{M_B^{D}}(t) = d_{M^{{D}_N}_B}(t)$ for
a.e. $t \in \R.$ { Combining this equality with \eqref{5.21}}
we get  $d_{M^{P_N}}(t) \le d_{M_B^{ D}}(t)$ for a.e. $t \in \R$
and $N \in \N$. Since
\be\la{6.13A}
d_{M^{P_N}}(t) = \sum^N_{n=1}d_{M_n(t)} \quad \mbox{and} \quad
d_{M^D}(t) = \sum^\infty_{n=1}d_{M_n}(t)
\ee
for a.e. $t \in \R,$ we finally prove that $d_{M^{ D}}(t) \le
d_{M_B^{D}}(t)$ for a.e. $t \in \R$. { One completes the
proof by applying  Theorem} \ref{IV.10}(i).
\end{proof}
\bc\la{VI.7A}
Let  { the assumptions} of Theorem \ref{VI.6} be satisfied and
{ let $S_n \ge 0$, $n \in \N$}. { Further, let $A$ and $A$
be given by \eqref{6.5a} and \eqref{6.8}, respectively}. Then the
Friedrichs and { the } Krein extensions $A^F$ and $A^K$ of
$A$ are $ac$-minimal. In particular, $(A^F)^{ac}$ and $(A^K)^{ac}$
are unitarily equivalent.
\ec
\begin{proof}
{ Combining  Theorem \ref{VI.6}  and Corollary \ref{cor5.5}
yields the assertion.}
\end{proof}
\bc\la{VI.7}
Let { the assumptions} of Theorem \ref{VI.6} be satisfied and let
\be\la{6.13}
\cD := \{t \in \R: \sum_{n\in\N}d_{M_n}(t) = \infty\}.
\ee
If,  { in addition,} condition \eqref{5.0} holds, then the parts $\wt A^{ac}E_{\wt
A}(\cD)$ and $A^{ac}_0E_{A_0}(\cD)$ of the operators  $\wt A$ and
$A_0,$ respectively, are unitarily equivalent.
\ec
\begin{proof}
By \eqref{6.13A} and \eqref{6.13}, $d_{M^D}(t) = + \infty$ for
a.e. $t \in \cD$.  Applying Theorem \ref{VI.6} and
Theorem \ref{II.4a}(ii) we complete the proof.
\end{proof}
\bc\la{VI.10}
{ Let the assumptions of Theorem \ref{VI.6} be satisfied and
let $\cN$ and  $\cD$ be given  by \eqref{6.12} and \eqref{6.13},
respectively. If  condition \eqref{5.0} is valid, then} the {
parts} $\wt A^{ac}E_{\wt A}(\cD \cup \cN)$ and
$A^{ac}_0E_{A_0}(\cD\cup\cN)$ are unitarily equivalent.
\ec
\begin{proof}
By Corollary \ref{VI.5}, the parts $\wt A^{ac}E_{\wt A}(\cN)$ and
$A^{ac}_0E_{A_0}(\cN)$ are unitarily equivalent. Corollary
\ref{VI.7} yields the unitary equivalence of the parts
 $\wt A^{ac}E_{\wt A}(\cD)$ and $A^{ac}_0E_{A_0}(\cD)$. Hence the
 parts $\wt A^{ac}E_{\wt A}(\cD \cup \cN)$ and $A^{ac}_0E_{A_0}(\cD\cup\cN)$ are unitarily
equivalent { too}.
\end{proof}
\bc\la{VI.11}
{ Assume conditions  of Theorem \ref{VI.6}}.  { Then}
$\overline{\bigcup_{n\in\N}\gs_{ac}({ S_{0n}})} \subseteq
\gs_{ac}(\wt A).$ If, in addition, condition \eqref{5.0} is valid
and { the extensions $S_{0n}$, $n \in \N$, are purely
absolutely
  continuous, then}
\be\la{6.17}
\gs_{ac}(\wt A) = \overline{\bigcup_{n\in\N}\gs_{ac}({ S_{0n}})}.
\ee
\ec
\begin{proof}
The first statement { immediately follows} from  Theorem \ref{VI.6}.
Relation \eqref{6.17} is implied by  { Proposition} \ref{VI.5b}.
\end{proof}
\bc\la{VI.12}
{ Assume the conditions  of  Theorem \ref{VI.6}}.  { Let
also the pairs} $\{{ S_n},{ S_{0n}}\}$, $n \in \N$, be
pairwise unitarily equivalent. { If condition \eqref{5.0}
holds }, then  for any $\wt A\in \Ext_A$ the $ac$-parts $\wt
A^{ac}$ and $A^{ac}_0$
are unitarily equivalent.
\ec
\br
{\rm
(i)\ For { the} special case { $n_\pm(S_n) = 1$, $n \in
\N$}, Theorem \ref{VI.6} complements \cite[Corollary 5.4]{ABMN05}
where the inclusion { $\sigma_{ac}(A_0)\subseteq
\sigma_{ac}(\wt A)$} was proved. Moreover, Corollary  \ref{VI.12}
might be { regarded} as a substantial generalization of
\cite[Theorem 5.6(i)]{ABMN05} to the case $n_{\pm}({ S_n})>1$.
However,  in the case $n_{\pm}({ S_n})=1,$ Corollary
\ref{VI.12} is implied by \cite[Theorem 5.6(i)]{ABMN05} where the
unitary equivalence of ${\wt A}^{ac} = {\wt A}_B^{ac}$ and
$A_0^{ac}$ was proved  under { the weaker} assumption that
$B$ is purely singular. Indeed, by Proposition \ref{prop2.9}
condition \eqref{5.0} { with $\wt A = A_B$}  is equivalent to
discreteness of $B.$

(ii)\ The inequality   $N_{E^{ac}_{A_0}}(t) \le
N_{E^{ac}_{\wt{A}}}(t)$ in Theorem \ref{VI.6} might be strict even
for $t\in\sigma_{ac}(A_0)$. Indeed, assume that $(\alpha,\beta)$
is a gap for all { except for the operators}
$S_1,\ldots,S_N$. Set $S_1:=\oplus^N_{n=1} { S_n}$ and
$S_2:=\oplus^{\infty}_{ n=N+1}{ S_n}$.  Then
$n_{\pm}(S_2)=\infty$ and $(\alpha,\beta)$ is a gap for $S_2.$ By
\cite{Bra04} there exists ${\wt S}_2={\wt S}^*_2\in \Ext_{S_2}$
having $ac$-spectrum within $(\alpha,\beta)$ of arbitrary
multiplicity. Moreover, even for operators $A = {
\oplus^\infty_{n=1} S_n}$  satisfying {  assumptions} of
Corollary \ref{VI.12} with $n_{\pm}({ S_n})=1$ the inclusion
{ $\gs_{ac}(A_0)\subseteq \gs_{ac}(\wt A)$} might be strict
whenever condition \eqref{5.0} is violated, { cf. \cite{Bra04}
or \cite[{ Theorem 4.4}]{ABMN05} which  guarantees the
appearance of prescribed spectrum either within one gap or within
several gaps of $A_0$}.
}
\er

\section{Sturm-Liouville operators with operator \\potentials}

\subsection{Bounded operator potentials}

Let $\cH$ be a separable Hilbert space. As usual, $L^2({\mathbb
R}_+,\cH) := L^2({\mathbb R}_+)\otimes\cH$ stands for the Hilbert
space of (weakly) measurable vector-functions $f(\cdot):{\mathbb
R}_+\to\cH$ satisfying $\int_{{\mathbb
R}_+}\|f(t)\|^2_{\cH}dt<\infty$. Denote also by $W^{2,2}({\mathbb
R}_+,\cH) := W^{2,2}({\mathbb R}_+)\otimes\cH$ the Sobolev space of
vector-functions taking  values in $\cH$.

Let $T = T^* \ge 0$ be a bounded operator in $\cH$.  Denote by
$A:=A_{\min}$ the minimal operator generated on $\gH:=L^2({\mathbb
R}_+,\cH)$ by a differential expression
$\cA=-\frac{d^2}{dx^2}\otimes I_{\cH}+I_{L^2({\mathbb R}_+)}\otimes
T$. It is known (see \cite{GG91,Rof-Bek69}) that $A$ is
given by
\begin{eqnarray}\la{8.1}
(Af)(x)   =  -f''(x) + Tf(x), \quad f \in \dom(A) =
W^{2,2}_0({\mathbb R}_+,\cH),
\end{eqnarray}
where $W^{2,2}_0(\R_+,\cH) := \{f \in W^{2,2}(\R_+,\cH):\ f(0) =
f'(0) =0 \}$.

The operator $A$ is closed, symmetric and non-negative. It can be
proved similarly to \cite[Example 5.3]{BMN02} that $A$ is simple.
The adjoint operator $A^*$ is given by \cite[Theorem 3.4.1]{GG91}
\be\la{8.1B}
(A^*f)(x) = -f''(x) + Tf(x), \quad f \in \dom(A^*) =
W^{2,2}(\R_+,\cH).
\ee
By \cite[Theorem 1.3.1]{LioMag72} the trace operators
$\Gamma_0, \ \Gamma_1: \dom(A^*)\to \cH,$
\be\la{8.2} \gG_0 f = f(0) \quad \mbox{and} \quad \gG_1 f = f'(0),
\quad f\in \dom(A^*), \ee
are well defined Moreover, the deficiency subspace
$\mathfrak N_z(A)$ is
     \begin{equation}\label{8.3A}
\mathfrak N_z(A) = \{e^{ix\sqrt{z-T}}h: \ h \in \cH\}, \qquad z\in
\C_{\pm}.
         \end{equation}
\begin{lemma}\label{lem6.1}
A triplet $\gP = \{\cH,\gG_0,\gG_1\}$, with $\gG_0$ and $\gG_1$
defined by  \eqref{8.2}, forms
 a boundary triplet for $A^*$.
The corresponding Weyl function $M(\cdot)$ is
\be\la{8.2B} M(z) = i\sqrt{z-T} = i\int\sqrt{t+iy -
\gl}\;dE_T(\gl),  \quad z = t+iy \in \C_+.
 \ee
\end{lemma}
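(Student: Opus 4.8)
The plan is to verify the two defining properties of a boundary triplet from Definition \ref{t1.9} — the abstract Green identity \eqref{2.10A} and surjectivity of $\gG=(\gG_0,\gG_1)^\top$ — and then to read off the Weyl function from the defining formula \eqref{2.3A} together with the explicit description \eqref{8.3A} of the deficiency subspaces.

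First I would check Green's identity by integrating by parts twice in $L^2(\R_+,\cH)$. Writing $A^*f=-f''+Tf$ and using $T=T^*$, the potential contributions $(Tf,g)-(f,Tg)$ cancel, so that $(A^*f,g)-(f,A^*g)=\int_0^\infty\bigl[-(f''(x),g(x))_\cH+(f(x),g''(x))_\cH\bigr]\,dx$. Integration by parts produces only boundary terms at $x=0$ and $x=\infty$; since $f,g\in W^{2,2}(\R_+,\cH)=\dom(A^*)$, both $f,g$ and $f',g'$ tend to $0$ as $x\to\infty$ (the one point needing a short Sobolev-embedding argument, carried over verbatim from the scalar case), so the contributions at infinity drop and what survives is exactly $(f'(0),g(0))_\cH-(f(0),g'(0))_\cH=(\gG_1f,\gG_0g)_\cH-(\gG_0f,\gG_1g)_\cH$. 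For surjectivity I would invoke the trace theorem \cite[Theorem 1.3.1]{LioMag72}, by which $f\mapsto(f(0),f'(0))$ maps $W^{2,2}(\R_+,\cH)$ onto $\cH\oplus\cH$; equivalently one exhibits a right inverse directly, taking a scalar cutoff $\chi\in C_c^\infty(\R_+)$ with $\chi(0)=1$, $\chi'(0)=0$ and setting $f(x)=\chi(x)(h_0+xh_1)$, which lies in $\dom(A^*)$ and satisfies $\gG_0f=h_0$, $\gG_1f=h_1$.

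To compute the Weyl function I would use \eqref{2.3A}. For $z\in\C_+$ and $\lambda\in\sigma(T)\subseteq[0,\infty)$ one has $z-\lambda\in\C_+$, so the branch of the square root with $\im\sqrt{z-\lambda}>0$ is well defined by the functional calculus; since $T$ is bounded, $\sigma(T)$ is compact and $\im\sqrt{z-T}\ge c(z)>0$, whence $f_z(x):=e^{ix\sqrt{z-T}}h$ decays exponentially and belongs to $W^{2,2}(\R_+,\cH)$. A direct differentiation, using $(\sqrt{z-T})^2=z-T$, gives $-f_z''+Tf_z=zf_z$, confirming $f_z\in\mathfrak N_z$ and recovering \eqref{8.3A}. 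Then $\gG_0f_z=f_z(0)=h$ shows $\gamma(z)h=f_z$, while $f_z'(0)=i\sqrt{z-T}\,h$ gives $\gG_1f_z=i\sqrt{z-T}\,h$. Hence $M(z)=\gG_1\gamma(z)=i\sqrt{z-T}$, and expressing the square root through the spectral measure $E_T$ yields the integral form in \eqref{8.2B}.

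The only genuinely delicate point is the choice of branch of $\sqrt{z-T}$ and the attendant verification that $e^{ix\sqrt{z-T}}h$ is square-integrable rather than exponentially growing; once the branch with positive imaginary part is fixed this is immediate, and the remaining verification (vanishing of the boundary terms at infinity) is standard. Everything else reduces to routine differentiation and integration by parts.
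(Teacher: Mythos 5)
Your proof is correct and follows essentially the same route as the paper's (terse) proof: Green's identity by integration by parts with boundary terms at infinity vanishing, surjectivity of $\gG$ from the Lions--Magenes trace theorem, and the Weyl function read off from the exponential solutions $e^{ix\sqrt{z-T}}h$ spanning $\mathfrak N_z$. Your version simply fills in the details the paper leaves implicit (verification of \eqref{8.3A}, the branch choice for $\sqrt{z-T}$, and an explicit right inverse for the trace map), so it is a faithful, more detailed rendering of the same argument.
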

\begin{proof}
One obtains the Green formula integrating by parts. The
surjectivity of the mapping $\gG:=(\Gamma_0,\Gamma_1)^\top:
\dom(A^*) \rightarrow \cH \oplus \cH$ is immediate from
\eqref{8.2} and \cite[Theorem 1.3.2]{LioMag72}. Formula
\eqref{8.2B} is implied by \eqref{8.3A}.
\end{proof}
\bl\la{VII.1}
Let $T$ be a bounded non-negative self-adjoint operator in $\cH$
and let $A$ and $\Pi=\{\cH,\Gamma_0,\Gamma_1\}$ be defined by
\eqref{8.1} and \eqref{8.2}, respectively. Then

\item[\rm\;\;(i)] the invariant maximal normal function $\gotm^+(t)$
of the Weyl function $M(\cdot)$ is finite for all $t \in \R$ and
satisfies
\be\la{7.3a}
\gotm^+(t) \le 2(1 + t^2)^{1/4}, \quad t \in \R.
\ee

\item[\rm\;\;(ii)] The limit $M(t+i0) := \slim_{y\to+0}M(t+iy)$
exists, is bounded and equals
\be\la{7.3}
M(t +i0) = i\int_\R \sqrt{t-\gl} dE_T(\gl) \quad \text{for any}\ \
t \in \R.
\ee
\item[\rm\;\;(iii)]  $d_M(t) = \dim(\ran(E_T([0,t))))$ for any $t
\in \R$.
\el
\begin{proof} (i)
It is immediate from \eqref{8.2B} and definition \eqref{4.12A} of
$\gotm^+(\cdot)$ that
\bed \gotm^+(t) \le \sup_{y\in(0,1]}\sup_{\gl \ge 0}
\left|\frac{\sqrt{t + iy -\gl} -
\real(\sqrt{i-\gl})}{\imag(\sqrt{i - \gl})}\right|. \eed
Clearly, $\sqrt{i-\gl} = (1 + \gl^2)^{1/4}e^{i(\pi -\varphi)/2}$
where $ \varphi := \arccos\left(\tfrac{\gl}{\sqrt{1 +
\gl^2}}\right).$ Hence
\bed \left|\frac{\re(\sqrt{i-\gl})}{\im(\sqrt{i-\gl})}\right| =
\tan(\frac{\varphi}{2}) = \frac{1}{\gl + \sqrt{1 + \gl^2}} \le 1,
\quad \gl \ge 0. \eed
Furthermore, we have
\bed \left|\frac{\sqrt{t + iy -\gl}}{\im(\sqrt{i - \gl})}\right|
\le \sqrt{2}\sqrt{\frac{\sqrt{(\gl - t)^2 + y^2}}{\gl + \sqrt{1 +
\gl^2}}} \le 2^{3/4}(1 + t^2)^{1/4} \eed
for $\gl \ge 0$, $t \in \R$ and $y \in (0,1]$ which yields
\eqref{7.3a}.

(ii) From \eqref{8.2B} we find $M(t):= M(t +i0) := \slim_{y\to+0}
i\sqrt{t+iy - T} = i\sqrt{t-T},$ for  any $t \in \R,$ which proves
\eqref{7.3}. Clearly, $M(t)\in [\cH]$ since $T\in [\cH].$

(iii) It follows that $\im(M(t)) = \sqrt{t-T}E_T([0,t))$, which
yields $d_M(t) = \dim(\ran(\im(M(t))))= \dim(\ran(E_T([0,t))))$.
\end{proof}

With the operator $A=A_{\min}$ it is naturally  associated a
(closable) quadratic form ${\mathfrak t}'_F[f]:= (Af,f),\
\dom({\mathfrak t}')=\dom(A).$ Its closure ${\mathfrak t}_F$ is
given by
\be\la{8.1A} { \mathfrak t_F[f]} := \int_{\R_+}
\left\{\|f'(x)\|^2_\cH + \|\sqrt{T}f(x)\|^2_\cH\right\}dx, \ee
$f \in \dom({\mathfrak t}_F) = W^{1,2}_0(\R_+,\cH)$,
where  $W^{1,2}_0(\R_+,\cH) := \{W^{1,2}(\R_+,\cH): f(0)=0\}.$
By definition,  the Friedrichs extension $A^F$ of $A$ is a
self-adjoint operator associated with ${\mathfrak t}_F$. Clearly,
$A^F = A^*\upharpoonright (\dom(A^*)\cap\dom({\mathfrak t}_F)).$
\bt\label{prop7.2A}
Let $T\ge 0$, $T = T^*\in [\cH]$, and $t_0:=\inf\gs(T)$.
Let $A$ be defined by \eqref{8.1} and $\gP = \{\cH,\gG_0,\gG_1\}$
the boundary triplet for $A^*$ defined by \eqref{8.2}.
Then

\item[\rm \;\;(i)] { the} Friedrichs extension $A^F$ coincides
with $A_0$ { that is}
\bed \dom(A^F) = \dom(A^*)\cap\dom({\mathfrak t}_F)= \{f \in
W^{2,2}(\R_+,\cH):\ f(0) = 0\}= \dom(A_0), \eed
{ and } $A^F$ corresponds to the Dirichlet problem. Moreover,
$A^F$ is absolutely continuous,  $A^F = (A^F)^{ac}$, and
$\gs(A^F)=\gs_{ac}(A^F)= [t_0,\infty).$

\item[\rm \;\;(ii)] { the} Krein extension $A^K$ is given by
\begin{equation}\label{7.4C}
\dom(A^K)=\{f\in W^{2,2}({\mathbb R}_+,\cH):\ f'(0) +
\sqrt{T}f(0)=0\}.
\end{equation}
Moreover, $\ker(A^K) = \gH_0 := \overline{\gH'_0}$,  $\gH'_0:=
\{e^{-x\sqrt{T}}h:\ h\in \ran(T^{1/4})\}$ and the restriction
$A^K\upharpoonright\dom(A^K) \cap \gH^{\perp}_0$ is absolutely
continuous, that is $\gH^{\perp}_0=\gH^{ac}(A^{K})$ and $A^K=
{0}_{\gH_0}\oplus (A^K)^{ac}$.

\item[\rm\;\;(iii)] The extension $A_1 := A^*\upharpoonright
\ker(\gG_1)$, { coincides with $A^N$},  $\dom(A^N) := \{f \in
W^{2,2}(\R_+,\cH): f'(0) = 0\}$, { i.e. $A_1$ corresponds to
the Neumann boundary condition}. Moreover, $A^N$ is absolutely
continuous $(A^N)^{ac} = A^N$ and $\gs(A^N) = \gs_{ac}(A^N) =
[t_0,\infty)$.

\item[\rm \;\;(iv)]  The operators $A^F$, $(A^K)^{ac}$ and $A^N$ are
unitarily  equivalent.
\et
\begin{proof}
(i) Let $\gP = \{\cH,\gG_0,\gG_1\}$ be  the boundary triplet
defined { in } Lemma \ref{lem6.1}.  We show that $A^F = A_0
:= A^*\upharpoonright\ker(\gG_0)$. It follows from  \eqref{8.1B}
and \eqref{8.2} that $\dom(A_0) = \{f \in W^{2,2}(\R_+,\cH): f(0)
= 0\}$. Since $\dom(A_0)\subset W^{1,2}_0(\R_+,\cH)=
\dom({\mathfrak t}_F),$  we have $A_0 = A^F$ (see \cite[Section
8]{AG81} and \cite[Theorem 6.2.11]{Ka76}).

It follows from \eqref{7.3} and \cite[Theorem 4.3]{BMN02} that
$\gs_p(A_0) = \gs_{sc}(A_0) = \emptyset$. Hence $A_0$ is
absolutely continuous. Taking into account Lemma \ref{VII.1}(iii)
and Proposition \ref{III.8} we get $\gs(A_0) = \gs_{ac}(A_0) =
\cl_{ac}(\supp(d_M)) = [t_0,\infty)$ which proves (i).

(ii)\ By \cite[Proposition 5]{DM91}  $A^K$ is defined by $A^K=
A^*\upharpoonright \ker(\gG_1 - M(0)\gG_0).$ It follows from \eqref{8.2B}
that $M(0)=-\sqrt{T}$. Therefore, $A^K$ is defined by \eqref{7.4C}.

It follows from the extremal property of the Krein extension that
$\ker(A^K) =\ker(A^*)$. Clearly,   $f_h(x) := \exp(-x\sqrt{T})h\in
L^2(\R_+,\cH),$  $h\in \ran(T^{1/4}),$ since
  \begin{equation*}
\int^{\infty}_0\!\|\exp(-x\sqrt{T})h\|^2_{\cH}dx=
\int^{\|T\|}_0\!d\rho_h(t)\int^{\infty}_0
\!e^{-2x\sqrt{t}}dx=\int^{\|T\|}_0 \!\frac{1}{2\sqrt
t}d\rho_h(t)<\infty,
    \end{equation*}
where $\rho_h(t):=\bigl(E_T(t)h,h\bigr)$. Thus,
$\gH'_0\subset \ker(A^*).$ It is easily seen that  $\gH'_0$ is
dense in $\gH_0.$ To investigate the rest of the spectrum of
$A^K$ consider the Weyl function $M_K(\cdot)$ corresponding to
$A^K$. It follows from \eqref{8.2B} and Proposition \ref{prop3.1}
that
\bead
\lefteqn{
M_K(z)=M_{-\sqrt{T}}(z)=-\bigl(\sqrt{T}+M(z)\bigr)^{-1}}\\
& &
= -(\sqrt{T}+i\sqrt{z-T})^{-1}
=\frac{1}{z}(i\sqrt{z-T}-\sqrt{T}) = -\frac{2\sqrt{T}}{z}+\Phi(z).
\nonumber
\eead
where $\Phi(z):=\frac{1}{z}[i\sqrt{z-T}+\sqrt{T}]$.
It follows that for $t>0$
\begin{equation}\label{7.13}
\im M_K(t+i0) =  \im\Phi(t+i0)=t^{-1}\sqrt{t-T}E_T([0,t)).
\end{equation}
Hence, by \cite[Theorem4.3]{BMN02},  $\sigma_p(A^K)\cap(0,\infty)=
\sigma_{sc}(A^K)\cap(0,\infty)=\emptyset$. It follows from
\eqref{7.13} that $\im{(M_K(t+i0))}>0$ for $t>t_0$. By Proposition
\ref{III.8} $\sigma_{ac}(A^K)= [t_0,\infty)$. Further,  it follows
from \eqref{7.3} and \eqref{7.13} that for  any $t>t_0$
\bead
\lefteqn{
d_M(t)= \rank(\im(M(t)))} \\
& & = \rank(E_T([0,t))) = \rank(\im(M_K(t)))= d_{M_K}(t)
\eead
Combining this equality with $\sigma_{ac}(A^K)= \sigma_{ac}(A^F)=
[t_0,\infty),$ we conclude that $A^F$ and $(A^K)^{ac}$ are unitarily
equivalent.

(iii) By Proposition \ref{prop3.1} the Weyl function corresponding to
$A_1= A^*\upharpoonright \ker(\gG_1 - 0\gG_0)$ is
\bed
M_0(z) := (0 -M(z))^{-1} = i(z - T)^{-1/2} = i\int
\frac{1}{\sqrt{z -\gl}}dE_T(\gl), \quad z \in \C_+.
\eed
Since  $M_0(\cdot)$ is regular within $(-\infty, t_0),$  we have
$(-\infty, t_0)\subset \varrho(A_1)$. Further, let $\tau
> t_0$. We set $\cH_\tau := E_T([t_0,\tau))\cH$ and note that for any $h \in
\cH_\tau$ and $t > \tau$
\begin{equation}\label{7.14}
\bigl(M_0(t+i0)h,h\bigr) = i\bigl((t-T)^{-1/2}h,h\bigr)=
i\int_{t_0}^{\tau} \frac{1}{\sqrt{t -\gl}}d(E_T(\gl)h,h).
\end{equation}
Hence  for any $h\in \cH_\tau\setminus\{0\}$ and   $t > \tau$ 
\bed
0 < 
(t-t_0)^{-1/2}\|h\|^2 \le \imag(M_0(t+i0)h,h) = \int_{t_0}^{\tau}
(t -\gl)^{-1/2}d(E_T(\gl)h,h) <\infty.
\eed
By \cite[Proposition 4.2]{BMN02},  $\gs_{ac}(A_1) \supseteq
[\tau,\infty)$ for any $\tau > t_0,$ which yields $\gs_{ac}(A_1) =
[t_0,\infty)$. It remains to show that $A_1$ is purely
absolutely continuous. Since $M_0(t+i0)\not\in [\cH]$ we cannot
apply \cite[Theorem 4.3]{BMN02} directly. Fortunately, to
investigate the $ac$-spectrum of $A_1$ we can use \cite[Corollary
4.7]{BMN02}. For any   $t \in \R,$ \ $y > 0,$ and $h \in \cH$ we
set
\bed V_h(t+iy) := \im(M_0(t+iy)h,h) =
\int\im\left(\frac{1}{\sqrt{\gl-t-iy}}\right)d(E_T(\gl)h,h). \eed
Obviously, one has
\bed V_h(t+iy) \le \int\frac{1}{((\gl-t)^2 +
y^2)^{1/4}}d(E_T(\gl)h,h), \quad t \in \R, \quad y > 0, \quad h
\in \cH. \eed
Hence
\bed V_h(t+iy)^p \le \|h\|^{2(p-1)} \int\frac{1}{((\gl-t)^2 +
y^2)^{p/4}}d(E_T(\gl)h,h), \quad p \in (1,\infty). \eed
We show that  for $p \in (1,2)$ and $-\infty < a < b < \infty$
\bed C_p(h; a,b) := \sup_{y\in (0,1]}\int^b_a V_h(t+iy)^p\;dt <
\infty. \eed
Clearly,
\bead
\int^b_a V_h(t+iy)^p dt \le \|h\|^{2(p-1)}\;\int^{\|T\|}_0
d(E(\gl)h,h) \int^b_a \frac{1}{((\gl-t)^2 + y^2)^{p/4}}dt \nonumber  \\
= \|h\|^{2(p-1)}\;\int^{\|T\|}_0 d(E(\gl)h,h) \int^{b-\gl}_{a-\gl}
\frac{1}{(t^2 + y^2)^{p/4}}dt. \quad
\eead
Note, that for $p \in (1,2)$ and  $-\infty < a < b < \infty$
  \begin{equation*}
\int^{b-\gl}_{a-\gl} \frac{1}{(t^2 + y^2)^{p/4}}dt \le
\int^{b}_{a-\|T\|} \frac{1}{t^{p/2}}dt =: \varkappa_p(b,a-\|T\|) <
\infty,
     \end{equation*}
Hence  $C_p(h;a,b) \le \varkappa_p(b,a-\|T\|)\|h\|^{2p} < \infty$
for $p \in (1,2)$, $-\infty < a < b < \infty$ and $h \in \cH$. By
\cite[Corollary 4.7]{BMN02},  $A_1$ is purely absolutely
continuous on any bounded interval $(a,b)$. Hence $A_1$ is purely
absolutely continuous.

(iv) It follows from \eqref{7.3} and \eqref{7.13} that
$d_M(t)=d_{M_K}(t)= \rank(\sqrt{t-T})$ {for}  $t>t_0$.
Combining this equality with $\sigma_{ac}(A^K)= \sigma_{ac}(A^F)=
[t_0,\infty),$ we conclude that $A^F$ and $(A^K)^{ac}$ are unitarily
equivalent.

Passing to $A_1,$ we assume that $1 \le \dim(\ran(E_T([0,s))))=p_1
< \infty$ for some $s > 0$. Let $\gl_k$, $k \in \{1,\ldots,p\}, \
p\le p_1$, be the set of distinct eigenvalues within $[0,s)$.
Since $M_0(t+iy)E_T([0,t))$ is the $p\times p$ matrix-function,
the limit $M_0(t+i0)E_T([0,t))$ exists for $t \in [0,s) \setminus
\bigcup^p_{k=1}\{\gl_k\}.$   It follows from \eqref{7.14} that
\bed
\im(M_0(t)) = |T-t|^{-1/2}E_T([0,t)),\qquad t \in [0,s)
\setminus \bigcup^p_{k=1}\{\gl_k\}.
\eed
This yields
\bed
d_{M_0(t))} := \dim(\ran(\im(M_0(t)))) =
\dim(\ran(E_T([0,t)))) = d_M(t)
\eed
for a.e $t \in [0,s) \setminus \bigcup^p_{k=1}\{\gl_k\}$, that
is, for a.e. $t \in [0,s)$.

If $\dim(E_T([t_0,s))) = \infty$, then there exists  a point
$s_0\in (0, s),$ such that $\dim(E_T([0,s_0])) = \infty$ and
$\dim(E_T([0,s))) < \infty$ for $s \in [0,s_0)$. For any  $t \in
(s_0,s)$ choose  $\tau \in (s_0,t)$ and note that
$\dim(\ran(E_T([0,\tau)))) = \infty$. We set $\cH_\tau :=
E_T([0,\tau))\cH$ and $\cH_\infty := E_T([\tau,\infty))\cH$ as
well as $T_\tau := TE_T([0,\tau))$ and $T_\infty :=
TE_T([\tau,\infty))$. Further, we choose Hilbert-Schmidt
operators $D_\tau$ and ${D}_\infty$ in $\cH_\tau$ and
$\cH_\infty$, respectively, such that $\ker(D_\tau) =
\ker(D^*_\tau) = \ker(D_\infty) = \ker(
D^*_\infty) = \{0\}$. According to the decomposition
$\cH=\cH_\tau \oplus\cH_\infty$ we have $M_0 = M_\tau
\oplus M_\infty,$ ${ D} = { D}_\tau \oplus{ D} _\infty$ and
$d_{M_0^{ D}}(t) = d_{M_\tau^{{ D}_\tau}}(t) + d_{M_\infty^{
D_\infty}}(t)$ for a.e. $t \in [0,\infty).$
Hence $d_{M_0^{D}}(t) \ge d_{M_\tau^{{ D}_\tau}}(t)$ for a.e. $t
\in [0,\infty)$. Clearly, $M_\tau (t+iy) = i(t + iy -
T_\tau)^{-1/2}.$ If $t > \tau$, then $t\in \varrho(T_\tau)$ and
$M(t) := \slim_{y\to+0} M(t+i0)$ exists and
\bed
M_\tau(t) := \slim_{y\to 0}M_\tau(t+iy) = i(t -
T_\tau)^{-1/2}E_T([0,\tau)).
\eed
Hence $d_{M_\tau^{D_\tau}}(t) = \dim(\ran(E_T([0,\tau)))) =
\infty$ for $t > s_0$. Hence $d_{M_0^D}(t) = d_M(t) = \infty$ for a.e.
$t > s_0$ which yields $d_{M_0^D}(t) = d_M(t)$ for a.e. $t \in
[0,\infty)$. Using Theorem \ref{IV.10}(ii) we obtain that $A^{ac}_0$
and $A^{ac}_1$  are unitarily equivalent which shows $A_0$ and $A_1$
are unitarily equivalent.
\end{proof}

Next we describe the spectral properties of any self-adjoint
extension of $A.$ In particular, we show that the
Friedrichs extension $A^F$ of $A$ is $ac$-minimal, though $A$ does
not satisfy conditions of Theorem  \ref{VI.6}.
\bt\la{VII.3}
Let $T\ge 0$, {$T = T^*\in [\cH]$}, and $t_1:=\inf\gs_{\ess}(T)$.
Let also $A$ be the symmetric operator defined by \eqref{8.1} and
$\wt A = \wt A^* \in \Ext_A$.  Then

\item[\rm\;\;(i)] the absolutely continuous part $\wt A^{ac}
E_{\wt A}([t_1,\infty))$ of  $\wt A E_{\wt A}([t_1,\infty))$ is
unitarily equivalent to $A^FE_{A^F}([t_1,\infty)) =
(A^F)^{ac}E_{A^F}([t_1,\infty))$;

\item[\rm\;\;(ii)] the Friedrichs extension $A^F$ is
$ac$-minimal { and} $\sigma_{ac}(A^F)\subseteq
\sigma_{ac}(\wt A)$;

\item[\rm \;\;(iii)] the absolutely continuous part $\wt A^{ac}$
of $\wt A$ is unitarily equivalent to $A^F$  whenever either $(\wt
A - i)^{-1} - (A^F - i)^{-1} \in \gotS_\infty(\gotH)$ or $(\wt A -
i)^{-1} - (A^K - i)^{-1} \in \gotS_\infty(\gotH)$.
\et
\begin{proof}
By Corollary \ref{V.1A} it  suffices to assume that the
extension $\wt A = \wt A^*$  is disjoint with $A_0,$ that is, by
Proposition \ref{prop2.1}(ii) it admits a representation  $\wt A
= A_B$ with $B\in \cC(\cH).$

(i) Let $\Pi=\{\cH, \gG_0,\gG_1\}$ be a boundary triplet for $A^*$
defined by  \eqref{8.2}.  In accordance with Theorem \ref{IV.10}
we calculate $d_{M_B^K}(t)$ where $M_B(\cdot) :=
(B-M(\cdot))^{-1}$ is the generalized Weyl function of
the extension $A_B$. Clearly,
\be\label{7.15}
 \im(M_B(z)) = M_B(z)^*\im(M(z))M_B(z), \quad z \in
\C_+. \ee
Since  $\re(\sqrt{z - \gl})>0$ for $z = t +iy,\  y>0,$  it
follows from \eqref{8.2B} that
      \be\label{7.16}
  \im(M(z)) = \int_{[0,\infty)}\re(\sqrt{z -
\gl})\;dE_T(\gl) \ge \int_{[0,\tau)}\re(\sqrt{z -
\gl})\;dE_T(\gl),
      \ee
where $z = t +iy$. It is easily seen that
      \be \label{7.16A}
\re(\sqrt{z - \gl}) \ge \sqrt{t-\gl} \ge
\sqrt{t-\tau}, \quad \gl \in [0,\tau), \quad t > \tau.
     \ee
Combining \eqref{7.15} with \eqref{7.16} and \eqref{7.16A} we get
\bed  \im(M_B(t+iy)) \ge
\sqrt{t-\tau}M_B(t+iy)^*E_T([0,\tau))M_B(t+iy), \quad t > \tau
>0. \eed
Let $Q$ be a finite-dimensional orthogonal projection, $Q \le
E_T([0,\tau))$. Hence
\bed \im(M_B(t+iy)) \ge \sqrt{t-\tau}M_B(t+iy)^*QM_B(t+iy), \quad
t > \tau >0, \quad y > 0. \eed
Setting $\cH_1= \ran(Q)$, $\cH_2 := \ran(Q^\perp)$, and
choosing $K_2\in \mathfrak S_2(\cH_2)$ and  satisfying $\ker(K_2)
= \ker(K^*_2) = \{0\}$, we define a Hilbert-Schmidt operator $K :=
Q \oplus K_2\in \mathfrak S_2(\cH).$  Clearly,  $\ker(K) =
  \ker(K^*) = \{0\}$ and,
\bea\label{6.14A}
\lefteqn{
\im(K^*M_B(t+iy)K) \ge }\\
& &
\sqrt{t-\tau}K^*M_B(t+iy)^*QM_B(t+iy)K, \quad t > \tau >0.
\nonumber
\eea
Since  $M_B(\cdot)\in (R_\cH)$ and   $Q,\ K\in \mathfrak
S_2(\cH),$  the limits
\bead
K^*M_B(t)^*Q & := & \slim_{y\to+0}K^*M_B(t + iy)^*Q \quad \mbox{and} \quad \\
(QM_BK)(t)   & := & \slim_{y\to+0}QM_B(t+iy)K
\eead
exist for a.e. $t\in \R$ (see \cite{BirEnt67}). Therefore
passing to the limit as $y\to 0$ in \eqref{6.14A}, we arrive at
the inequality
\bed
\im(M_B^K(t)) \ge \sqrt{t-\tau}(K^*M_B(t)^*Q)(QM_BK(t)), \quad  t
> \tau >0, \quad y> 0.
\eed
It follows that
\be\label{7.18}
\dim(\ran\left((QM_BK)(t)\right)) \le
\dim(\ran\left(\im M_B^K(t)\right)) =d_{M^K_B}(t), \quad t > \tau.
\ee
We set $\wt M^Q_B(z) :=QM_B(z)Q \upharpoonright\cH_1.$
Since $\dim(\cH_1)<\infty$ the limit $\wt M^Q_B(t):=
\slim_{y\to+0}\wt M^Q_B(t+iy)$ exists  for a.e. $t\in \R.$
Since $(QM_BK)(t)\upharpoonright\cH_1 = \ran\left((\wt
M^Q_B)(t)\right)$, \eqref{7.18} yields the inequality
        \be\la{7.18a}
\dim(\ran\left(\wt M^Q_B(t)\right)) \le
\dim(\ran\left((QM_BK)(t)\right)) \le d_{M^K_B}(t)
         \ee
for a.e. $t \in [\tau,  \infty)$.

Since  $\dim(\cH_1) <\infty$ and $\ker(\wt M^Q_B(z)) =\{0\},
z\in \C,$ we easily get by  repeating the corresponding reasonings
of the proof of Theorem   \ref{VI.6} that $\ran\left(\wt
M^Q_B(t)\right) = \cH_1$ for a.e. $t \in \R$. Therefore
\eqref{7.18a} yields  $\dim(\cH_1) \le d_{M^K_B}(t)$ for a.e. $t
\in [\tau, \infty).$

If $\tau > t_1$, then $\dim(E_T([0,\tau))\cH) =\infty$ and
the dimension of a projection $Q \le E_T([0,\tau))$ can be
arbitrary. Thus, $d_{M_B^K}(t) = \infty$ for a.e. $t >
\tau$. Since $\tau > t_1$ is arbitrary we get
$d_{M^K_B}(t) = \infty$ for a.e. $t
> t_1$. By  Theorem \ref{IV.10}(ii)
the operator $\wt A^{ac}E_{\wt A}([t_1,\infty))$ is
unitarily equivalent to $A_0E_{A_0}([t_1,\infty))$.

(ii) If $\tau \in (t_0, t_1)$, then $\dim(E_T([0,\tau))\cH) =:
p(\tau)<\infty.$  Hence, $\dim(Q\cH) \le p(\tau)$ which shows that
$d_{M_B^K}(t) \ge p(\tau)$ for a.e. $t \in (\tau, t_1)$. Since
$\tau$ is arbitrary, we obtain $d_{M_B^K}(t) \ge p(\tau)$ for
a.e. $t \in [0,t_1)$. Using Theorem \ref{IV.10}(i) we prove (ii).

(iii) By Lemma \ref{VII.1} the invariant maximal normal
function $\gotm^+(t)$ is finite for $t \in \R$. By Theorem
\ref{V.5} $\wt A^{ac}$ and $(A^F)^{ac}$ are unitarily equivalent.
Similarly we prove that $\wt A^{ac}$ and $(A^K)^{ac}$ are
unitarily equivalent. To complete the proof it remains to
apply Theorem \ref{prop7.2A}(iv).
\end{proof}
\bc\la{VI.5Z}
{ Let the assumptions of Theorem \ref{VII.3} be satisfied.
If $\dim(\cH) = \infty$ and $t_0:=\inf\gs(T) =
\inf\gs_{\ess}(T)=:t_1$, then
the Friedrichs  extension $A^F$ is strictly $ac$-minimal.}
\ec
\br {\em
Let $\dim(E_T([t_0, t_1))\cH) = \infty$.
{ Then there are self-adjoint extensions $\wt A= \wt A^*\in
\Ext_A$ of $A$ such that $\gs_{ac}(\wt A) =  \gs(A^F) = \gs_{ac}(A^F)$ but
$\wt A$ is not unitarily equivalent to $A^F$. }
}\er

\subsection{Unbounded operator potentials}

In this subsection we  consider the differential expression
\eqref{8.1} with unbounded  $T = T^* \ge 0,$\ $T\in \cC(\cH),$
\be\label{8.10}
(\cA_T f)(x)  =  -\frac{d^2}{dx^2} f(x) + Tf(x).
\ee
The minimal operator $A :=A_{T, \min}$ is defined as the closure
of the operator $A'_T$ generated on $\mathfrak H:= L^2(\R_+, \cH)$
by  expression \eqref{8.10} on the domain
\be \label{8.11}
\cD'_0   :=  \left\{\sum_{1\le j \le k}\phi_j(x)h_j:\
\phi_j \in W_0^{2,2}(\R_+),\;
           h_j \in \dom(T), \;\  k \in \N \right\},
\ee
that is $A'_Tf=\cA_T f,\ \dom(A'_T)=\cD'_0.$  Clearly,  $A$ is
non-negative, since  $T\ge0$  and  $A_{T,min} :=
\overline{A'_T}$ coincides with $A$ defined by \eqref{8.1}
provided that $T$ is bounded.

Let $\cH_T$ be the Hilbert space which is obtained equipping the set
$\dom(T)$ with the graph norm of $T$.
Following \cite{LioMag72} we introduce the Hilbert spaces  $W^{k,2}_T({\R}_+;\cH) :=
W^{k,2}({\R}_+;\cH) \cap L^2({\R}_+,\cH_T)$, $k \in \N$, equipped with the
Hilbert norms
\bed \|f\|^2_{W^{k,2}_T}
=\int_{{\R}_+}\bigl(\|f^{(k)}(t)\|^2_{\cH} + \|f(t)\|^2_\cH +\|T
f(t)\|_{\cH}^2\bigr)dt.
\eed
Obviously,  we have $W^{2,2}_{0,T}(\R_+,\cH) := \{f\in
W^{2,2}_T({\R}_+;\cH):\  f(0)=f'(0) =0\} \subseteq
\dom(A_{T,\min})$.
\bl\la{VI.2.7}
{
{ Let $T=T^*$ be a non-negative  operator in $\cH$. Then}
$\dom(A_{T,\rm min})$ and $W^{2,2}_{0,T}(\R_+,\cH)$ {
coincide  algebraically and topologically.}
}
\el
\begin{proof}
{
Obviously, {  for any $f \in \cD'_0$ we have}
\bead
\lefteqn{
\left\|\cA_Tf\right\|^2_\gotH
= \int_{\R_+}\left\|f''(x)\right\|^2_\cH dx } \\
     & &
 + \int_{\R_+}\|Tf(x)\|_\cH ^2dx - 2\real\left\{\int_{\R_+}
\left(f''(x),Tf(x)\right)_\cH dx\right\}.
           \eead
Integrating by parts we find
         \bed
\int_{\R_+} \left(f''(x),Tf(x)\right)dx =
-\int_{\R_+}\left\|\sqrt{T}f'(x)\right\|^2_\cH dx.
        \eed
Combining these equalities  yields
      \begin{equation*}
\left\|\cA_Tf\right\|^2_\gotH =
\int_{\R_+}\left\|f''(x)\right\|dx +
  \int_{\R_+}\|Tf(x)\|^2dx +
2\int_{\R_+}\left\|\sqrt{T}f'(x)\right\|^2_\cH dx
             \end{equation*}
for any $f \in \cD'_0$. { Hence }
\bed
\|f\|^2_{W^{2,2}_T} \le \|\cA_Tf\|^2_\gotH + \|f\|^2, \quad f \in
\cD'_0.
\eed
Furthermore, { by the Schwartz inequality,}
\bed
2\left|\real\left\{\int_{\R_+} \left(f'(x),Tf(x)\right)_\cH dx\right\}\right| \le
\|f\|^2_{W^{2,2}_T}, \quad f \in \cD'_0.
\eed
which gives
\bed
\|\cA_Tf\|^2_\gotH + \|f\|^2 \le 2\|f\|^2_{W^{2,2}_T}, \quad f \in \cD'_0.
\eed
{ Thus, we arrive at the two-sided  estimate }
     \bed
\|f\|^2_{W^{2,2}_T} \le \left\|\cA_Tf\right\|^2_\gotH +
\|f\|^2_\gotH \le 2\|f\|^2_{W^{2,2}_T}\qquad f \in \cD'_0.
       \eed
{ Since  $\cD'_0$ is dense in
$W^{2,2}_{0,T}$ too,  we obtain that  $\dom(A_{T,\rm
  min})$ coincides with $W^{2,2}_{0,T}$ algebraically and topologically.}
}
\end{proof}

{  Since $A$ is  non-negative it admits the Friedrichs
extension $A^F$ and the Krein extension $A^K$.
We define the
extension $A^N$ as the self-adjoint operator associated with the
closed quadratic form ${\mathfrak t}_N$,
\be
{\mathfrak t}_N[f]  := \int^\infty_0 \left\{\|f'(x)\|^2_\cH +
\|{\sqrt T}f(x)\|^2_\cH\right\}dx = \|u\|^2_{W^{1,2}_{\sqrt T}} - \|u\|^2_{L^2(\R_+,\cH)} ,
\ee
$ \dom({\mathfrak t}_N)  :=   W^{1,2}_{\sqrt T}(\R_+,\cH)$.
The definition makes  sense for $T\in [\cH]$. In this case
$A^N=A_1$  with $A_1$ defined in Theorem \ref{prop7.2A}(iii). }

{ We also put $\gt_F := \gt_N\upharpoonright\dom(\gt_F)$,
$\dom(\gt_F) := \{f\in W^{1,2}_{\sqrt{T}}(\R_+;\cH):\ f(0)=0\}. $}
\bp\label{prop6.8}
Let $T=T^*\in\cC(\cH)$, $T\ge 0$, and { let} $A:=A_{T,\min}$
{ be } defined by \eqref{8.10}-\eqref{8.11}. Let also
$\cH_n:=\ran\bigl(E_T([n-1,n))\bigr)$, $T_n:= T E_T([n-1,n))$ and
let $S_n$ { be} the closed minimal symmetric operator defined
by \eqref{8.1} in $\gH_n:=L^2({\mathbb R}_+,\cH_n)$ with $T$
replaced by $T_n.$  Then
\item[\rm\;\;(i)]  { the following decompositions hold}
        \be\label{8.23}
 A = \bigoplus^\infty_{n=1} S_n, \quad A^F =
\bigoplus^\infty_{n=1} S^F_n, \quad A^K = \bigoplus^\infty_{n=1}
S^K_n, \quad A^N = \bigoplus^\infty_{n=1}S^N_n.
     \ee
\item[\rm\;\;(ii)] { The domain $\dom(A^F)$ equipped with the
graph norm   is a closed subspace of $W^{2,2}_T(\R_+,\cH),$}
\bed
\dom(A^F) = \{f \in W^{2,2}_T(\R_+,\cH):  f(0) = 0\}.
\eed

\item[\rm\;\;(iii)] { The domain $\dom(A^N)$ equipped with the
graph norm   is a closed subspace of $W^{2,2}_T(\R_+,\cH),$}\
$\dom(A^N) = \{f \in W^{2,2}_T(\R_+,\cH): f'(0)=0\}$.
\end{proposition}
\begin{proof}
{ (i)} We introduce the sets
\bed
\cD''_0   :=  \left\{\sum_{1\le j \le k}\phi_j(x)h_j:\
\phi_j \in W_0^{2,2}(\R_+),\;
           h_j \in \cH_n, \; k,n \in \N \right\}
\eed
and $\cD''_{0n}   :=  \left\{f \in \cD''_0: f(x) \in \cH_n\right\},
\quad n \in \N.$ Obviously, we have
$\cD''_0 = \bigoplus^\infty_{n=1}\cD''_{0n} \subseteq \cD'_0$.
Setting $A''_T := A'_T\upharpoonright\cD''_0$ we find
$\overline{A''_T} = \overline{A'_T} = A_{T,\rm min}$. Moreover,
setting $A''_n := A_n\upharpoonright\cD''_{0n}$, $n \in \N$, we have
$\overline{A''_n} = A_n$, $n \in \N$. Since
$A''_T = \bigoplus^\infty_{n=1}A''_n \subseteq A'_T,$
we obtain
\bed
A_{T,min} = \overline{A''_T} = \bigoplus^\infty_{n=1}\overline{A''_n}
= \bigoplus^\infty_{n=1}A_n \subseteq A_{T,min}
\eed
which proves { the first relation of \eqref{8.23}. The second
and the third relations are implied by  Corollary \ref{cor5.5}}.

{ To prove the last relation of \eqref{8.23} we set $S^N :=
\bigoplus^\infty_{n=1}S^N_n.$ Since $S^N_n= (S^N_n)^*\in
\Ext_{S_n}$ and $A = \oplus^\infty_{n=1} S_n,$  $S^N$ is a
self-adjoint extension of $A,\ S^N\in \Ext_A.$} Let
$f=\oplus^\infty_{n=1} f_n\in \dom(S^N).$  { Then integrating
by parts we obtain}
\bead \lefteqn{ (S^Nf,f) = \sum_1^{\infty}(S^N_nf_n, f_n) =
\sum_{n=1}^{\infty} \int^\infty_0 \left\{\|f'_n(x)\|^2_{\cH_n}  +
  \|{\sqrt T_n}f_n(x)\|^2_{\cH_n}\right\}dx} \\
& & = \int^\infty_0 \left\{\|f'(x)\|^2_{\cH} +
  \|{\sqrt T}f(x)\|^2_{\cH}\right\}dx
= {\mathfrak t}_N[f].
\eead
{ Since, by definition, $A^N$ is associated with the
quadratic form ${\mathfrak t}_N,$ the last equality yields
$S^N\subset A^N.$ Hence $S^N =  A^N,$ since both $S^N$ and  $A^N$
are self-adjoint extensions of $A$.}

{
(ii)\  Following the reasoning of Lemma \ref{VI.2.7} we find
       \be\la{8.23b}
\|f_n\|^2_{W^{2,2}_{T_n}} \le \|S^F_n f_n\|^2_{\gotH_n} +
\|f_n\|^2_{\gotH_n} \le 2\|f_n\|^2_{W^{2,2}_{T_n}}, \qquad n \in
\N,
         \ee
{ where}  $f_n \in \dom(S^F_n) = \{g_n \in
W^{2,2}(\R_+,\cH_n): g_n(0) =0\}$. { Using
 representation \eqref{8.23} for $A^F$ and setting $f^m :=
\oplus^m_{n=1} f_n$, $f_n \in \dom(F_n)$, we obtain from
\eqref{8.23b}}
\be\la{8.23a}
 \|f^m\|^2_{W^{2,2}_T} \le \|A^F f^m\|^2_{\gotH} +
\|f^m\|^2_{\gotH} \le 2\|f^m\|^2_{W^{2,2}_T}, \quad m \in \N.
\ee
{ Since the set  $\{f^m = \oplus^m_{n=1} f_n: \ f_n \in
\dom(S^F_n),\ m \in \N \}$, is  a core for $A^F,$ inequality
\eqref{8.23a} remains valid for} $f \in \dom(A^F)$. This shows
that $\dom(A^F) = \{f \in W^{2,2}_T(\R_+,\cH): f(0) = 0\}$. {
Moreover, due to \eqref{8.23a}  the graph norm of $A^F$ and  the
norm $\|\cdot\|_{W^{2,2}_T}$ restricted to $\dom(A^F)$ are
equivalent}.

(iii)\  { Similarly to  \eqref{8.23b} one gets}
     \bed
\|f_n\|^2_{W^{2,2}_{T_n}} \le \|S^N_n f_n\|^2_{\gotH_n} + \|f_n\|^2 \le 2\|f_n\|^2_{W^{2,2}_{T_n}}
         \eed
for $f_n \in \dom(S^N_n) = \{g_n \in W^{2,2}(\R_+,\cH_n): g'_n(0)
= 0\}$, $n \in \N$. { It remains to repeat the reasonings  of
(ii)}  }
\end{proof}

To { extend} Theorem \ref{prop7.2A} { to the case of
unbounded operators $T=T^*\ge0$} we first construct a boundary
triplet for $A^*$, using { Theorem \ref{VI.3} and
representation \eqref{8.23} for} $A.$
\bl\label{lem6.9}
{
Assume conditions  of Proposition} \ref{prop6.8}. Then there is a
sequence of boundary triplets $\wh\gP_n =
\{\cH_n,\wh\gG_{0n},\wh\gG_{1n}\}$ { for $S_n^*$} such that
$\gP := \oplus^\infty_{n=1}\wh\gP_n =: \{\cH,\wh
  \gG_0,\wh \gG_1\}$ { forms an ordinary  boundary triplet for} $A^*.$
{ Moreover,  $A^F = A^*\upharpoonright\ker(\wh \gG_0)$ and
the corresponding  Weyl function is}
      \be\la{8.7}
\wh M(z) = \frac{i\sqrt{z-T}
  +\im(\sqrt{i - T})}{\re(\sqrt{i-T})}.
\qquad z \in \C_+,
       \ee
 \el
\begin{proof}
For any $n \in \N$ we define a boundary triplet $\gP_n =
\{\cH_n,\gG_{0n},\gG_{1n}\}$ for { $S_n^*$} with
$\gG_{0n},\gG_{1n}$ defined by \eqref{8.2}. By Theorem
\ref{prop7.2A}(i) { $S^F_n = S_{0n} =
S_n^*\upharpoonright\ker(\gG_{0n})$} and by { Lemma
\ref{lem6.1}}
the corresponding Weyl function is $M_n(z) = i\sqrt{z-T_n}$.

Following Lemma \ref{VI.1}, cf. \eqref{3.8}, we  define a sequence
of regularized boundary triplets $\wh\gP_n =
\{\cH_n,\wh\gG_{0n},\wh\gG_{1n}\}$ { for $S_n^*$} by setting
$R_n:=(\re(\sqrt{i-T_n}))^{1/2},$ \ $Q_n:= -\im(\sqrt{i - T_n})$
and
\be\label{6.22A}
\wh\gG_{0n} := R_n\gG_{0n}, \quad \wh\gG_{1n} := R_n^{-1}(\gG_{1n}
- Q_n\gG_{0n}), \qquad n\in \N.
\ee
Hence { $S^F_n = S_{0n}$} and the corresponding  Weyl
function $\wh M_n(\cdot)$ is given by
\be \la{8.7B}
\wh M_n(z) = \frac{i\sqrt{z-T_n}
  +\im(\sqrt{i - T_n})}{\re(\sqrt{i-T_n})},
\qquad z \in \C_+, \qquad n \in \N.
\ee
By  Theorem \ref{VI.3} the direct sum $\wh\gP :=
\bigoplus_{n=1}^\infty\wh\gP_n  = \{\cH,\wh\gG_{0},\wh\gG_{1}\}$
forms a boundary triplet for $A^*$ and
the corresponding Weyl function is
\be\la{8.7C}
 \wh M(z) = \bigoplus_{n\in\N}\wh M_n(z) \qquad z \in \C_+.
\ee
Combining \eqref{8.7C} with  \eqref{8.7B} we arrive at
\eqref{8.7}.

{
Combining Theorem \ref{VI.3} (cf. \eqref{6.8})  with
Corollary \ref{cor5.5} we get
\be\la{6.25z}
A_0 = A^*\upharpoonright\ker(\wh \gG_0) =
\bigoplus^\infty_{n=1}S^*_n\upharpoonright\ker(\wh \gG_{0n}) =
\bigoplus^\infty_{n=1}S_{0n} =  \bigoplus^\infty_{n=1}S^F_{n}=A^F
\ee
which proves the second assertion.}
\end{proof}

{  Next we generalize Theorem \ref{prop7.2A} to the case of
unbounded operator potentials.}
\bt\la{VI.9}
Let $T = T^* \ge 0,$\  $t_0 := \inf\gs(T)$, and  $A := A_{T,min}$,
cf. \eqref{8.10}-\eqref{8.11}. { Let also  $\wh \gP =
\{\cH,\wh \gG_0,\wh \gG_1\}$ be the boundary triplet for $A^*$
defined by Lemma  \ref{lem6.9} and $\wh M(\cdot)$ the
corresponding  Weyl function (cf. \eqref{8.7}).}
{ Then}

\item[\rm \;\;(i)] The Friedrichs extension $A^F$ coincides with
$A_0 := A^*\upharpoonright\ker(\wh \gG_0).$
Moreover, $A^F$ is absolutely continuous, $A^F =
(A^F)^{ac}$, and $\gs(A^F) = \gs_{ac}(A^F) = [t_0,\infty)$.

\item[\rm \;\;(ii)] The Krein extension $A^K$ is given by
$A_{B^K}:=A^*\upharpoonright \ker(\gG_1 - B^K\gG_0)$, where
\be\la{6.32x}
B^K = \frac{1}{\sqrt{2}\sqrt{T} + \sqrt{T + \sqrt{1
+ T^2}}} \frac{1}{\sqrt{T + \sqrt{1 + T^2}}}.
\ee
Moreover, $\ker(A^K) = \gotH_0 := \overline{\gotH'_0}$, $\gotH'_0
:= \{e^{-x\sqrt{T}}h: h \in \ran(T^{1/4})\},$ the restriction $A^K
\upharpoonright\dom(A^K) \cap \gotH^\perp_0$ is absolutely
continuous, and $A^K = 0_{\gotH_0} \oplus (A^K)^{ac}$.

\item[\rm \;\;(iii)]
{ The  extension $A^N$ is given by $A^N =
A^*\upharpoonright \ker(\wh \gG_1 - B^N\wh \gG_0)$ where $B^N :=
\sqrt{T + \sqrt{1 + T^2}}$.} Moreover, $A^N$ is absolutely
continuous, $A^N = (A^N)^{ac}$ and $\gs(A^N) = \gs_{ac}(A^N) =
[t_0,\infty)$.

\item[\rm \;\;(iv)]
The operators $A^F$, $(A^K)^{ac}$ and and $A^N$ are unitarily
equivalent.
\et
\begin{proof}
(i) { This statement is implied by combining  Theorem
\ref{prop7.2A} with \eqref{6.25z}}.

(ii) Using  the polar decomposition $i-\gl = \sqrt{1 +
\gl^2}e^{i\theta(\gl)}$ with $\theta(\gl) = \pi - \arctan(1/\gl)$,
$\gl \ge 0$ we get
\be \label{6.23A} \re(\sqrt{i - T}) = \int^\infty_0 \sqrt[4]{1 +
\gl^2}\cos(\theta(\gl)/2)dE_T(\gl). \ee
Setting  $\varphi(\gl) = \arctan(1/\gl)$, $\gl \ge 0$ and noting
that $\cos(\varphi(\gl)) = {\gl}{(1 + \gl^2})^{-1/2},$ we find
$\cos(\theta(\gl)/2) = {2}^{-1/2}(1 + \gl^2)^{-1/4}(\gl + \sqrt{1
+ \gl^2})^{-1/2}.$ Substituting this expression in \eqref{6.23A}
{ yields}
       \be\la{6.30a}
\re(\sqrt{i - T}) = {2}^{-1/2}(T + \sqrt{1 + T^2}\
)^{-1/2}.
       \ee
Similarly, taking into account $\sin(\theta(\gl)/2) =
\cos(\varphi(\gl)/2)$ and $\cos(\varphi(\gl)/2) = {2}^{-1/2}(1 +
\gl^2)^{-1/4}(\gl + \sqrt{1 + \gl^2})^{1/2}$, we get
        \be\la{6.35}
\im(\sqrt{i- T}) = \int^\infty_0 \sqrt[4]{1 +
  \gl^2}\cos(\varphi(\gl)/2)dE_T(\gl) = \frac{1}{\sqrt{2}}\sqrt{T + \sqrt{1 + T^2}}.
         \ee
 { It follows from  \eqref{8.7} with account of \eqref{6.30a} and \eqref{6.35} that}
 $M(0):=\slim_{x\to+0}M(-x) =: B^K$ { where $B^K$ is} defined
by \eqref{6.32x}. Therefore, by \cite[Proposition 5(iv)]{DM91} the
Krein extension $A^K$ is given by $A_{B^K}:=A^*\upharpoonright
\ker(\gG_1 - B^K\gG_0).$ { The second statement follows from Proposition \ref{prop6.8} and
Theorem \ref{prop7.2A}(ii).}

(iii)   It is easily seen that in the boundary triplet $\wh \gP_n
= \{\cH_n,\wh \gG_{0n},\wh \gG_{1n}\}$ { defined by
\eqref{6.22A}} the  extension $A^N_n$ admits a representation
$A^N_n = A_{B_n}$ where  $B_n := \sqrt{T_n + \sqrt{1 + T^2_n}}$,
$n \in \N$. By Proposition \ref{prop6.8},  $A^N =
\oplus^\infty_{n=1}A^N_n = A_{B^N}$ where $B^N =
\oplus^\infty_{n=1}B_n$. The remaining part of (iii) follows from
the representation $A^N = \bigoplus^\infty_{n=1}A^N_n$ and Theorem
\ref{prop7.2A}(iii).

(iv) { The assertion follows from  Theorem \ref{prop7.2A}(iv)
and \eqref{8.23}.}
\end{proof}

{  Next we  generalize Theorem \ref{VII.3} to the case of
unbounded $T\ge0.$}
\bt\la{VI.10a}
Let $T = T^* \ge 0$ and $t_1 := \inf\gs_{\rm ess}(T)$.
Further, let $A := A_{T,\rm min}$, cf. \eqref{8.10}-\eqref{8.11},
and $\wt A = \wt A^* \in Ext_A$. Then

\item[\rm\;\;(i)] the absolutely continuous part
$\wt A^{ac} E_{\wt A}([t_1,\infty))$ is unitarily
equivalent to the part $A^FE_{A^F}([t_1,\infty)) =
(A^F)^{ac}E_{A^F}([t_1,\infty))$;

\item[\rm\;\;(ii)] the Friedrichs extension $A^F$ is
$ac$-minimal { and} $\sigma_{ac}(A^F) \subseteq
\sigma_{ac}(\wt A);$

\item[\rm \;\;(iii)] the $ac$-part $\wt A^{ac}$ of $\wt A$
is unitarily equivalent to $A^F$ if either $(\wt A - i)^{-1} -
(A^F - i)^{-1} \in \gotS_\infty(\gotH)$ or $(\wt A - i)^{-1} -
(A^K - i)^{-1} \in \gotS_\infty(\gotH)$.
\et
\begin{proof}
By Corollary \ref{V.1A} it  suffices to assume that the
extension $\wt A = \wt A^*$  is disjoint with $A_0,$ that is,
it admits a representation  $\wt A = A_B$ with $B\in \cC(\cH).$

(i) { We} consider the boundary triplet $\wh \gP = \{\cH,\wh
\gG_0,\wh \gG_1\}$ defined { in Lemma \ref{lem6.9}}. By
Proposition \ref{prop3.1} the generalized Weyl function
corresponding to the generalized boundary triplet $\wh \gP_B$
{ is defined by} $\wh M_B(z) = (B - \wh M(z))^{-1}$, $z \in
\C_+$, where $\wh M(z)$ is given by \eqref{8.7}. Clearly,
\be\la{6.32AB}
\im(\wh M_B(z)) = \wh M_B(z)^*\im(\wh M(z))\wh M_B(z), \qquad z
\in \C_+.
\ee
{ It follows from \eqref{8.7} that  $\bigl(\re(\sqrt{i -
T})\bigr)^{-1}\ge \sqrt 2.$ Therefore \eqref{6.30a} yields}
\be\la{6.32a}
 \im(\wh M(z)) \ge \sqrt{2}\im(M(z)), \quad z \in
\C_+, \quad \text{where} \quad M(z) = i\sqrt{z - T},
\ee
cf.  \eqref{8.2B}. Following the line of reasoning of the proof of
Theorem \ref{VII.3}(i) we obtain from \eqref{6.32a} that $d_{\wh
M^D}(t)=\infty$ for a.e. $t\in [t_1, \infty),$ where  $D=D^*\in
\gotS_2(\cH)$ and $\ker D=\{0\}.$ Moreover, it follows from
\eqref{6.32AB} that $d_{\wh M_B^D}(t)= d_{\wh M^D}(t)=\infty$ for
a.e. $t\in [t_1, \infty).$ One completes the proof by applying
Theorem \ref{IV.10}.

(ii) To prove (ii) we use  again  estimates \eqref{6.32a} and
follow the proof of Theorem \ref{VII.3}(ii). We complete the proof
by applying Theorem \ref{IV.10}.

(iii) The Weyl function $\wh M(\cdot)$
is given by  \eqref{8.7}. { Taking into account
\eqref{8.7C} one obtains  $\sup_{n\in\N}\gotm^+_n < \infty$,
where $\gotm^+_n$ is the maximal normal invariant function defined by
\eqref{4.12A}. Indeed, this follows from
\eqref{7.3a} because this estimate shows that $\gotm^+_n$ does not
depend on $n\in \N$. Applying Theorem \ref{V.5} and Remark \ref{V.7}
we complete the proof.}

{  To prove the second statement we note that the operator
$B^K$ defined by \eqref{6.32x} is  bounded}. { Therefore, by
Proposition  \ref{prop3.1} } a triplet $\wh \gP_{B^K} := \{\cH,\wh
\gG^{B^K}_0,\wh \gG^{B^K}_1\}$ with  $\wh \gG^{B^K}_1 := \wh
\gP_0$ and $\wh \gG^{B^K}_0 := B^K\wh \gG_0 - \wh \gG_1$, is a
boundary triplet for $A^*$ such that $A^K :=
A^*\upharpoonright\ker(\wh \gG^{B^K}_0)$. The corresponding Weyl
function is
\bed
{ \wh M_{B^K}(z) = (B^K - \wh M(z))^{-1}}, \quad z \in \C_+.
\eed
Inserting expression \eqref{6.32x} into this formula we get
\bed { \wh M_{B^K}(z)} = -\frac{1}{\sqrt{2}}\frac{1}{\sqrt{T} +
i\sqrt{z-T}}\frac{1}{\sqrt{T + \sqrt{1 + T^2}}}=
\frac{1}{z\sqrt{2}}\frac{\sqrt{T} - i\sqrt{z - T}}{\sqrt{T +
\sqrt{1 + T^2}}}.
 \eed
It follows that the limit ${ \wh M_{B^K}(t+i0)}$ exists for  any
$t\in \R \setminus \{0\}$ and
\bed
{ \wh M_{B^K}(t)} := \slim_{y\to*0}M_{B^K}(t+iy) =
-\frac{1}{t\sqrt{2}}\frac{\sqrt{T} - i\sqrt{t - T}}{\sqrt{T + \sqrt{1 + T^2}}}.
\eed
Clearly, ${ \wh M_{B^K}(t)}\in [\cH]$ for any $t\in \R \setminus \{0\}.$
{ By  Theorem \ref{V.5} the  $ac$-parts of $\wt A$} and $A^K$
are unitarily equivalent whenever $(\wt A - i)^{-1} - (A^K -
i)^{-1} \in \gotS_\infty(\gotH).$ This completes the {
proof.}
\end{proof}

{ Finally, we generalize Corollary \ref{VI.5Z} to unbounded
  operator potentials.}
  \bc\label{cor6.12}
{ Assume conditions of Theorem \ref{VI.10a}. If $\dim(\cH) =
\infty$ and $t_0:=\inf\gs(T) = \inf\gs_{\ess}(T)=:t_1$, then the
Friedrichs  extension $A^F$ and the Krein extension $A^K$ are
strictly $ac$-minimal.}
   \ec

\subsection{Application}

In this subsection we apply previous results of this section to
Schroedinger operator in the half-plane. To this end we denote by
$L = L_{\rm min}$ the minimal elliptic operator associated in
$L^2(\Omega), \ \Omega:=\R_+\times \R^n,$ with the differential
expression
        \be\label{7.40}
\mathcal L := -\Delta  + q(x) := -\bigl(\frac{\partial^2}{\partial
t^2} +\sum_{j=1}^n\frac{\partial^2}{\partial x_j^2}\bigr) + q(x),
\quad (t,x)\in \Omega,
       \ee
where $q= \bar q \in L^\infty(\R),$ \ $x := (x_1, \ldots, x_n)$
and $n\ge 1.$

Recall that $L_{\min}$ is the closure of $\mathcal L$,
defined on $C^\infty_0(\Omega).$
It is known that $\dom(L_{\min}) = H^{2}_0(\Omega).$ Clearly, $L$
is symmetric. The maximal operator $L_{\max}$ is then defined by
$L_{\max} = (L_{\min})^*.$ We emphasize that $H^{2}(\Omega)\subset
\dom(A_{\max})\subset H^{2}_{loc}(\Omega)$, while
$\dom(L_{\max})\not = H^{2}(\Omega)$.

Next we define the trace mappings  $\gamma_j\colon
C^{\infty}({\overline{\Omega}})\to C^{\infty}(\partial\Omega), \
j\in \{0,1\},$ by setting  $\gamma_0 u := u \upharpoonright
{\partial\Omega}$  and  $\gamma_1 u := \gamma_0
({\partial u}/{\partial n})$
where $n$ stands for the interior normal to $\partial\Omega$.
Denote  by $D_{\mathcal L}(\Omega)$ the domain $\dom(L_{\max})$
equipped with the graph norm. It is known (see \cite{LioMag72,
Gru08}) that $\gamma_j$ can be extended by continuity to the
operators mapping $D_{\mathcal L}(\Omega)$ continuously onto
$H^{-j-1/2}(\partial\Omega), \ j\in \{0,1\}.$

 Let us define the
following  extensions of $L_{\min}$ (realizations of $\mathcal
L$):
\begin{enumerate}

\item[(i)] $L^Df := \mathcal L[f],$ \  $f \in \dom (L^D) :=
\{\varphi  \in
H^2(\R_+\times \R):\  \gamma_0 \varphi = 0\}$;

\item[(ii)] $L^Nf := \mathcal L [f],$\  $f \in \dom (L^N) :=
\{\varphi \in
H^2(\R_+\times \R): \;
\gamma_1 \varphi=0\}$;

\item[(iii)] $L^Kf := \mathcal L[f],$\  $f \in \dom (L^K) :=
\{\varphi \in \dom(A_{\max}): \gamma_1 \varphi + \Lambda \gamma_0
\varphi = 0\}, \quad \text{where} \quad   \Lambda := \sqrt{
-\Delta_x + q(\cdot)}:\   H^{-1/2}(\partial\Omega)\to
H^{-3/2}(\partial\Omega).$
\end{enumerate}

To treat the operator $L_{\min}$ as the Sturm-Liouville operator
with (unbounded) operator potential we denote by $T$  the (closed)
minimal operator associated in $\cH := L^2({\mathbb R}^n)$ with
the Schr\"odinger expression
\begin{equation}\label{7.41}
-\Delta_x  + q(x) := - \sum_{j=1}^n\frac{\partial^2}{\partial
x_j^2} + q(x).
\end{equation}
Since $q= \bar q \in L^\infty(\R),$  the operator $T$ is
self-adjoint and semibounded. Moreover, $T\ge 0$ if $q(\cdot)\ge0.$
Let $A := A_{T,\rm min}$ be the minimal operator
associated with \eqref{8.10} where $T$ is defined by \eqref{7.41}.
\bp\label{prop6.12}
Let $q(\cdot) \in L^\infty(\R),$ \ $q(\cdot) \ge 0$,  and let $T$
be the minimal (self-adjoint) operator associated in $L^2(\R)$ with
the Schr\"odinger expression  \eqref{7.41}.  Let also $t_0:=\inf
\sigma(T)$ and $t_1:=\inf \sigma_{\ess}(T).$
Then:

\item[\rm \;\;(i)] the  operator $A_{T,\rm min}$ coincides with
$L=L_{\rm min}$ and   $\dom(A_{T,\rm min})= H^2_0(\Omega);$

\item[\rm\;\;(ii)] the Friedrichs extension $A^F$ coincides with
$L^D$, hence $L^D$ is absolutely continuous,
$\sigma(L^D)=\sigma_{ac}(L^D) = [t_0,\infty)$ and
$N_{L^D}(t)=\infty$ for a.e. $t\in [t_0,\infty);$

\item[\rm\;\;(iii)] the Krein extension $A^K$ coincides with
$L^K$, in particular, $L^K$ admits the decomposition $L^K =
0_{\cH_0}\oplus (L^K)^{ac}$, $\cH_0 := \ker(L^K)$,  and
$\sigma_{ac}(L^K)=[t_0,\infty)$;

\item[\rm\;\;(iv)] the  extension $A^N$ defined by  \eqref{8.23},
coincides with $L^N$, in particular, $L^N$ is absolutely
continuous and $ \sigma(A^N)  = \sigma_{ac}(A^N)=[t_0, \infty);$

\item[\rm\;\;(v)] the operators $L^D$,\  $L^N,$\  and $(L^K)$ are
  $ac$-minimal, in particular, $L^D$,\  $L^N,$\  and
$(L^K)^{ac}$ are pairwise  unitarily equivalent. If, in addition,
$t_0=t_1,$ then  the operators $L^D$,\  $L^N,$\  and $(L^K)$ are
strictly $ac$-minimal;

\item[\rm\;\;(vi)] if ${\wt L}$
is a  self-adjoint extension of $L$ and  $({\wt
L}-i)^{-1}-(L^D-i)^{-1}\in {\mathfrak S}_{\infty},$ then ${\wt
L}^{ac}$ and $L^D$ are unitarily equivalent. If ${\wt L}$
satisfies $({\wt L}-i)^{-1}-(L^K-i)^{-1}\in  {\mathfrak
S}_{\infty}$,  then ${\wt L}^{ac}$ and $L^D$ are unitarily
equivalent.
\end{proposition}
\begin{proof}
(i) We introduce the set
\bed \cD'_\infty := \left\{\sum_{1\le j \le k} \phi_j(x)h_j(\xi):
\;\phi_j \in C^\infty_0(\R_+), \  h_j \in C^\infty_0(\R),\  k \in
\N\right\} \eed
We note that $\cD'_\infty \subseteq \cD'_0$ and $\cD'_\infty
\subseteq C^\infty_0(\R_+\times\R)$. Moreover, $A_{T,\rm min}
\upharpoonright\cD'_\infty = L\upharpoonright\cD'_\infty$.  Since
$\cD'_\infty$ is a core for both $A_{T,\rm min}$ and $L_{\rm
min},$ we have $A_{T,\rm min} = L_{\rm min}$.

It is clear (after applying the Fourier transform) that $\dom(T)
=\dom(\Delta_x) = H^2(\R^n)$. Therefore, by Lemma \ref{VI.2.7},
$\dom(A_{T,\rm min})= W^{2,2}_{0,T}=H^2_0(\Omega).$

\noindent (ii) Since $\dom(T) = H^2(\R^n),$ we have
$W^{2,2}_{T}(\R_+;\cH) = H^2(\Omega).$ Therefore by Proposition
\ref{prop6.8} $A^F=L^D.$ The second assertion follows from Theorem
\ref{VI.9}(i).

\noindent (iii)\  It is proved in \cite[Section 9.7]{DM91} that
$L^K$ is the Krein extension of $L_{\min}.$  The rest of the
statements is implied by Theorem \ref{VI.9}(ii).

\noindent (iv) The equality  $A^N = L^N$ is immediate from
Proposition \ref{prop6.8}(iii).  The second statement follows from
Theorem \ref{VI.9}(iii).

\noindent (v) By Theorem \ref{VI.10a}(ii) the extension
$L^D(=A^F)$ is $ac$-minimal. By Theorem \ref{VI.9}(iv) $L^D$,\
$L^N (=A^N),$\  and $(L^K)^{ac} (= (A^K)^{ac})$ are pairwise
unitarily equivalent, hence $L^N,$\  and $L^K$ are $ac$-minimal
too. The last statement is immediate from Corollary \ref{cor6.12}.

\noindent (vi) The statement is immediate  from
(ii), (iii) and  Theorem \ref{VI.10a}(iii).
\end{proof}
\begin{remark}
{\em
Let  $T$ be the (closed) minimal non-negative operator associated
in $\cH := L^2({\mathbb R}^n)$ with general uniformly elliptic
operator
\begin{equation}
- \sum_{j,k=1}^n \frac{\partial}{\partial x_j} a_{jk}(x)
\frac{\partial}{\partial x_j} + q(x), \quad a_{jk} \in
C^{1}({\overline\Omega}),\ \ q \in C({\overline\Omega})\cap
L^\infty({\Omega}),
\end{equation}
where the coefficients $a_{jk}(\cdot)$ are bounded with their
$C^1$-derivatives, $q\ge 0.$ If the coefficients have some
additional "good" properties, then $\dom(T)=H^2({\R}^n)$
algebraically and topologically.  By Lemma \ref{VI.2.7},
{ $\dom(A_{T,\min})= W^{2,2}_{0,T}(\R_+,\cH) =
H^{2,2}_0(\Omega)$ } and Proposition  \ref{prop6.12} remains valid with
$T$ in place of the Schr\"odinger operator \eqref{7.41}.

Note also that the Dirichlet and the Neumann realizations $L^D$
and $L^N$ are always self-adjoint ((cf. \cite[Theorem
2.8.1]{LioMag72}, \cite{Gru08})).
}
\end{remark}
\begin{corollary}\label{cor6.13}
Assume the conditions of Proposition  \ref{prop6.12}. If, in
addition,
\begin{equation}\label{6.33}
\lim_{|x|\to\infty}\int_{|x-y|\le 1}|q(y)|dy = 0,
\end{equation}
then the  operators $L^D$,\  $L^N,$\  and $(L^K)$ are strictly
$ac$-minimal,
\bed
\gs(L^D) = \gs_{ac}(L^D) = \gs_{ac}(L^K) = \gs(L^N) =\gs_{ac}(L^N)
= [0,\infty),
\eed
and $N_{E_{L^D}}(t) = N_{E_{L^N}}(t)= N_{E^{ac}_{L^K}}(t)=\infty$ for a.e.
$t\in [0,\infty).$
\end{corollary}
\begin{proof}
By \cite[Section 60]{Gla66}  condition  \eqref{6.33} yields the
equality $\sigma_c(T)= \R_+,$ in particular $0\in \sigma_c(T)$ and
$t_1=0.$ Since $q\ge 0,$\  we have $0\le t_0 \le t_1=0,$ that is
$t_0 = t_1=0.$ It remains to apply Proposition \ref{prop6.12}
(ii)-(v).
\end{proof}
\begin{remark}
{\em
Condition  \eqref{6.33} is satisfied whenever
$\lim_{|x|\to\infty}q(x) = 0.$  Thus,  in this case the
conclusions  of Corollary  \ref{cor6.13}  are valid.
However, it might happen that  $\gs(L^F) = \gs_{ac}(L^K) =
\gs(L^N) = [t_0,\infty)$, $t_0 > 0$ though  $\inf q(x) = 0$.
}
\end{remark}

\begin{appendix}

\section{Appendix: Absolutely continuous closure}

Let us recall some basic facts of the $ac$-closure of a
Borel set of $\R$ introduced in\cite{BMN02}, see also \cite{GMZ08}.  
\bd[\cite{BMN02}]
Let  $\delta\in \mathcal \cB(\R)$. The set $\cl_{ac}(\delta)$
defined by
\bed
\cl_{ac}(\delta) := \{x \in \R: \; |(x-\varepsilon, x +
\varepsilon) \cap \delta| > 0 \;\; \forall \; \varepsilon
> 0\}.
\eed
is called  the absolutely continuous closure  of the { Borel
set $\delta\in \mathcal B(\R)$.}
\ed

Obviously, two Borel { sets $\delta_1, \delta_2\in  \mathcal
B(\R)$ have the same $ac$-closure if their} symmetric difference
$\delta_1 \bigtriangleup \delta_2$ has Lebesgue measure zero.
Moreover, the set $\cl_{ac}(\delta)$ is always closed and
$\cl_{ac}(\delta) \subseteq \overline{\delta}$. In particular, if
we have two measurable non-negative functions $\xi_1$ and $\xi_2$
which differ only on a set of Lebesgue measure zero, then
$\cl_{ac}(\supp(\xi_1)) = \cl_{ac}(\supp(\xi_2))$.
\bl\la{III.6}
If $\delta\in \mathcal \cB(\R)$,  then $|\delta \setminus
\cl_{ac}(\delta)| = 0$.
\el
\begin{proof}
Since $\cl_{ac}(\gd)$ is closed the set $\gD := \R \setminus
\cl_{ac}(\gd)$ is open. The open set $\gD$ is decomposed as
$\gD = \bigcup^L_{l=1}\gD_l, \ 1 \le L \le \infty$,
where $\gD_l = (a_l,b_l)$ are disjoint open intervals. We set
$\gD_l = \gd \cap \gD_l$, $l =1,2,\dots,L$. Obviously,
\bed
\gd \setminus \cl_{ac}(\gd) = \gd \cap \gD = \bigcup^L_{l=1}\gD_l.
\eed
We note that $\gD_l \cap \cl_{ac}(\gd) = \emptyset$, $l
=1,2,\ldots,L$. Hence for each $t \in
\gD_l$ there is a sufficiently small neighborhood $\cO_t$ such
that $|\cO_t \cap \gd| = 0$. If $\eta$ is sufficiently small, then
$[a_l+\eta,a_l-\eta] \subseteq (a_l,b_l)$ and
$\{\cO_t\}_{t\in \gD_l}$ performs a covering of
$[a_l+\eta,a_l-\eta]$. Since $[a_l+\eta,a_l-\eta]$ is compact we can
chosen a finite covering $\{\cO_{t_m}\}^M_{m=1}$ of
$[a_l+\eta,a_l-\eta]$. By $[a_l+\eta,a_l-\eta] \subseteq
\bigcup^M_{m=1}\cO_{t_m}$ we find $|[a_l+\eta,a_l-\eta] \cap \gd| =
0$ for each sufficiently small $\eta > 0$. Using that we get
\bead
\lefteqn{
|(a_l,b_l) \cap \gd| = |(a_l,a_l+\eta) \cap \gd| + |(b_l-\eta,b_l) \cap \gd| = }\\
& &
|(a_l,a_l+\eta) \cap \gd| + |(b_l-\eta,b_l) \cap \gd| \le 2\eta
\eead
for sufficiently small $\eta >0$. Hence $|\gD_l| = |(a_l,b_l) \cap
\gd| = 0$ which yields that $|\gd \setminus \cl_{ac}(\gd)| = 0$.
\end{proof}
\bl\la{III.7}
If $\{\gd_k\}_{k\in\N}$, $\gd_k \subseteq \R$, is a sequence of
Borel subsets, then
\be\la{3.16}
\cl_{ac}(\gd) = \overline{\bigcup_{k\in\N}\cl_{ac}(\gd_k)},
\quad \gd = \bigcup_{k\in\N}\gd_k.
\ee
\el
\begin{proof}
We set $\wh \gd_k = \gd_k \cap \cl_{ac}(\gd_k)$ and $\gD_k := \gd_k
\setminus \cl_{ac}(\gd_k)$. We have
$\gd = \wh \gd \cup \gD,$  $\mbox{where}\quad \wh \gd :=
\bigcup_{k\in\N}\wh \gd_k$  $ \mbox{and} \quad \gD :=
\bigcup_{k\in\N}\gD_k.$
By Lemma \ref{III.6}, $|\gD_k| = 0$, $k \in \N$, which yields
$|\gD| = 0$. Hence $\cl_{ac}(\gd) = \cl_{ac}(\wh \gd)$.
Similarly one gets $\cl_{ac}(\gd_k) = \cl_{ac}(\wh \gd_k)$, $k \in
\N$. Notice that $\wh \gd_k \subseteq \cl_{ac}(\wh \gd_k)$, $k \in
\N$. We have
\bed
\cl_{ac}(\wh \gd) \supseteq \bigcup_{k\in\N} \cl_{ac}(\wh \gd_k) \supseteq
\bigcup_{k\in\N}\wh \gd_k = \wh \gd.
\eed
Hence
\bed
\cl_{ac}(\wh \gd) =\overline{\cl_{ac}(\wh \gd)} \supseteq
\overline{\bigcup_{k\in\N} \cl_{ac}(\wh \gd_k)}
\supseteq \overline{\wh \gd} \supseteq \cl_{ac}(\wh \gd)
\eed
which yields $\cl_{ac}(\wh \gd) = \overline{\bigcup_{k\in\N}
\cl_{ac}(\wh \gd_k)}$. Since $\cl_{ac}(\wh \gd) = \cl_{ac}(\gd)$ and
$\cl_{ac}(\wh \gd_k) = \cl_{ac}(\gd_k)$, $k \in \N,$ we prove
\eqref{3.16}.
\end{proof}

\end{appendix}

\section*{Acknowledgment}
The first author thanks  Weierstrass Institute of Applied Analysis
and Stochastics in Berlin for financial support and hospitality.


\def\cprime{$'$} \def\cprime{$'$} \def\cprime{$'$} \def\cprime{$'$}
  \def\lfhook#1{\setbox0=\hbox{#1}{\ooalign{\hidewidth
  \lower1.5ex\hbox{'}\hidewidth\crcr\unhbox0}}} \def\cprime{$'$}
  \def\cprime{$'$} \def\cprime{$'$} \def\cprime{$'$} \def\cprime{$'$}
  \def\cprime{$'$} \def\lfhook#1{\setbox0=\hbox{#1}{\ooalign{\hidewidth
  \lower1.5ex\hbox{'}\hidewidth\crcr\unhbox0}}}

\end{document}